\documentclass[11pt,envcountsect]{llncs}
\usepackage[margin=1in]{geometry}
\usepackage{amsmath,amssymb,times}
\usepackage[final]{graphicx}




\def\ceil#1{{\left\lceil#1\right\rceil}}

\def\tceil#1{{\lceil#1\rceil}}

\def\union{\cup}
\def\cut{\cap}
\def\eps{\varepsilon}

\newcommand{\reals}{\mathrm{I\!R}}

\def\cost{\mathrm{cost}}
\def\l{\chi}
\def\r{\chi'}

\def\med{\mathop{\mathrm{med}}}
\def\I{\mathcal{I}}
\def\Ithr{\mathcal{I}_3}
\def\IthrS{\mathcal{I}_3^-}
\def\IthrL{\mathcal{I}_3^+}
\def\sep{\!\!:\!\!}

\newtheorem{myclaim}[example]{Claim}

\spnewtheorem*{proofsketch}{Proof sketch}{\itshape}{\rmfamily}

\def\insertfig#1#2#3{%
\begin{figure}[t]
\centerline{\includegraphics[width=#1]{#2}}
\caption{#3}
\end{figure}}


\title{On the Power of Deterministic Mechanisms for Facility Location Games%
\thanks{Research partially supported by an NTUA Basic Research Grant (PEBE 2009).}}

\author{Dimitris Fotakis\inst{1} \and Christos Tzamos\inst{2}}

\institute{%
School of Electrical and Computer Engineering,\\
National Technical University of Athens, 157 80 Athens, Greece.\\
Email: {\tt fotakis@cs.ntua.gr}
\and
Computer Science and Artificial Intelligence Laboratory,\\
Massachusetts Institute of Technology, Cambridge, MA 02139.\\
Email: {\tt tzamos@mit.edu}}

\begin{document}

\maketitle

\begin{abstract}
We consider $K$-Facility Location games, where $n$ strategic agents
report their locations in a metric space, and a mechanism maps them to $K$
facilities. The agents seek to minimize their connection cost, namely the
distance of their true location to the nearest facility, and may misreport
their location. We are interested in deterministic mechanisms that are
strategyproof, i.e., ensure that no agent can benefit from misreporting her
location, do not resort to monetary transfers, and achieve a bounded
approximation ratio to the total connection cost of the agents (or to the $L_p$
norm of the connection costs, for some $p \in [1, \infty)$ or for $p = \infty$).

Our main result is an elegant characterization of deterministic strategyproof
mechanisms with a bounded approximation ratio for 2-Facility Location on the
line. In particular, we show that for instances with $n \geq 5$ agents, any such
mechanism either admits a unique dictator, or always places the facilities at
the leftmost and the rightmost location of the instance. As a corollary, we obtain that the best approximation ratio achievable by deterministic strategyproof mechanisms for the problem of locating 2 facilities on the line to minimize the total connection cost is precisely $n-2$. Another rather surprising consequence is that the {\sc Two-Extremes} mechanism of (Procaccia and Tennenholtz, EC~2009) is the only deterministic anonymous strategyproof mechanism with a bounded approximation ratio for 2-Facility Location on the line.

The proof of the characterization employs several new ideas and technical tools, which provide new insights into the behavior of deterministic strategyproof mechanisms for $K$-Facility Location games, and may be of independent interest. Employing one of these tools, we show that for every $K \geq 3$, there do not exist any deterministic anonymous strategyproof mechanisms with a bounded approximation ratio for $K$-Facility Location on the line, even for simple instances with $K+1$ agents. Moreover, building on the characterization for the line, we show that there do not exist any deterministic strategyproof mechanisms with a bounded approximation ratio for 2-Facility Location on more general metric spaces, which is true even for simple instances with 3 agents located in a star.
\end{abstract}

\thispagestyle{empty}%
\setcounter{page}{0}%
\newpage
\pagestyle{plain}
\pagenumbering{arabic}

\section{Introduction}
\label{s:intro}

We consider \emph{$K$-Facility Location games}, where $K$ facilities are placed
in a continuous metric space based on the preferences of $n$ strategic agents. Such problems are motivated by natural scenarios in Social Choice, where the government plans to build a fixed number of public facilities in an area (see e.g. \cite{Miy01}). The choice of the locations is based on the preferences of local people, or \emph{agents}. So each agent reports her ideal location, and the government applies a \emph{mechanism} mapping the agents' preferences to $K$ facility locations.
The government's objective is to minimize the \emph{social cost}, namely the total distance of the agents' locations to the nearest facility. On the other hand, the agents seek to minimize their \emph{individual cost}, namely the distance of their location to the nearest facility. In fact, an agent may even misreport her ideal location in an attempt of manipulating the mechanism. Therefore, the mechanism should be \emph{strategyproof}, i.e., should ensure that no agent can benefit from misreporting her location. 
%
%
Moreover, to compute a socially desirable outcome, the mechanism should achieve a reasonable approximation to the optimal social cost.

\smallskip\noindent{\bf Previous Work.}
In addition to strategyproofness, which is an essential property of any mechanism, Social Choice suggests a few additional efficiency-related properties, e.g., onto, non-dictatorship, and Pareto-efficiency, that usually accompany strategyproofness, and ensure that the mechanism's outcome is socially desirable (or at least tolerable). There are several examples of beautiful characterization theorems which state that for a particular domain, the class of strategyproof mechanisms with some efficiency-related properties coincides with a rather restricted class of mechanisms (see e.g., \cite{Bar01}). 
%
%
A notable example of a problem admitting a rich class of strategyproof mechanisms is that of locating a single facility on the real line, where the agents' preferences are single-peaked. The classical characterization of Moulin \cite{Moul80} shows that the class of strategyproof mechanisms for 1-Facility Location on the line coincides with the class of generalized median voter schemes. Schummer and Vohra \cite{SV02} proved that this characterization extends to tree metrics, while for non-tree metrics, any onto strategyproof mechanism is a dictatorship. Recently, Dokow et al. \cite{DFMN12} obtained characterizations, similar in spirit to those in \cite{Moul80,DFMN12}, for the class of onto strategyproof mechanisms for 1-Facility Location on the discrete line and on the discrete circle.

Adopting an algorithmic viewpoint, Procaccia and Tennenholtz \cite{PT09} introduced the framework of \emph{approximate mechanism design without money}. The idea is to consider game-theoretic versions of optimization problems, where a social objective function summarizes (or even strengthens) the efficiency-related properties. Any reasonable approximation to the optimal solution can be regarded as a socially desirable outcome, and we seek to determine the best approximation ratio achievable by strategyproof mechanisms.
For example, the results of \cite{Moul80,SV02} imply that 1-Facility Location in tree metrics can be solved optimally by a strategyproof mechanism.
%
%
On the other hand, the negative result of \cite{SV02} implies that the best approximation ratio achievable by deterministic mechanisms for 1-Facility Location in general metrics is $n-1$.

Procaccia and Tennenholtz \cite{PT09} considered several location problems on the real line, and obtained upper and lower bounds on the approximation ratio achievable by strategyproof mechanisms. For 2-Facility Location, they suggested the {\sc Two-Extremes} mechanism, that places the facilities at the leftmost and the rightmost location, and achieves an approximation ratio of $n-2$. On the negative side, they proved a lower bound of $3/2$ on the approximation ratio of any deterministic mechanism, and conjectured that the lower bound for deterministic mechanisms is $\Omega(n)$.
Subsequently, Lu et al. \cite{LWZ09} strengthened the lower bound for deterministic mechanisms to 2, established a lower bound of 1.045 for randomized mechanisms, and presented a simple randomized $n/2$-approximation mechanism.
Shortly afterwards, Lu et al. \cite{LSWZ10} significantly improved the lower bound for deterministic mechanisms to $(n-1)/2$. On the positive side, they presented a deterministic $(n-1)$-approximation mechanism for 2-Facility Location on the circle, and proved that a natural randomized mechanism, the so-called {\sc Proportional} mechanism, is strategyproof and achieves an approximation ratio of 4 for 2-Facility Location in any metric space. Lu et al. \cite{LSWZ10} observed that although {\sc Proportional} is not strategyproof for more than two facilities, its combination with {\sc Two-Extremes} results in a randomized $(n-1)$-approximation mechanism for 3-Facility Location on the line.

\smallskip\noindent{\bf Motivation and Contribution.}
Facility Location games are among the central problems in the research agenda of mechanism design without money, and have received considerable attention. Our work is motivated by the apparent difficulty of obtaining any strong(er) positive results on the approximability of $K$-Facility Location by deterministic mechanisms.
In fact, among the main open problems of \cite{LSWZ10} were (i) to determine the best approximation ratio achievable by deterministic mechanisms for 2-Facility Location on the line, (ii) to investigate the existence of deterministic mechanisms with a bounded approximation ratio for $K$-Facility Location with $K \geq 3$, and (iii) to investigate the existence of deterministic mechanisms with a bounded approximation ratio for $2$-Facility Location in metric spaces other than the line and the circle. In this work, we resolve the first question, and obtain strong negative results for the second and the third.

Attacking these questions requires a complete understanding of the behavior of deterministic strategyproof mechanisms for $K$-Facility Location, similar to that offered by characterizations in Social Choice. Hence, we suggest an approach in the intersection of Social Choice and mechanism design without money. More specifically, following the approach of mechanism design without money, we focus on what we call \emph{nice mechanisms}, namely deterministic strategyproof mechanisms with an approximation ratio bounded by some function of $n$ and $K$, and following the approach of Social Choice, we embark on a characterization of nice mechanisms, instead of seeking stronger lower bounds based on carefully selected instances, as e.g., in \cite{PT09,LWZ09,LSWZ10}.
Although for simplicity and clarity, we focus on the objective of social cost, we highlight that the class of nice mechanisms is very general and essentially independent of the choice of the social objective function. E.g., any mechanism with a bounded approximation ratio for the objective of minimizing the $L_p$ norm of the agents' distances to the nearest facility is nice, for any $p \geq 1$ or for $p = \infty$, since it also achieves a bounded approximation for the objective of social cost. The same holds for the objectives of Sum-$K$-Radii and Sum-$K$-Diameters, and to the best of our knowledge, for any other natural $K$-location objective. Thus, to a very large extent, a characterization of nice mechanisms retains the generality of characterizations in Social Choice, since it captures all, but some socially intolerable, strategyproof mechanisms.

On the other hand, focusing on nice mechanisms facilitates the characterization, since it excludes several socially intolerable strategyproof mechanisms, such as mechanisms with two dictators.
Nevertheless, any characterization of nice mechanisms, even for two facilities, remains an intriguing task, because the preferences are not single-peaked anymore, there is no apparent notion of monotonicity (as e.g., in \cite{Kouts11}), and the combinatorial structure of the problem is significantly more complicated than that for a single facility.

Our main result is an elegant characterization of nice mechanisms for 2-Facility Location on the line. We show that any nice mechanism for $n \geq 5$ agents either admits a unique dictator, or always places the facilities at the two extremes (Theorem~\ref{thm:2fac-gen}). A corollary is that the best approximation ratio achievable by deterministic mechanisms for 2-Facility Location on the line is $n-2$. Another rather surprising consequence is that {\sc Two-Extremes} is the only anonymous nice mechanism for 2-Facility Location on the line.

The proof of Theorem~\ref{thm:2fac-gen} proceeds by establishing the characterization at three different levels of generality: 3-agent, 3-location, and general instances. Along the way, we are developing stronger and stronger technical tools that fully describe the behavior of nice mechanisms.
To exploit locality, we first focus on well-separated instances with 3 agents, where an isolated agent is served by one facility, and two nearby agents are served by the other facility. Interestingly, we identify two large classes of well-separated instances where any nice mechanism should keep allocating the latter facility to the same agent (Propositions~\ref{prop:right_cover}~and~\ref{prop:left_cover}).
Building on this, we show that the location of the facility serving the nearby agents is determined by a generalized median voter scheme, as in \cite{Moul80}, but with a threshold depending on the location and the identity of the isolated agent, and then extend this property to general instances with 3 agents (see Fig.~\ref{fig:allocation}). The next key step is to show that the threshold of each isolated agent can only take two extreme values: one corresponding to the existence of a partial dictator, and one corresponding to allocating the facility to the furthest agent (Lemma~\ref{l:i-cases}). Then, considering all possible cases for the agents' thresholds, we show that any nice mechanism for 3 agents either places the facilities at the two extremes, or admits a partial dictator, namely an agent allocated a facility for all, but possibly one, of agent permutations (Theorem~\ref{thm:3agents}).

Next, we employ the notion of partial group strategyproofness \cite[Section~3]{LSWZ10}, and a new technical tool for moving agents between different coalitions without affecting the mechanism's outcome (Lemma~\ref{l:any-partition}), and show that any nice mechanism applied to 3-location instances with $n \geq 5$ agents either admits a (full) dictator, or places the facilities at the two extremes (Theorem~\ref{thm:3locations}). Rather surprisingly, this implies that nice mechanisms for 3 agents are somewhat less restricted than nice mechanisms for $n \geq 5$ agents. Finally, in Section~\ref{s:2fac-gen}, we employ induction on the number of different locations, and conclude the proof of Theorem~\ref{thm:2fac-gen}.

In addition to extending the ideas of \cite{Moul80} to 2-Facility Location games and to exploiting the notions of image sets and partial group strategyproofness, in the proof of Theorem~\ref{thm:2fac-gen}, we introduce a few new ideas and technical tools, which provide new insights into the behavior of nice mechanisms for $K$-Facility Location games, and may be of independent interest. Among them, we may single out the notion of well-separated instances and the idea of reducing $K$-Facility Location in well-separated instances to a single facility game between the two nearby agents, the ideas used to extend the facility allocation from well-separated instances to general instances, the use of thresholds to eliminate non-nice mechanisms, and the technical tool of moving agents between different coalitions without affecting the outcome.

For $K \geq 3$ facilities, we show that there do not exist any deterministic anonymous strategyproof mechanisms with a bounded approximation ratio, which holds even for well-separated instances with $K+1$ agents on the line (Theorem~\ref{thm:3facilities}). For 2-Facility Location in metric spaces more general than the line and the circle, we show that there do not exist any deterministic strategyproof mechanisms with a bounded approximation ratio, which holds even for simple instances with $3$ agents located in a star (Theorem~\ref{thm:tree-unbounded}). Both results are based on the technical tools for well-separated instances developed in the proof of Theorem~\ref{thm:2fac-gen}, thus indicating  the generality and the potential applicability of our techniques. At the conceptual level, the proof approach of Theorem~\ref{thm:2fac-gen} and the proofs of Theorem~\ref{thm:3facilities} and Theorem~\ref{thm:tree-unbounded} imply that the instances with $K+1$ agents are among the hardest ones for deterministic $K$-Facility Location mechanisms.

\smallskip\noindent{\bf Other Related Work.}
In Social Choice, the work on multiple facility location games mostly focuses on Pareto-efficient strategyproof mechanisms that satisfy replacement-domination \cite{Miy01}, and on Pareto-efficient mechanisms whose outcome is consistent with the decisions of the agents served by the same facility \cite{BB06,Ju08}. However, these conditions do not have any immediate implications for the approximability of the social cost (or of any other social objective), and thus, we cannot technically exploit these results.

\smallskip\noindent{\em Locating a Single Facility.}
Alon at al. \cite{AFPT09} almost completely characterized the approximation ratios achievable by randomized and deterministic mechanisms for 1-Facility Location in general metrics and rings.
Next, Feldman and Wilf \cite{FW11} proved that for the $L_2$ norm of the distances to the agents, the best approximation ratio is $1.5$ for randomized and $2$ for deterministic mechanisms. Moreover, they presented a class of randomized mechanisms that includes all known strategyproof mechanisms for $1$-Facility Location on the line.

\smallskip\noindent{\em Locating Multiple Facilities.}
For $K \geq 4$, the case of $K+1$ agents is the only case where a (randomized) strategyproof mechanism with a bounded approximation ratio is known. Escoffier at al. \cite{EGTPS11} proved that in this case, the {\sc Inversely Proportional} mechanism is strategyproof and achieves an approximation ratio of $(K+1)/2$ for $K$-Facility Location in general metric spaces. Interestingly, Theorem~\ref{thm:3facilities} shows that these instances are among the hardest ones for deterministic anonymous mechanisms.

\smallskip\noindent{\em Imposing Mechanisms.}
Nissim at al. \cite{NST10} introduced the notion of \emph{imposing mechanisms}, where the mechanism can restrict how agents exploit its outcome, and thus increase their individual cost if they lie (e.g., for Facility Location games, an imposing mechanism can forbid an agent to connect to some facilities).
They combined the almost-strategyproof differentially private mechanism of \cite{MT07} with an imposing mechanism that penalizes lying agents, and obtained a general randomized imposing strategyproof mechanism. As a by-product, they obtained a randomized imposing mechanism for $K$-Facility Location that approximates the average optimal social cost within an additive term of roughly $1/n^{1/3}$.
Subsequently, we proved, in \cite{FT10}, that the imposing version of the {\sc Proportional} mechanism is strategyproof for $K$-Facility Location in general metric spaces, and achieves an approximation ratio of at most $4K$.

\section{Notation, Definitions, and Preliminaries}
\label{s:prelim}

With the exception of Section~\ref{s:tree}, we consider $K$-Facility Location on the real line. So, in this section, we introduce the notation and the basic notions only for instances on the real line. Throughout this work, we let $K$-Facility Location refer to the problem of placing $K$ facilities on the  real line, unless stated otherwise.

\smallskip\noindent{\bf Notation.}
For a tuple $\vec{x} = (x_1, \ldots, x_n) \in \reals^n$, $\min\vec{x}$, $\max\vec{x}$, and $\med\vec{x}$ denote the smallest, the largest, and the $\tceil{n/2}$-smallest coordinate of $\vec{x}$, respectively.
We let $\vec{x}_{-i}$ 
be the tuple $\vec{x}$ without $x_i$. For a non-empty set $S$ of indices, we let $\vec{x}_S = (x_i)_{i \in S}$ and $\vec{x}_{-S} = (x_i)_{i \not\in S}$.
We write 
$(\vec{x}_{-i}, a)$ to denote the tuple $\vec{x}$ with $a$ in place of $x_i$, $(\vec{x}_{-\{i,j\}}, a, b)$ to denote the tuple $\vec{x}$ with $a$ in place of $x_i$ and $b$ in place of $x_j$, and so on.

\smallskip\noindent{\bf Instances.}
Let $N = \{ 1, \ldots, n\}$ be a set of $n \geq 3$ agents. Each agent $i \in N$ has a location $x_i \in \reals$, which is $i$'s private information. We usually refer to a locations profile $\vec{x} = (x_1, \ldots, x_n) \in \reals^n$ as an \emph{instance}.
For an instance $\vec{x}$, we say that the agents are arranged on the line according to a permutation $\pi$ if $\pi$ arranges them in increasing order of their locations in $\vec{x}$, i.e., $x_{\pi(1)} \leq  x_{\pi(2)} \leq \cdots \leq x_{\pi(n)}$.
In the proof of Theorem~\ref{thm:2fac-gen}, we consider \emph{3-agent instances}, where $n = 3$, and \emph{3-location instances}, where there are three different locations $x_1, x_2, x_3$, and a partition of $N$ into three coalitions $N_1, N_2, N_3$ such that all agents in coalition $N_i$ occupy location $x_i$, $i \in \{1, 2, 3\}$.
We usually denote such an instance as $(x_1\sep N_1, x_2\sep N_2, x_3\sep N_3)$.
For a set $N$ of agents, we let $\I(N)$ denote the set of all instances, and let $\Ithr(N)$ denote the set of all 3-location instances.

\smallskip\noindent{\bf Mechanisms.}
A (deterministic) mechanism $F$ for $K$-Facility Location maps an instance $\vec{x}$ to a $K$-tuple $(y_1, \ldots, y_K) \in \reals^K$, $y_1 \leq \cdots \leq y_K$, of facility locations.
We let $F(\vec{x})$ denote the outcome of $F$ for instance $\vec{x}$, and let $F_\ell(\vec{x})$ denote $y_\ell$, i.e., the $\ell$-th smallest coordinate in $F(\vec{x})$. In particular, for 2-Facility Location, $F_1(\vec{x})$ denotes the leftmost and $F_2(\vec{x})$ denotes the rightmost facility of $F(\vec{x})$. 
We write $y \in F(\vec{x})$ to denote that $F(\vec{x})$ has a facility at $y$.
A mechanism $F$ is \emph{anonymous} if for all instances $\vec{x}$ and all agent permutations $\pi$, $F(\vec{x}) = F(x_{\pi(1)}, \ldots, x_{\pi(n)})$.
Throughout this work, all references to a mechanism $F$ assume a deterministic mechanism, unless explicitly stated otherwise.

\smallskip\noindent{\bf Social Cost.}
Given a mechanism $F$ for $K$-Facility Location and an instance $\vec{x}$, the (individual) cost of agent $i$ is
\( \cost[x_i, F(\vec{x})] = \min_{1 \leq \ell \leq K}\{ |x_i - F_\ell(\vec{x})| \} \).
The (social) cost of $F$ for an instance $\vec{x}$ is
\( \cost[F(\vec{x})] = \sum_{i = 1}^n \cost[x_i, F(\vec{x})] \).
The optimal cost for an instance $\vec{x}$ is $\min \sum_{i=1}^n \cost[x_i, (y_1, \ldots, y_K)]$, where the minimum is taken over all $K$-tuples $(y_1, \ldots, y_K)$. 

A mechanism $F$ has an approximation ratio of $\rho \geq 1$, if for any instance $\vec{x}$, the cost of $F(\vec{x})$ is at most $\rho$ times the optimal cost for $\vec{x}$.
We say that the approximation ratio $\rho$ of $F$ is \emph{bounded} if $\rho$ is either some constant or some (computable) function of $n$ and $K$.
%
%
Since for any $p \geq 1$ (or for $p = \infty$), and for any non-negative $n$-tuple $\vec{c}$, $\| c \|_p \leq \sum_{i=1}^n c_i \leq n^{1-1/p} \| c \|_p$, a mechanism with a bounded approximation ratio for the $L_p$ norm of the agents' individual costs also has a bounded approximation ratio for the social cost.

\smallskip\noindent{\bf Strategyproofness.}
%
A mechanism $F$ is \emph{strategyproof} if no agent can benefit from misreporting her location. Formally, for all instances $\vec{x}$, every agent $i$, and all locations $y$, 
\( \cost[x_i, F(\vec{x})] \leq \cost[x_i, F(\vec{x}_{-i}, y)] \).
A mechanism $F$ is \emph{group strategyproof} if for any coalition of agents misreporting their locations, at least one of them does not benefit. Formally,
for all instances $\vec{x}$, every coalition of agents $S$, and all subinstances $\vec{y}_S$, there exists some agent $i \in S$ such that
\( \cost[x_i, F(\vec{x})] \leq \cost[x_i, F(\vec{x}_{-S},\vec{y}_S)] \).
A mechanism $F$ is \emph{partial group strategyproof} if for any coalition of agents that occupy the same location, none of them can benefit if they misreport their location simultaneously. Formally, for all instances $\vec{x}$, every coalition of agents $S$, all occupying the same location $x$ in $\vec{x}$, and all subinstances $\vec{y}_S$, 
\( \cost[x, F(\vec{x})] \leq \cost[x, F(\vec{x}_{-S}, \vec{y}_S)] \).

By definition, any group strategyproof mechanism is partial group strategyproof, and any partial group strategyproof mechanism is strategyproof.
In \cite[Lemma~2.1]{LSWZ10}, it is shown that any strategyproof mechanism for $K$-Facility Location is also partial group strategyproof (see also \cite[Section~2]{Moul80}).

\smallskip\noindent{\bf Image Sets.}
Given a mechanism $F$, the \emph{image} (or option) \emph{set} $I_i(\vec{x}_{-i})$ of an agent $i$ with respect to an instance $\vec{x}_{-i}$ is the set of facility
locations the agent $i$ can obtain by varying her reported location. Formally,
\( I_i(\vec{x}_{-i}) = \{ a \in \reals: \exists y \in \reals \mbox{ such that } F(\vec{x}_{-i}, y) = a \} \).
%
If $F$ is strategyproof, any image set $I_i(\vec{x}_{-i})$ is a collection of closed intervals, and $F$ places a facility at the location in $I_i(\vec{x}_{-i})$ nearest to the location of agent $i$. Formally, for any agent $i$, all instances $\vec{x}$, and all locations $y$, 
\( \cost[y, F(\vec{x}_{-i}, y)] =
\inf_{a \in I_i(\vec{x}_{-i})}\{|y - a|\} \).
In \cite[Section~3.1]{LSWZ10}, it is shown that using partial group strategyproofness, we can extend the notion of image sets and the properties above to coalitions of agents that occupy the same location in an instance $\vec{x}$.

Any (open) interval in the complement of an image set $I \equiv I_i(\vec{x}_{-i})$ is called a \emph{hole} of $I$. Given a location $y \not\in I$, we let $l_y = \sup_{a \in I} \{ a < y\}$ and $r_y = \inf_{a \in I} \{ a > y\}$ be the locations in $I$ nearest to $y$ on the left and on the right, respectively. Since $I$ is a collection of closed intervals, $l_y$ and $r_y$ are well defined and satisfy $l_y < y < r_y$. For convenience, given a $y \not\in I$, we refer to the interval $(l_y, r_y)$ as a $y$-hole in $I$.

\smallskip\noindent{\bf Nice Mechanisms.}
For simplicity, we use the term \emph{nice mechanism} to refer to any mechanism $F$ that is deterministic, strategyproof, and has a bounded approximation ratio. We usually refer to a nice mechanism $F$ without explicitly mentioning its approximation ratio, with the understanding that given $F$ and the set $N$ of agents, we can determine an upper bound $\rho$ on the approximation ratio of $F$ for instances in $\I(N)$.

Any nice mechanism $F$ for $K$-Facility Location is \emph{unanimous}, namely for all instances $\vec{x}$ where the agents occupy $K$ different locations $x_1, \ldots, x_K$, $F(\vec{x}) = (x_1, \ldots, x_K)$. 
Similarly, any hole in an image set $I_i(\vec{x}_{-i})$ of $F$ is a bounded interval. Otherwise, 
we could move agent $i$ sufficiently far away from the remaining agents, and obtain an instance for which $F$ would have approximation ratio larger than $\rho$. 

\smallskip\noindent{\bf Well-Separated Instances.}
Given a nice mechanism $F$ for $K$-Facility Location with approximation ratio $\rho$, a $(K+1)$-agent instance $\vec{x}$ is called \emph{$(i_1|\cdots|i_{K-1}|i_K, i_{K+1})$-well-separated} if $x_{i_1} < \cdots < x_{i_{K+1}}$ and
\( \rho(x_{i_{K+1}} - x_{i_K}) < \min_{2 \leq \ell \leq K} \{ x_{i_\ell} - x_{i_{\ell-1}} \} \).
At the conceptual level, in a well-separated instance, there is a pair of nearby agents whose distance to each other is less than $1/\rho$ times the distance between any other pair of consecutive agent locations on the real line. Therefore any mechanism with an approximation ratio of at most $\rho$ should serve the two nearby agents by the same facility, and serve each of the remaining ``isolated'' agents by a different facility.

\subsection{Useful Properties}
\label{s:properties}

%
We present here some useful properties of nice mechanisms for $K$-Facility Location on the real line applied to instances with $K+1$ agents.

\begin{proposition}\label{prop:middle}
Let $F$ be a nice mechanism for $K$-Facility Location on the line. For any $(K+1)$-location instance $\vec{x}$ with $x_{i_1} < \cdots < x_{i_{K+1}}$, $F_1(\vec{x}) \leq x_{i_2}$ and $F_K(\vec{x}) \geq x_{i_K}$.
\end{proposition}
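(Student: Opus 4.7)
\begin{proofsketch}
The plan is a short proof by contradiction that exploits unanimity together with partial group strategyproofness applied to the coalition sitting at the extreme location. I will prove the first inequality $F_1(\vec{x}) \leq x_{i_2}$ in detail; the inequality $F_K(\vec{x}) \geq x_{i_K}$ will follow by a symmetric argument (swapping the roles of the leftmost and rightmost locations, and of the leftmost and rightmost facilities).

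Suppose, towards a contradiction, that $F_1(\vec{x}) > x_{i_2}$. Let $S$ be the (non-empty) coalition of agents of $\vec{x}$ located at $x_{i_1}$. Since $x_{i_1}$ is the leftmost location and $F_1(\vec{x}) > x_{i_1}$, the nearest facility to every agent in $S$ is $F_1(\vec{x})$, so each such agent incurs cost $F_1(\vec{x}) - x_{i_1} > x_{i_2} - x_{i_1}$ in $F(\vec{x})$.

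Now consider the instance $\vec{x}^{\ast}$ obtained from $\vec{x}$ by relocating every agent of $S$ to $x_{i_2}$. The instance $\vec{x}^{\ast}$ has exactly $K$ distinct locations, namely $x_{i_2} < x_{i_3} < \cdots < x_{i_{K+1}}$. Since $F$ is nice, it is unanimous (as recalled in Section~\ref{s:prelim}), so $F(\vec{x}^{\ast}) = (x_{i_2}, x_{i_3}, \ldots, x_{i_{K+1}})$. In particular, $F(\vec{x}^{\ast})$ contains a facility at $x_{i_2}$, so for an agent whose true location is $x_{i_1}$ the cost in $F(\vec{x}^{\ast})$ is at most $x_{i_2} - x_{i_1}$.

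Finally, I invoke partial group strategyproofness for the coalition $S$, whose members all occupy the same location $x_{i_1}$ in $\vec{x}$: no agent in $S$ can strictly benefit by the simultaneous deviation that sets their reports to $x_{i_2}$. This yields $F_1(\vec{x}) - x_{i_1} \leq x_{i_2} - x_{i_1}$, i.e.\ $F_1(\vec{x}) \leq x_{i_2}$, contradicting the assumption. The only subtle point is justifying the use of partial group strategyproofness, rather than plain strategyproofness, when $|S| \geq 2$; but this is exactly the property cited from \cite[Lemma~2.1]{LSWZ10} in Section~\ref{s:prelim}. The inequality $F_K(\vec{x}) \geq x_{i_K}$ is proved by the mirror-image argument: move the agents at $x_{i_{K+1}}$ to $x_{i_K}$, apply unanimity to the resulting $K$-location instance, and conclude by partial group strategyproofness that the rightmost facility of $F(\vec{x})$ cannot lie to the left of $x_{i_K}$.
\end{proofsketch}
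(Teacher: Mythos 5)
Your proof is correct and follows essentially the same route as the paper's: assume $F_1(\vec{x}) > x_{i_2}$, move the coalition at the leftmost location to $x_{i_2}$, observe that the resulting $K$-location instance forces a facility at $x_{i_2}$ (unanimity, which the paper phrases directly via the bounded approximation ratio), and derive a contradiction with partial group strategyproofness. Your explicit handling of the coalition $S$ and the symmetric case matches what the paper leaves implicit.
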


\begin{proof}
Let us assume that $x_{i_2} < F_1(\vec{x})$ (the other case is symmetric). Then, the agents at $x_{i_1}$ have an incentive to report $x_{i_2}$ and decrease their cost, since $x_{i_2} \in F(\vec{x}_{-i_1}, x_{i_2})$, due to the bounded approximation ratio of $F$. This contradicts $F$'s (partial group) strategyproofness.
\qed\end{proof}

\begin{proposition}\label{prop:interval}
Let $F$ be a nice mechanism for $K$-Facility Location. For any $(i_1|\cdots|i_{K-1}|i_K, i_{K+1})$-well-separated instance $\vec{x}$, $F_K(\vec{x}) \in [x_{i_K}, x_{i_{K+1}}]$.
\end{proposition}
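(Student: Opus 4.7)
\begin{proofsketch}
The plan is to show that in a well-separated instance the two close agents $i_K$ and $i_{K+1}$ are forced to share one facility, that this shared facility is necessarily the rightmost one, and that its location is then pinned to $[x_{i_K},x_{i_{K+1}}]$ by a unanimity-based strategyproofness argument.

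First I would set $d=x_{i_{K+1}}-x_{i_K}$ and observe that placing facilities at $x_{i_1},\ldots,x_{i_{K-1}},x_{i_K}$ gives a solution of cost $d$, so $\cost[F(\vec{x})]\le \rho d$, where $\rho$ denotes the approximation ratio of $F$. The well-separated hypothesis implies that every pair of distinct agents other than $\{i_K,i_{K+1}\}$ is at distance strictly greater than $\rho d$, since the interval spanned by any such pair contains at least one of the large gaps $x_{i_\ell}-x_{i_{\ell-1}}$ with $2\le\ell\le K$. Consequently, if any such pair of agents shared a facility, their two individual costs alone would already exceed $\rho d$, contradicting the total-cost bound. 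Pigeonhole then forces some facility $g$ to serve at least two agents, and the only admissible pair is $\{i_K,i_{K+1}\}$; the same cost argument rules out $g$ additionally covering any isolated $i_\ell$. Hence each of the $K-1$ isolated agents is served by its own distinct facility $f_\ell$, and the remaining facility $g$ serves precisely the close pair.

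Next I would identify $g$ with $F_K$. For every $\ell\le K-1$ one has $|f_\ell-x_{i_\ell}|\le \rho d$, while $x_{i_K}-x_{i_\ell}\ge x_{i_K}-x_{i_{K-1}}>\rho d$; together these force $f_\ell<x_{i_K}$. So all facilities other than $g$ lie strictly left of $x_{i_K}$, whereas Proposition~\ref{prop:middle} gives $F_K\ge x_{i_K}$, which forces $g=F_K$.

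Finally, to bound $F_K$ from above I would invoke strategyproofness for agent $i_K$ against the deviation of reporting $x_{i_{K+1}}$. In the deviated instance $(\vec{x}_{-i_K},x_{i_{K+1}})$ the agents occupy exactly the $K$ distinct locations $x_{i_1},\ldots,x_{i_{K-1}},x_{i_{K+1}}$, so unanimity of nice mechanisms places facilities at precisely these points; the one nearest to $i_K$'s true location $x_{i_K}$ is $x_{i_{K+1}}$, at distance $d$, because every other is at distance exceeding $\rho d\ge d$. Strategyproofness then forces $\cost[x_{i_K},F(\vec{x})]\le d$, i.e.\ $|F_K-x_{i_K}|\le d$, and so $F_K\le x_{i_{K+1}}$; combined with $F_K\ge x_{i_K}$ from Proposition~\ref{prop:middle}, the proposition follows. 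The main obstacle in this plan is the pair-sharing step, where one must simultaneously rule out two isolated agents sharing a facility and the pair-serving facility covering an additional isolated agent; once pair-sharing is pinned down, the identification $g=F_K$ and the unanimity-based deviation are essentially routine.
\end{proofsketch}
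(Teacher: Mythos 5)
Your proposal is correct and follows essentially the same route as the paper: the close pair must share the rightmost facility (the paper asserts this directly from the bounded approximation ratio, while you spell out the pigeonhole and pairwise-distance argument), the lower bound $F_K(\vec{x})\geq x_{i_K}$ comes from Proposition~\ref{prop:middle}, and the upper bound comes from agent $i_K$ deviating to $x_{i_{K+1}}$ together with unanimity on the resulting $K$-location instance. No substantive differences.
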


\begin{proof}
Since $F$ has a bounded approximation ratio, the two nearby agents $i_K$ and $i_{K+1}$ are both served by the facility at $F_K(\vec{x})$. By Proposition~\ref{prop:middle}, $F_K(\vec{x}) \geq x_{i_K}$. Moreover, $F_K(\vec{x}) \leq x_{i_{K+1}}$. Otherwise, the agent $i_K$ could report $x_{i_{K+1}}$ and decrease her cost, since $x_{i_{K+1}} \in F(\vec{x}_{-i_K}, x_{i_{K+1}})$, due to the bounded approximation ratio of $F$.
\qed\end{proof}

The following propositions show that if there exists an $(i_1|\cdots|i_{K-1}|i_K, i_{K+1})$-well-separated instance $\vec{x}$ with $F_K(\vec{x}) = x_{i_K}$ (resp. $F_K(\vec{x}) = x_{i_{K+1}}$), then as long as we ``push'' the locations of agents $i_K$ and $i_{K+1}$ to the right (resp. left), while keeping the instance well-separated, the rightmost facility of $F$ stays with the location of agent $i_K$ (resp. $i_{K+1}$). The proofs can be found in the Appendix, Section~\ref{app:s:push_right} and Section~\ref{app:s:push_left}, respectively.
We should highlight that one can establish the equivalent of the following propositions for well-separated instances where the two nearby agents are located elsewhere in the instance (e.g., the nearby agents are the two leftmost agents, or the second and third agent from the left).

\begin{proposition}\label{prop:right_cover}
Let $F$ be a nice mechanism for $K$-Facility Location, and $\vec{x}$ be a $(i_1|\cdots|i_{K-1}|i_K, i_{K+1})$-well-separated instance with $F_K(\vec{x}) = x_{i_K}$. Then for every $(i_1|\cdots|i_{K-1}|i_K, i_{K+1})$-well-separated instance $\vec{x}' = (\vec{x}_{-\{i_K, i_{K+1}\}}, x'_{i_K}, x'_{i_{K+1}})$ with $x_{i_K} \leq x'_{i_K}$, it holds that $F_K(\vec{x}') = x'_{i_K}$\,.
\end{proposition}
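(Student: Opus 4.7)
The plan is to argue this by combining image-set rigidity with strategyproofness, handling the movements of $i_K$ and $i_{K+1}$ via a carefully chosen path of well-separated intermediate instances.

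The central ingredient is the image set $I = I_{i_{K+1}}(\vec{x}_{-i_{K+1}})$, which does not depend on $i_{K+1}$'s reported location. Since $F_K(\vec{x}) = x_{i_K}$ is, by strategyproofness, the closest point of $I$ to $x_{i_{K+1}}$, we have $x_{i_K} \in I$ and the next point of $I$ strictly to the right of $x_{i_K}$, call it $r$, satisfies $r \geq 2x_{i_{K+1}} - x_{i_K}$. In the intermediate instance $\vec{x}^{(1)} = (\vec{x}_{-i_{K+1}}, x'_{i_{K+1}})$, the facility serving $i_{K+1}$ must then be either $x_{i_K}$ or at least $r$, and the second option can be ruled out using Proposition~\ref{prop:interval} together with the bounded approximation ratio of $F$ applied to an auxiliary well-separated instance where such a jump would violate the $\rho$-approximation guarantee. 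This yields $F_K(\vec{x}^{(1)}) = x_{i_K}$ whenever $\vec{x}^{(1)}$ itself is well-separated.

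I would then move $i_K$ from $x_{i_K}$ to $x'_{i_K}$ inside $\vec{x}^{(1)}$ to reach $\vec{x}'$. Strategyproofness of $i_K$ together with $F_K(\vec{x}^{(1)}) = x_{i_K}$ confines $F_K(\vec{x}')$ to an interval near $x'_{i_K}$, and Proposition~\ref{prop:interval} combined with the image set $I_{i_{K+1}}(\vec{x}'_{-i_{K+1}})$, which contains $x'_{i_K}$ by unanimity applied to the 2-location instance where $i_{K+1}$ reports $x'_{i_K}$, pins down $F_K(\vec{x}') = x'_{i_K}$. When the intermediate $\vec{x}^{(1)}$ fails to be well-separated (which can happen if $x'_{i_K}$ is much larger than $x_{i_K}$ so that the distance from $x_{i_K}$ to $x'_{i_{K+1}}$ exceeds the well-separated threshold), I instead use a finer path of well-separated instances in which $i_K$ and $i_{K+1}$ move together in small increments, reapplying the image-set argument at each step.

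The main obstacle is ruling out the jump of $F_K$ to $r \geq 2x_{i_{K+1}} - x_{i_K}$ at the intermediate stage. Strategyproofness alone is insufficient to exclude this jump, so the argument relies on the approximation-ratio bound: if $F_K(\vec{x}^{(1)}) = r$, then both $i_K$ and $i_{K+1}$ incur substantial costs on an instance whose optimum is small, contradicting the $\rho$-approximation. A careful case analysis is needed to handle the different configurations of $x_{i_K}$, $x_{i_{K+1}}$, $x'_{i_K}$, $x'_{i_{K+1}}$, and to verify that the chosen auxiliary instance is itself well-separated so that Proposition~\ref{prop:interval} applies.
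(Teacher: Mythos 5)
Your overall architecture matches the paper's: decompose the move into two elementary steps (one agent at a time), and chain small increments of well-separated intermediate instances when a single jump would break well-separatedness. Your first step, moving the outer agent $i_{K+1}$ while exploiting the hole $(x_{i_K}, r)$ with $r \geq 2x_{i_{K+1}} - x_{i_K}$ in $I_{i_{K+1}}(\vec{x}_{-i_{K+1}})$ and ruling out a jump to $r$ via Proposition~\ref{prop:interval}, is essentially the paper's Proposition~\ref{prop:c_pulls_b}.

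The gap is in your second step, moving the inner agent $i_K$ from $x_{i_K}$ to $x'_{i_K}$. The facts you invoke do not pin down $F_K(\vec{x}') = x'_{i_K}$. Strategyproofness of $i_K$ (misreporting $x_{i_K}$) only yields $F_K(\vec{x}') \leq 2x'_{i_K} - x_{i_K}$, and unanimity gives $x'_{i_K} \in I_{i_{K+1}}(\vec{x}'_{-i_{K+1}})$, which only says the facility serving $i_{K+1}$ lies within distance $x'_{i_{K+1}} - x'_{i_K}$ of $x'_{i_{K+1}}$ --- i.e., nothing beyond $F_K(\vec{x}') \in [x'_{i_K}, x'_{i_{K+1}}]$, which Proposition~\ref{prop:interval} already gives for $\vec{x}'$ in isolation. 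In particular $F_K(\vec{x}') = x'_{i_{K+1}}$, or any interior point, is still consistent with everything you state; the hole you established in $I_{i_{K+1}}(\vec{x}_{-i_{K+1}})$ does not transfer to $I_{i_{K+1}}(\vec{x}'_{-i_{K+1}})$, since that image set changes once $i_K$ moves. What is missing is the argument of the paper's Proposition~\ref{prop:b_pushes_c}: one must show that the image set of the \emph{inner} agent, $I_{i_K}(\cdot)$ with the other agents fixed, contains every point strictly between $x_{i_K}$ and the outer agent's location. If it had a hole $(l,r)$ there, placing $i_K$ at a point $y'$ in the \emph{left half} of the hole would (by strategyproofness and Proposition~\ref{prop:middle}) put her nearest facility at $l$, so that $i_K$ and the isolated agent $i_{K-1}$ together pay at least $y' - x_{i_{K-1}}$, while the optimal cost of that (still well-separated) instance is at most the tiny gap to $i_{K+1}$; this contradicts the $\rho$-approximation. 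Without this hole-elimination argument for $I_{i_K}$, the second step does not go through, and the error cannot be absorbed by taking smaller increments, since each increment would still leave $F_K$ undetermined within a positive-length interval.
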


\begin{proposition}\label{prop:left_cover}
Let $F$ be a nice mechanism for $K$-Facility Location, and $\vec{x}$ be a $(i_1|\cdots|i_{K-1}|i_K, i_{K+1})$-well-separated instance with $F_K(\vec{x}) = x_{i_{K+1}}$. Then for every $(i_1|\cdots|i_{K-1}|i_K, i_{K+1})$-well-separated instance $\vec{x}' = (\vec{x}_{-\{i_K, i_{K+1}\}}, x'_{i_K}, x'_{i_{K+1}})$ with $x'_{i_{K+1}} \leq x_{i_{K+1}}$, it holds that $F_K(\vec{x}') = x'_{i_{K+1}}$\,.
\end{proposition}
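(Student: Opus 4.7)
The plan is to mirror the proof of Proposition~\ref{prop:right_cover}, with the left--right orientation reversed: rather than the rightmost facility following the leftmost agent of the nearby pair as the pair is pushed right, here $F_K$ follows the rightmost agent $i_{K+1}$ as the pair is pushed left. The principal tools are image sets combined with strategyproofness, and Proposition~\ref{prop:interval}, which confines $F_K$ to the closed interval spanned by the two nearby agents.

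The first step treats the case in which only $i_{K+1}$ moves while $i_K$ is held at $x_{i_K}$. Consider $I = I_{i_{K+1}}(\vec{x}_{-i_{K+1}})$; the hypothesis $F_K(\vec{x}) = x_{i_{K+1}}$ gives $x_{i_{K+1}} \in I$. I claim that $(x_{i_K}, x_{i_{K+1}}] \subseteq I$, which at once yields $F_K(\vec{x}_{-i_{K+1}}, y) = y$ for every $y$ in this set whose instance is well-separated. Suppose instead that $I$ has a hole $(l, r)$ with $r \in (x_{i_K}, x_{i_{K+1}}]$. A short case analysis on whether $l \geq x_{i_K}$ shows that the open interval $(\max(x_{i_K}, (l+r)/2), r)$ is non-empty; pick any $y$ in it. Since $y < x_{i_{K+1}}$ and the original $\vec{x}$ is well-separated, $(\vec{x}_{-i_{K+1}}, y)$ is again $(i_1 \mid \cdots \mid i_{K-1} \mid i_K, i_{K+1})$-well-separated, so Proposition~\ref{prop:interval} forces $F_K(\vec{x}_{-i_{K+1}}, y) \leq y$; but strategyproofness places this facility at the point of $I$ nearest to $y$, which by the choice of $y$ is $r > y$, a contradiction.

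The second step extends the conclusion by traversing from $\vec{x}$ to $\vec{x}'$ along the linear interpolation $\vec{x}^{(t)} = (\vec{x}_{-\{i_K, i_{K+1}\}}, (1-t) x_{i_K} + t x'_{i_K}, (1-t) x_{i_{K+1}} + t x'_{i_{K+1}})$ for $t \in [0, 1]$. Each well-separation inequality is linear in $t$ and strict at both endpoints, so every $\vec{x}^{(t)}$ is well-separated; moreover the pair gap $x^{(t)}_{i_{K+1}} - x^{(t)}_{i_K}$ is linear in $t$ with positive endpoint values, hence uniformly bounded below by a positive constant. Partition $[0,1]$ into finitely many sub-intervals with a uniform step size $\delta$ small enough that each single-step shift is smaller than this minimum pair gap. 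Within each step $\vec{x}^\star = \vec{x}^{(t_i)} \to \vec{x}^+ = \vec{x}^{(t_{i+1})}$, first shift $i_{K+1}$ from $x^\star_{i_{K+1}}$ to $x^+_{i_{K+1}}$: since $x^+_{i_{K+1}} > x^\star_{i_K}$ by the step-size choice, the first step applied inside $\vec{x}^\star$ produces an intermediate $\vec{x}^{\#}$ with $F_K(\vec{x}^{\#}) = x^+_{i_{K+1}}$. Next shift $i_K$ from $x^\star_{i_K}$ to $x^+_{i_K}$: the image set $I' = I_{i_K}(\vec{x}^{\#}_{-i_K})$ contains $x^+_{i_{K+1}}$, and since the nearest point of $I'$ to $x^\star_{i_K}$ is $x^+_{i_{K+1}}$, it satisfies $I' \cap (2 x^\star_{i_K} - x^+_{i_{K+1}}, x^+_{i_{K+1}}) = \emptyset$. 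The step-size choice places $x^+_{i_K}$ inside this excluded interval, and Proposition~\ref{prop:interval} combined with the exclusion yields $F_K(\vec{x}^+) = x^+_{i_{K+1}}$. Iterating over all sub-intervals preserves the invariant $F_K(\vec{x}^{(t_i)}) = x^{(t_i)}_{i_{K+1}}$ and at the final step gives $F_K(\vec{x}') = x'_{i_{K+1}}$.

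The main obstacle will be the bookkeeping around the subdivision: one has to verify that the uniform step size really does ensure simultaneously (i) well-separation of every single-agent intermediate, (ii) $x^+_{i_{K+1}}$ strictly to the right of $x^\star_{i_K}$ so that the first step applies, and (iii) $x^+_{i_K}$ strictly inside the excluded interval around $x^+_{i_{K+1}}$ so that the image-set argument pins down the facility. Each of these reduces to a strict linear inequality in $t$ and the step size, and each holds with positive slack at both endpoints of the interpolation, which is why a uniform positive step size is available and the path is traversed in finitely many steps.
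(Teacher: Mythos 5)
Your proof is correct and follows essentially the same route as the paper's (Propositions~\ref{prop:c_pushes_b}--\ref{prop:left_cover_3agents} in the appendix): first move the rightmost agent of the pair alone via a hole/image-set argument played against Proposition~\ref{prop:interval}, then move the other agent using the exclusion zone of radius equal to her current cost, and iterate with a step size bounded by half the pair gap. The only cosmetic difference is that for the move of $i_K$ you pin down the facility via $F_K(\vec{x}^+) \in I' \cap [x^+_{i_K}, x^+_{i_{K+1}}] = \{x^+_{i_{K+1}}\}$, whereas the paper's Proposition~\ref{prop:b_pulls_c} handles arbitrary (not just small) moves of that agent and therefore needs an extra approximation-ratio contradiction for the far case.
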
 

\section{Strategyproof Mechanisms for 2-Facility Location: Outline}
\label{s:2facilities}

We proceed to discuss the key proof steps and the consequences of our main result:

\begin{theorem}\label{thm:2fac-gen}
Let $F$ be a nice mechanism for 2-Facility Location with $n \geq 5$ agents. Then, either $F(\vec{x}) = ( \min \vec{x}, \max \vec{x} )$ for all instances $\vec{x}$, or there exists a unique dictator $j$ such that for all $\vec{x}$, $x_j \in F(\vec{x})$.
\end{theorem}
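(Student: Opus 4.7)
The plan is to follow the three-tiered structure the authors announce: prove the dichotomy first for instances with three agents, then lift it to instances with three distinct locations but $n \geq 5$ agents, and finally remove the three-location restriction by induction on the number of distinct locations. The bounded-approximation hypothesis will be used almost exclusively through well-separated instances and the two ``sticking'' propositions \ref{prop:right_cover} and \ref{prop:left_cover}; strategyproofness will be used through image sets and (upgrading to) partial group strategyproofness.

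For the three-agent case, I would first fix the identity $i$ of the ``isolated'' agent in a well-separated configuration and look at how $F_2$ (or $F_1$) acts on the locations of the two nearby agents. Propositions~\ref{prop:right_cover}--\ref{prop:left_cover} say that once this facility is pinned to the left or right nearby agent, it must remain pinned as that pair is slid monotonically. Combined with unanimity and the Moulin-style analysis on the line, this should produce a phantom/threshold $t_i(x_i)$, depending on the isolated agent and its reported location, such that the nearby facility is the median of the two nearby agents' locations and $t_i(x_i)$; one then extends this description from well-separated to arbitrary 3-agent instances by sliding the isolated agent towards the pair and invoking strategyproofness plus the unanimity-induced boundedness of image-set holes. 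The next step is the threshold dichotomy (the analogue of Lemma~\ref{l:i-cases}): any intermediate value of $t_i(x_i)$ forces a hole in some image set so large that placing the two remaining agents on opposite sides of it blows up the approximation ratio, leaving only the extreme options $t_i(x_i) = \pm\infty$ (two-extremes behaviour for that isolated agent) or $t_i(x_i) = x_i$ (a partial dictator). A case analysis over the three agents' thresholds then yields Theorem~\ref{thm:3agents}: either $F$ outputs $(\min\vec{x},\max\vec{x})$ on every 3-agent instance, or some agent is a partial dictator on 3-agent instances.

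For three-location instances with $n \geq 5$, the additional ingredient is partial group strategyproofness, which lets one replace a whole coalition sitting at one point by a single ``super-agent'' and, via the coalition-moving tool (Lemma~\ref{l:any-partition}), redistribute agents between the three coalitions without changing $F(\vec{x})$. If the 3-agent restriction is the two-extremes mechanism, I would apply this tool to each 3-location instance $(x_1\sep N_1,x_2\sep N_2,x_3\sep N_3)$ to reduce to a 3-agent instance and transport the conclusion back. If instead a partial dictator $j$ exists on 3-agent instances, the $n \geq 5$ hypothesis guarantees enough spare agents to produce, for every permutation of the three locations, a 3-agent ``witness'' in which $j$ is served by a facility. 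I expect this to be the main obstacle: one must rule out the pathological case where different partial dictators govern different 3-agent restrictions, and this is precisely where $n\geq 5$ (rather than $n=4$) is used — the extra agents give the flexibility to assemble two incompatible witnesses if such a mismatch existed, giving a contradiction and forcing a single full dictator on 3-location instances. This yields Theorem~\ref{thm:3locations}.

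Finally, for the general case I would induct on the number $k$ of distinct locations in $\vec{x}$. The base $k \leq 3$ is Theorem~\ref{thm:3locations}. For the inductive step, I would merge two adjacent locations into one coalition and apply the inductive hypothesis to the resulting $k$-location instance; partial group strategyproofness then propagates the verdict back to $\vec{x}$. In the two-extremes branch, the conclusion extends immediately because $\min$ and $\max$ are preserved under merging adjacent interior locations, while a standard well-separated perturbation of $\vec{x}$ rules out any placement other than $(\min\vec{x},\max\vec{x})$. In the dictator branch, the dictator $j$ identified in every 3-location reduction must be the same agent (otherwise two different reductions would contradict each other on a well-separated instance), and the image-set argument then pins a facility of $F(\vec{x})$ to $x_j$ globally. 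Combining the two branches gives Theorem~\ref{thm:2fac-gen}.
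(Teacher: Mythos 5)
Your three-tier architecture (3 agents $\to$ 3 locations with $n\geq 5$ $\to$ general instances) is exactly the paper's, and your sketches of the first two tiers track the actual proof closely: the threshold/phantom description of the nearby facility on well-separated instances, the dichotomy $t_i(x_i)\in\{x_i,\pm\infty\}$ forced by an approximation blow-up, the case analysis giving a partial dictator or two-extremes, and the use of partial group strategyproofness plus the coalition-moving lemma and the ``two disjoint dictator coalitions are impossible'' argument to upgrade a partial dictator to a unique full dictator when $n\geq 5$. All of that is in the spirit of Lemmas~\ref{l:single-p}--\ref{l:3agents} and Lemmas~\ref{l:any-partition}--\ref{l:unique-dictator-large}.

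The gap is in your final inductive step. You propose to merge two adjacent locations, apply the inductive hypothesis to the resulting instance with fewer distinct locations, and then let partial group strategyproofness ``propagate the verdict back'' to $\vec{x}$. That propagation does not work: partial group strategyproofness only gives the one-sided inequality $\cost[x,F(\vec{x})]\leq\cost[x,F(\vec{x}_{-S},\vec{y}_S)]$ for the coalition that moved, so knowing $F$ on the merged instance tells you only that $F(\vec{x})$ has a facility within distance $|x_1-x_2|$ of the merged coalition's true location --- it does not determine $F(\vec{x})$, and in particular does not force a facility at $\min\vec{x}$ in the two-extremes branch or at $x_j$ in the dictator branch. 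The paper's argument runs in the opposite direction: it starts from a hypothetical bad instance $\vec{x}$, locates a hole $(l,r)$ in the relevant image set, and collapses agents \emph{onto a point where a facility of $F$ already sits} (so that strategyproofness pins the facility there as each agent moves), thereby manufacturing a 3-location instance that contradicts Theorem~\ref{thm:3locations}. In the subcase where the offending facility lies outside the extremes, it additionally takes $\vec{x}$ to be a counterexample with the \emph{minimum} number of distinct locations, uses that minimality (not an inductive transfer) to bound the hole at the leftmost location by the second location $b$, and then moves the leftmost coalition into the right half of that hole to force a facility strictly inside the extremes. You would need to replace your ``merge and propagate'' step with an argument of this contrapositive, minimal-counterexample type; as written, the inductive step is the one part of your plan that would fail.
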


Notably, we are aware of only two nice mechanisms for 2-Facility Location with $n \geq 4$ agents, one for each case of Theorem~\ref{thm:2fac-gen}.
The {\sc Dictatorial} mechanism chooses a dictator $j$, and for each instance $\vec{x}$, allocates a facility to $x_j$. Then, it considers the distance of the dictator to the leftmost and to the rightmost location, $d_l = |\min\vec{x} - x_j|$ and $d_r = |\max\vec{x} - x_j|$, respectively. The second facility is placed at $x_j - \max \{ d_l, 2 d_r \}$, if $d_l > d_r$, and to $x_j + \max \{ d_r, 2 d_l \}$, otherwise.
{\sc Dictatorial} is an adaptation of the mechanism of \cite{LSWZ10} for the circle. As in \cite[Section~5]{LSWZ10}, it can be shown that {\sc Dictatorial} is strategyproof and $(n-1)$-approximate for the line.
The {\sc Two-Extremes} mechanism places the facilities at $(\min\vec{x}, \max\vec{x})$, for all instances $\vec{x}$, and is group strategyproof, anonymous, and $(n-2)$-approximate \cite{PT09}.
By Theorem~\ref{thm:2fac-gen}, {\sc Two-Extremes} is the only anonymous nice mechanism for 2-Facility Location with $n \geq 5$ agents and its approximation ratio is best possible. 

\begin{corollary}\label{cor:2fac-anonymous}
A deterministic anonymous mechanism $F$ for 2-Facility Location with $n \geq 5$ agents is strategyproof and has a bounded approximation ratio if and only if for all instances $\vec{x}$, $F(\vec{x}) = ( \min \vec{x}, \max \vec{x} )$.
\end{corollary}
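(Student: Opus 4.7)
My plan is to derive the corollary directly from Theorem~\ref{thm:2fac-gen}, with only a short anonymity argument needed to eliminate the dictator case. The ``if'' direction requires nothing beyond invoking the properties of {\sc Two-Extremes} already recalled in Section~\ref{s:2facilities}: the mechanism that outputs $(\min\vec{x},\max\vec{x})$ on every instance is anonymous, (group) strategyproof, and $(n-2)$-approximate, so in particular it is a deterministic anonymous nice mechanism.

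For the ``only if'' direction, I would start from an arbitrary deterministic anonymous nice mechanism $F$ for 2-Facility Location with $n \geq 5$ agents and immediately apply Theorem~\ref{thm:2fac-gen}. This gives the dichotomy: either $F(\vec{x}) = (\min\vec{x},\max\vec{x})$ for all $\vec{x}$, in which case we are done, or there exists a unique dictator $j \in N$ with $x_j \in F(\vec{x})$ for every instance $\vec{x}$. The only thing left to show is that the second alternative is incompatible with anonymity whenever $n \geq 3$.

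To rule it out, I would fix any instance $\vec{x}$ with at least three pairwise distinct coordinates (trivially possible since $n \geq 5$; e.g., $\vec{x} = (1,2,\ldots,n)$). For each agent $k \in N$, let $\pi_k$ be the transposition swapping $j$ and $k$, and let $\vec{x}^{(k)} = (x_{\pi_k(1)},\ldots,x_{\pi_k(n)})$; in particular the $j$-th coordinate of $\vec{x}^{(k)}$ equals $x_k$. The dictator property applied to $\vec{x}^{(k)}$ yields $x_k \in F(\vec{x}^{(k)})$, while anonymity yields $F(\vec{x}^{(k)}) = F(\vec{x})$, so $x_k \in F(\vec{x})$ for every $k \in N$. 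Hence $F(\vec{x})$ would have to contain all coordinates of $\vec{x}$, in particular at least three distinct points, contradicting the fact that $F(\vec{x})$ consists of only two facility locations.

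I do not foresee any real obstacle: once Theorem~\ref{thm:2fac-gen} is available, the anonymity argument is a one-line permutation trick exploiting the gap between the two possible cases, and the corollary follows at once.
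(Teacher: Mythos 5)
Your proof is correct and follows the same route the paper intends: the paper states this corollary as an immediate consequence of Theorem~\ref{thm:2fac-gen}, leaving implicit exactly the permutation argument you spell out (a dictator $j$ together with anonymity would force $x_k \in F(\vec{x})$ for every $k$, which is impossible once the instance has three distinct locations). The ``if'' direction via the known properties of {\sc Two-Extremes} also matches the paper.
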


\begin{corollary}\label{cor:2fac-approx}
Any deterministic strategyproof mechanism $F$ for 2-Facility Location with $n \geq 5$ agents has an approximation ratio of at least $n-2$.
\end{corollary}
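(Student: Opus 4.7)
My plan is to prove Corollary~\ref{cor:2fac-approx} by contrapositive, using Theorem~\ref{thm:2fac-gen} as a black box. Suppose $F$ is a deterministic strategyproof mechanism whose approximation ratio $\rho$ satisfies $\rho < n-2$; since $n-2$ is a function of $n$, $\rho$ is bounded, so $F$ is nice, and by Theorem~\ref{thm:2fac-gen} we are in one of two cases: either $F(\vec{x}) = (\min\vec{x}, \max\vec{x})$ for every instance $\vec{x}$ (Two-Extremes), or $F$ admits a unique dictator $j$ with $x_j \in F(\vec{x})$ for every $\vec{x}$. In each case I will exhibit a single instance whose approximation ratio is at least $n-2$, contradicting $\rho < n-2$.

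For the Two-Extremes case, I would use the standard ``interior cluster'' instance $\vec{x}$ with one agent at $0$, $n-2$ agents at $1/2$, and one agent at $1$. The mechanism outputs $F(\vec{x}) = (0,1)$, so each agent at $1/2$ pays $1/2$ and $\cost[F(\vec{x})] = (n-2)/2$. Placing facilities at $\{0, 1/2\}$ instead costs only $1/2$ (only the rightmost agent pays), so the ratio on $\vec{x}$ is exactly $n-2$.

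For the dictator case, I would consider the instance $\vec{x}$ with $x_j = 0$, one agent placed at $-(n-2)$, and the remaining $n-2$ agents placed at $1$. Placing facilities at $-(n-2)$ and $1$ yields optimal cost $1$ (only the dictator pays). Since $0 \in F(\vec{x})$, I write $F(\vec{x}) = (F_1, F_2)$ with $F_1 \leq F_2$ and $0 \in \{F_1, F_2\}$. If $F_1 = 0$, then the agent at $-(n-2)$ has distance $n-2$ to the facility at $0$ and distance $(n-2) + F_2 \geq n-2$ to $F_2 \geq 0$, so its individual cost alone contributes $n-2$ and $\cost[F(\vec{x})] \geq n-2$. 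Symmetrically, if $F_2 = 0$, then each of the $n-2$ agents at $1$ has distance $1$ to the facility at $0$ and distance $1 - F_1 \geq 1$ to $F_1 \leq 0$, so each pays $1$ and $\cost[F(\vec{x})] \geq n-2$. Either way the approximation ratio on $\vec{x}$ is at least $n-2$.

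The only subtlety is calibrating the outlier location in the dictator case: choosing $-(n-2)$ is what makes the two subcases ($F_1 = 0$ and $F_2 = 0$) each contribute cost at least $n-2$, regardless of where the mechanism chooses to place its ``free'' facility, and it is precisely here that the hypothesis $n \geq 5$ ensures a nontrivial outlier distance. Beyond Theorem~\ref{thm:2fac-gen} no additional technical machinery is required, and there is no serious obstacle once the characterization is in hand.
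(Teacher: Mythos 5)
Your proof is correct and takes essentially the same route as the paper: both invoke Theorem~\ref{thm:2fac-gen} as a black box and then exhibit an instance with a cluster of $n-2$ agents whose cost under $F$ is at least $n-2$ times the optimum. The only cosmetic difference is that the paper uses a single instance $(0\sep\{j\}, \eps\sep N\setminus\{j,k\}, 1\sep\{k\})$ covering both cases at once (since $F_1(\vec{x})=0$ whether $F$ is {\sc Two-Extremes} or $j$ is the dictator), whereas you split into two instances; also, your closing remark is slightly off --- the hypothesis $n\ge 5$ is needed only to invoke the characterization, not to make the outlier distance nontrivial.
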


\begin{proof}
For any set $N$ of $n \geq 5$ agents, we let $\vec{x} = (0\sep \{j\}, \eps\sep N\setminus\{j, k\}, 1\sep \{k\})$, where $j \in N$ is the dictator of $F$, if any, $k \in N \setminus \{j\}$, and $\eps \in (0, 1/n)$. By Theorem~\ref{thm:2fac-gen}, if $F$ has a bounded approximation ratio, then $F_1(\vec{x}) = 0$. For convenience, we let $F_2(\vec{x}) = a$.
By Proposition~\ref{prop:middle}, $a \geq \eps$. If $\eps \leq a < 2\eps$, $\cost[F(\vec{x})] \geq 1-\eps$. If $a \geq 2\eps$, $\cost[F(\vec{x})] \geq (n-2)\eps$. Since the optimal cost is $\eps$, the approximation of $F$ is at least $n-2$. 
\qed\end{proof}

The crux, and the most technically involved part of the proof of Theorem~\ref{thm:2fac-gen} is to establish a characterization for nice mechanisms dealing with just 3 agents. In particular, we show that any nice mechanism for 2-Facility Location with 3 agents either places the facilities at the two extremes, or admits a \emph{partial dictator}, namely an agent allocated a facility either for all agent permutations or for all agent permutations but one.

\begin{theorem}\label{thm:3agents}
Let $F$ be any nice mechanism for 2-Facility Location with $n = 3$ agents. Then, there exist at most two permutations $\pi_1$, $\pi_2$ with $\pi_1(2) = \pi_2(2)$ such that for all instances $\vec{x}$ where the agents are arranged on the line according to $\pi_1$ or $\pi_2$, $\med \vec{x} \in F(\vec{x})$. For any other permutation $\pi$ and instance $\vec{x}$, where the agents are arranged on the line according to $\pi$, $F(\vec{x}) = ( \min \vec{x}, \max \vec{x} )$.
\end{theorem}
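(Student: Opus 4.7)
The plan is to follow the three-tier strategy outlined in the introduction, concentrating here on the 3-agent case. I would first understand $F$ on well-separated 3-agent instances, then extend the description to arbitrary 3-agent instances via strategyproofness, and finally perform a small combinatorial case analysis to read off the permutation structure claimed in the theorem.

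With $n=3$ and $K=2$, a well-separated instance has one isolated agent $i$ and a nearby pair $\{j,k\}$ sharing a facility. For an $(i|j,k)$-configuration, the bounded approximation ratio forces $F_1(\vec{x}) = x_i$ and attaches $F_2(\vec{x})$ to the coalition $\{j,k\}$, while Propositions~\ref{prop:right_cover} and~\ref{prop:left_cover} ensure that the identity of the nearby agent receiving the rightmost facility is preserved under monotone pushes inside the well-separated regime. Applying partial group strategyproofness to the coalition $\{j,k\}$, the induced single-facility subgame on the pair is strategyproof, so Moulin-style reasoning yields a threshold $t_i(x_i)$, depending on the identity and location of the isolated agent, such that $F_2(\vec{x})$ equals whichever of $x_j, x_k$ is selected by $\med(x_j, x_k, t_i(x_i))$. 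A symmetric analysis handles $(i,j|k)$-configurations.

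I would then lift this threshold description from well-separated to arbitrary 3-agent instances. The idea is that any 3-agent instance can be connected to a well-separated one by a sequence of single-agent perturbations, and strategyproofness forces the coalition's image set (and hence the threshold rule) to persist across the perturbations. This determines $F$ on all 3-agent instances in terms of one threshold per (isolated agent, side-of-isolation) pair. I then invoke Lemma~\ref{l:i-cases} to restrict each $t_i$ to two extreme values: either $t_i(x_i)=x_i$, in which case $\med(x_j,x_k,x_i)$ equals the nearby agent closer to $i$, so a facility lands on $\med\vec{x}$; or $t_i(x_i)=+\infty$, in which case $F_2$ sits on the nearby agent further from $i$, matching the Two-Extremes output. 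Consequently, for each of the six permutations $\pi$ of the three agents, instances arranged according to $\pi$ fall into exactly one of two buckets: $(\min\vec{x},\max\vec{x})$, or a facility stationed at $\med\vec{x}$.

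What remains, and what I expect to be the main obstacle, is a combinatorial argument showing that the ``median-served'' bucket contains at most two permutations and that any two such permutations share the same middle agent. I would prove this by confronting the threshold assignments arising from different choices of isolated agent on overlapping instances: assuming two exceptional permutations with distinct middle agents, consider a profile straddling the boundary between them and exhibit either a profitable single-agent deviation (contradicting strategyproofness of the extended image-set description) or a configuration where both facilities are forced to coincide on the same point, contradicting the bounded approximation ratio. Once this constraint is in place, enumerating the compatible binary choices of the six thresholds yields exactly the three possibilities of Theorem~\ref{thm:3agents}: Two-Extremes everywhere, a partial dictator who is the median on one permutation, or a partial dictator who is the median on two permutations sharing that middle agent.
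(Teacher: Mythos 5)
Your plan reproduces the paper's own three-stage argument almost verbatim: the Moulin-style threshold on well-separated instances (Lemmas~\ref{l:single-p} and~\ref{l:i-separated}), the extension to general instances (Lemmas~\ref{l:non-separated}--\ref{l:approx}), the collapse of each threshold to one of the two extreme values (Lemma~\ref{l:i-cases}), and the closing case analysis on which agents impose a partial dictator (Lemmas~\ref{l:always-a}--\ref{l:3agents}). Two small cautions: the lifting to general instances is not a matter of strategyproofness ``persisting'' image sets under perturbations but of hole-in-image-set contradictions that use the bounded approximation ratio essentially, and on well-separated instances the rule genuinely outputs $\med(p,x_j,x_k)$ --- which can be the threshold $p$ itself --- until Lemma~\ref{l:i-cases} eliminates that middle case; otherwise your outline matches the paper's proof.
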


In addition to extending the ideas of \cite{Moul80} to 2-Facility Location games, the proof of Theorem~\ref{thm:3agents} makes, at several places, a novel use of locality and of the structure of the image sets of nice mechanisms.
Also, we highlight that the notion of a partial dictator is essential. The {\sc Combined} mechanism for 3 agents chooses a permutation $(i, j, k)$ of the agents, and for each instance $\vec{x}$, places the facilities using {\sc Two-Extremes}, if $x_i < x_k$, and using {\sc Dictatorial} with dictator $j$, otherwise. Thus, {\sc Combined} admits a partial dictator, is strategyproof, and achieves an approximation ratio of 2.

Using the notion of partial group strategyproofness, we extend Theorem~\ref{thm:3agents} to 3-location instances (Corollary~\ref{cor:3agents}).
The next step is to show that when applied to 3-location instances with $n \geq 5$ agents, nice mechanisms do not have the option of a partial dictator. More formally, in Section~\ref{s:3locations}, we show the following:

\begin{theorem}\label{thm:3locations}
Let $N$ be a set of $n \geq 5$ agents, and let $F$ be any nice mechanism for 2-Facility Location applied to instances in $\Ithr(N)$. Then, either there exists a unique dictator $j \in N$ such that for all instances $\vec{x} \in \Ithr(N)$, $x_j \in F(\vec{x})$, or for all instances $\vec{x} \in \Ithr(N)$, $F(\vec{x}) = ( \min \vec{x}, \max \vec{x} )$.
\end{theorem}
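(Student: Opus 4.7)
The plan is to apply the 3-location analogue of Theorem~\ref{thm:3agents} (namely, Corollary~\ref{cor:3agents}) to each instance in $\Ithr(N)$, and then use Lemma~\ref{l:any-partition} together with the hypothesis $n \geq 5$ to rule out the partial-dictator option in favor of either a global dictator or the two-extremes outcome. Via partial group strategyproofness, each coalition $N_i$ behaves as a single ``super-agent'' at location $x_i$, so Corollary~\ref{cor:3agents} applied to a 3-location instance $\vec{x} = (x_1\sep N_1, x_2\sep N_2, x_3\sep N_3)$ with $x_1 < x_2 < x_3$ yields one of two alternatives: either $F(\vec{x}) = (\min\vec{x},\max\vec{x})$, or the middle coalition $N_2$ acts as a partial dictator and $F$ places a facility at $x_2$. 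If the first alternative holds for every instance in $\Ithr(N)$, we are done, so we may assume some $\vec{x}^\star$ falls under the partial-dictator case.

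Next I would use Lemma~\ref{l:any-partition} to repeatedly transfer agents between the three coalitions without changing $F$'s output. The assumption $n \geq 5$ is what buys enough room: with five agents distributed among only three occupied locations, some coalition always has at least two members, so there is slack to reassign agents while keeping all three coalitions non-empty. The idea is to systematically remove candidate agents from the middle coalition of $\vec{x}^\star$ one by one; Corollary~\ref{cor:3agents} applied to each resulting 3-location instance forces the new outcome to be either ``facility at $x_2$'' (the removed agent was not the cause) or ``two-extremes'' (the removed agent was the cause). The uniqueness of the partial dictator from Theorem~\ref{thm:3agents} then pins down a single agent $j \in N$ whose presence in the middle coalition forces the facility to its location, and Lemma~\ref{l:any-partition} lets me relocate $j$ to $x_1$ or $x_3$ and verify that $x_j$ continues to receive a facility in every such rearrangement.

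The main obstacle, and what genuinely requires $n \geq 5$ rather than $n = 3$, is promoting this agent $j$ to a dictator for \emph{all} of $\Ithr(N)$, including arrangements where $j$ is not in the geographical middle coalition. Here I would argue by contradiction: suppose some $\vec{y} \in \Ithr(N)$ satisfies $x_j \notin F(\vec{y})$. Then, by Corollary~\ref{cor:3agents}, either $\vec{y}$ is a two-extremes instance or has a different partial dictator $j' \neq j$. Using Lemma~\ref{l:any-partition} to gradually transform $\vec{x}^\star$ into $\vec{y}$ through a sequence of 3-location instances with controlled coalition sizes (feasible because $n \geq 5$ leaves room for single-agent transfers at every step), the dichotomy of Corollary~\ref{cor:3agents} would have to ``jump'' somewhere along the sequence, yielding an intermediate instance in which moving a single agent abruptly changes the outcome from ``facility at middle location'' to ``two-extremes'' or vice versa. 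I would then show such a jump contradicts either strategyproofness of $F$ (the moving agent or its old/new coalition can profitably deviate) or the bounded approximation ratio (by placing the extremes very far from the middle). Threading this local partial-dictator structure across all arrangements into a single global dictator is the heart of the proof and the step whose execution I expect to require the most care.
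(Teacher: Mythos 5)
Your overall architecture matches the paper's: Corollary~\ref{cor:3agents} gives the per-arrangement dichotomy, Lemma~\ref{l:any-partition} moves agents between coalitions, and $n \geq 5$ is what kills the partial-dictator option. But the step where you ``remove candidate agents from the middle coalition one by one'' to identify the single agent $j$ is where the argument genuinely breaks. The dichotomy of Corollary~\ref{cor:3agents} identifies a dictator \emph{coalition}, and Lemma~\ref{l:any-partition} only propagates dictatorship to \emph{supersets}; nothing you have cited rules out the possibility that the dictatorship is an irreducibly collective property, i.e., that there is a minimal dictator coalition $N'_2$ with $|N'_2| \geq 2$ such that $N'_2 \setminus \{j\}$ fails to be a dictator for every $j \in N'_2$. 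In that scenario your elimination procedure declares every removed agent ``the cause'' and pins down nothing, and the uniqueness in Theorem~\ref{thm:3agents} (which is uniqueness of the dictating \emph{coalition} for a fixed partition) does not rescue it. The paper's proof (Lemma~\ref{l:unique-dictator-small}) confronts exactly this case: it shows that if such a minimal coalition of size $\geq 2$ existed, then for every $j \in N'_2$ and every $i \notin N'_2$ the pair $\{j,i\}$ would itself be a dictator --- a claim needing its own image-set argument (send agent $i$ to a far location $r$ and show no placement of the second facility is consistent with strategyproofness) --- and then derives a contradiction from the existence of two \emph{disjoint} dictator pairs via the instance $(0\sep N\setminus\{j_1,i_1,j_2,i_2\},\, 1\sep\{j_1,i_1\},\, 1+\eps\sep\{j_2,i_2\})$, whose cost blows up the approximation ratio. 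This disjoint-dictators contradiction is the missing idea; without it the reduction from a dictator coalition to a single dictator agent does not go through.

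A second, smaller gap: Lemma~\ref{l:any-partition} requires the middle coalition to have size at most $n-3$, so your argument silently excludes the 3-location instances in which one coalition has $n-2$ agents and the other two are singletons (the class $\IthrL(N)$). If the only instances where the middle location receives a facility were of this shape, none of your steps would apply directly; the paper needs a separate argument (Claim~\ref{cl:remove-agent-claim} inside Lemma~\ref{l:unique-dictator-large}) to show that such an instance still yields a dictator coalition of size at most $n-3$ and hence feeds back into the main argument. Finally, your closing paragraph about interpolating between $\vec{x}^\star$ and $\vec{y}$ and locating a ``jump'' is essentially re-proving Lemma~\ref{l:any-partition}, which you are entitled simply to cite; the real work lies in the two points above.
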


Finally, in Section~\ref{s:2fac-gen}, we employ induction on the number of agents, and extend Theorem~\ref{thm:3locations} to general instances with $n \geq 5$ agents, thus concluding the proof of Theorem~\ref{thm:2fac-gen}.
In the next three sections, we present a detailed proof of Theorem~\ref{thm:2fac-gen}. Throughout, we usually omit any quantification of $F$, with the understanding that $F$ denotes a nice mechanism for 2-Facility Location applied to the relevant class of instances.

\section{Strategyproof Allocation of 2 Facilities to 3 Agents: The Proof of Theorem~\ref{thm:3agents}}
\label{s:3agents}

Throughout this section, we use the indices $i, j, k$ to implicitly define a permutation of the agents. We mostly use the convention that $i$ denotes the leftmost agent, $j$ denotes the middle agent, and $k$ denotes the rightmost agent. 

We recall that given a nice mechanism $F$ with approximation ratio $\rho$ for $3$ agents, a $3$-agent instance $\vec{x}$ is \emph{$(i | j, k)$-well-separated} if $x_i < x_j < x_k$ and $\rho(x_k - x_j) < x_j - x_i$. Similarly, $\vec{x}$ is \emph{$(i, j | k)$-well-separated} if $x_i < x_j < x_k$ and $\rho(x_j - x_i) < x_k - x_j$.
A $3$-agent instance $\vec{x}$ is \emph{$i$-left-well-separated} if $\vec{x}$ is either $(i|j,k)$-well-separated or $(i|k,j)$-well-separated, and is \emph{$k$-right-well-separated} if it is either $(i,j|k)$-well-separated or $(j,i|k)$-well-separated. Moreover, a $3$-agent instance $\vec{x}$ is \emph{$i$-well-separated} if $\vec{x}$ is either $i$-left-well-separated or $i$-right-well-separated.

In the following, we let $\uparrow$ and $\downarrow$ denote the largest and the smallest element, respectively, of the affinely extended real line. Hence, $\uparrow$ is greater than any real number, and $\downarrow$ is less than any real number.

\subsection{Outline of the Proof}
\label{s:outline}

At a high-level, the proof of Theorem~\ref{thm:3agents} proceeds by gradually restricting the possible outcomes of a nice mechanism, until it reaches the desired conclusion.

As a first step, we consider the behavior of nice mechanisms for well-separated instances (Section~\ref{s:well-separated}). Since the mechanism has a bounded approximation ratio, for any $i$-well-separated instance, one facility serves the isolated agent $i$, and the other facility is placed between the locations of the two nearby agents $j$, $k$ (Proposition~\ref{prop:interval}). Thus, building on the characterization of \cite{Moul80}, we show that for any $i$-well-separated instance, the facility serving agents $j$ and $k$ is allocated by a generalized median voter scheme (GMVS) (see e.g., \cite[Definition~10.3]{SV07}) whose characteristic threshold may depend on the identity $i$ and the location $a$ of the isolated agent (Lemma~\ref{l:single-p} and Lemma~\ref{l:i-separated}). A bit more formally, we show, in Lemma~\ref{l:i-separated}, that any agent-location pair $(i, a)$ specifies a unique threshold $p \in [a, +\infty) \union \{ \uparrow \}$ and a preferred agent $j$, different from $i$, that fully determine the location of the rightmost facility for all $i$-left-well-separated instances $\vec{x}$ with $x_i = a$ (see also Fig.~\ref{fig:allocation}.a; by symmetry, the same holds for $i$-right-well-separated instances and the leftmost facility, though possibly with different values of $p$ and $j$). Moreover, the allocation of the rightmost facility becomes simple for the two extreme values of the $p$: if $p = a$, the preferred agent $j$ serves as a dictator imposed by $i$ for all $i$-left-well-separated instances, while if $p =\,\uparrow$, the rightmost facility is placed at the rightmost location.

\insertfig{0.85\textwidth}{plot}{\label{fig:allocation}%
(a) The location of $F_2(\vec{x})$ for all $i$-left-well-separated instances $\vec{x}$ with $x_i = a$. We let $j$ be the preferred agent and $p$ be the threshold of $(i, a)$ for $i$-left-well-separated instances. The location of agent $j$ (resp. $k$) is on the $x$-axis (resp. $y$-axis). The area around the line $x_j = x_k$ includes all $i$-left-well-separated instances. For instances in the dark grey area (where $x_j \leq x_k \leq p$), the rightmost facility is at $x_k$, for instances in the black triangle (where $x_j \leq p \leq x_k$), the rightmost facility is at $p$, and for instances in the light grey area (where either $x_j \geq p$ or $x_k \leq x_j \leq p$), the rightmost facility is at $x_j$. 
(b) We consider instances $\vec{x}$ with $x_i = a < x_j, x_k$, which are not necessarily well-separated. As in (a), we let $j$ be the preferred agent and $p$ be the threshold of $(i, a)$.
The plot now depicts for which instances a facility (not necessarily the rightmost one) is placed at either $x_j$, or $x_k$, or $p$.
%
%
Lemma~\ref{l:i-cases} shows that due to the bounded approximation ratio of $F$, the black triangle does not exist, and either $p = a$ or $p =\,\uparrow$. If $p = a$, we have light grey (i.e., $x_j$) everywhere, while if $p = \,\uparrow$, we have dark grey (i.e., $x_k$) above the line $x_j = x_k$, and light grey (i.e., $x_j$) below.
}

The key step in the proof is to show, in Section~\ref{s:gen-instances}, that the threshold $p$ of any agent-location pair $(i, a)$ can be either $a$ or $\uparrow$ (resp. either $a$ or $\downarrow$ if $i$ is the rightmost agent).
To this end, we first extend the allocation of Lemma~\ref{l:i-separated} to general instances with $i$ as the leftmost (resp. rightmost) agent (see Fig.~\ref{fig:allocation}.b).
%
Thus, if the preferred agent $j$ is located on the right (resp. left) of the threshold $p$, she essentially serves as a partial dictator, imposed by the leftmost (resp. rightmost) agent, for the corresponding permutation of agents.

As consequence, we obtain that the thresholds of the two allocation rules (one imposed by the leftmost agent and one imposed by the rightmost agent) always fall in the two extremes: either $a$ or $\uparrow$ for the leftmost agent (resp. either $a$ or $\downarrow$ for the rightmost agent). Otherwise, there would exist instances with two different preferred agents, essentially serving as two different partial dictators, which would lead to an unbounded approximation ratio (Lemma~\ref{l:i-cases}). Intuitively, the case where $p = a$ corresponds to the existence of a partial dictator, while the case where $p =\,\uparrow$ (resp. $p =\,\downarrow$) corresponds to placing the facilities at the two extremes.


Building on Lemma~\ref{l:i-cases}, we show, in Section~\ref{s:3agent-fin}, that the thresholds of the two allocation rules, one for the leftmost and one for the rightmost agent, can only depend on their identity, and not on their location (Lemma~\ref{l:always-a}). Moreover, if an agent $i$ imposes a partial dictator, the third agent agrees with $i$ not only on the existence of a partial dictator, but also on the dictator's identity (Lemma~\ref{l:others-a}), and the partial dictator is unique (Lemma~\ref{l:only-dictator}).
Therefore, every nice mechanism is essentially characterized by whether there are two agents that agree on imposing the third agent as a partial dictator or not. Examining all possible cases, we conclude that every nice mechanism $F$ either always places the facilities at the two extremes, or admits a partial dictator $j$ (Lemma~\ref{l:3agents}).
In the latter case, the partial dictator $j$ is identified by any instance $\vec{x}$, with $x_i < x_j < x_k$, such that $x_j \in F(\vec{x})$. Then, $F$ allocates a facility to agent $j$ for all instances $\vec{y}$ with $y_i < y_j < y_k$, and possibly for all instances $\vec{y}$ with $y_k < y_j < y_i$. For all other instances, $F$ places the facilities at two extremes.

\subsection{Well-Separated Instances}
\label{s:well-separated}

We proceed to a detailed consideration of the behavior of a nice mechanism on well-separated instances.
For simplicity and brevity, we mostly discuss left-well-separated instances here, for which we state and prove all our technical claims. It is straightforward to verify that the symmetric statements of all our lemmas, propositions, and technical arguments hold for right-well-separated instances.

For any $(i | j, k)$-well-separated instance $\vec{x}$, the leftmost facility $F_1(\vec{x})$ of a nice mechanism $F$ serves the isolated agent $i$, and the rightmost facility $F_2(\vec{x})$ is allocated in $[x_j, x_k]$.
Intuitively, with both the order of $j$ and $k$ and the range of $F_2(\vec{x})$ fixed, the restriction of $F$ to the $2$-agent subinstance $\vec{x}_{-i}$ should behave like an anonymous strategyproof mechanism that places a single facility in $[x_j, x_k]$. Therefore, the rightmost facility should be allocated by a median voter scheme applied to $x_j$ and $x_k$ (see e.g. \cite[Theorem~10.2]{SV07}). The following lemma 
formalizes this intuition:


\begin{lemma}\label{l:single-p}
For any agent $i \in \{1, 2, 3\}$ and any location $a$, there exists a unique threshold $p \in [a, +\infty) \union \{ \uparrow \}$ such that for all $(i | j, k)$-well-separated instances $\vec{x}$ with $x_i = a$, it holds that $F_2(\vec{x}) = \med(p, x_j, x_k)$.
\end{lemma}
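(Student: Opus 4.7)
The plan is to fix $i \in \{1,2,3\}$ and a location $a$, and analyze the induced mapping $g(x_j, x_k) := F_2(a, x_j, x_k)$ on the $(i|j,k)$-well-separated region
\[
D_a := \{(x_j, x_k) \in \reals^2 : a < x_j \leq x_k,\ \rho(x_k - x_j) < x_j - a\}.
\]
I aim to reduce the behavior of $g$ on $D_a$ to that of a 2-agent 1-facility strategyproof mechanism for $j$ and $k$, and then extract a single phantom in the spirit of \cite{Moul80}.

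First I set up the reduction. Proposition~\ref{prop:interval} gives $g(x_j, x_k) \in [x_j, x_k]$ and Proposition~\ref{prop:middle} gives $F_1(a, x_j, x_k) \leq x_j$. The optimal cost of $(a, x_j, x_k)$ is $x_k - x_j$, so the bounded approximation ratio $\rho$ of $F$ forces $|F_1 - a| \leq (\rho-1)(x_k - x_j)$; combined with the well-separated inequality $\rho(x_k - x_j) < x_j - a$ this yields $(F_1 + F_2)/2 < x_j$, so that both $j$ and $k$ are served by $F_2$ on $D_a$. Unanimity $g(t,t) = t$ follows from $F(a, t, t) = (a, t)$, and strategyproofness of $F$ against within-$D_a$ reports of $j$ or $k$ translates to strategyproofness of $g$ as a 1-facility rule for each of the two agents separately. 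In particular, $g$ is weakly monotone non-decreasing in each coordinate.

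Next I pin down the shape of $g$. For fixed $(a, x_j)$, the image set $I_k(a, x_j)$ is a union of closed intervals with only bounded holes (since $F$ is nice), and $g(x_j, x_k)$ is the nearest point of $I_k(a, x_j)$ to $x_k$. Unanimity gives $x_j \in I_k(a, x_j)$. The key constraint is the prohibition $g(x_j, x_k) \leq x_k$ from Proposition~\ref{prop:interval}: any hole $(l, r)$ of $I_k(a, x_j)$ whose midpoint falls strictly inside the well-separated range would produce, for some $x_k$ slightly to the right of $(l+r)/2$, the value $g(x_j, x_k) = r > x_k$, a contradiction. Hence for every $(x_j, x_k) \in D_a$, $g(x_j, x_k) = \min(p(x_j), x_k)$, where $p(x_j) \in [x_j, +\infty) \cup \{\uparrow\}$ is the supremum of $g(x_j, \cdot)$ over the $x_k$-slice of $D_a$.

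Finally I show that $p(x_j)$ depends only on $(i, a)$. Given admissible $x_j^1 < x_j^2$, I pick a common $x_k$ in both $D_a$-slices that exceeds whichever of $p(x_j^1), p(x_j^2)$ is finite, so that $g(x_j^?, x_k) = p(x_j^?)$; the two-way strategyproofness of $j$ (using that $j$ is served by $F_2$ in both instances) then forces $p(x_j^1) = p(x_j^2)$ in the intermediate regime. The two extremal cases are handled by Propositions~\ref{prop:right_cover} and \ref{prop:left_cover}: if $p(x_j^1) = a$ (the ``dictator $j$'' regime where $F_2 = x_j^1$), Proposition~\ref{prop:right_cover} propagates $p \equiv a$ to every larger $x_j$; dually, if $p(x_j^1) = \uparrow$ (``dictator $k$''), Proposition~\ref{prop:left_cover} propagates $p = \uparrow$ to every smaller $x_j$. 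Combining all three regimes produces a global $p \in [a, +\infty) \cup \{\uparrow\}$ with $F_2(\vec{x}) = \min(\max(p, x_j), x_k) = \med(p, x_j, x_k)$ for every $(i|j,k)$-well-separated $\vec{x}$ with $x_i = a$; the constraint $p \geq a$ is forced by $F_2 \geq x_j > a$, and uniqueness of $p$ is immediate from the formula. The main obstacle I expect is the last step: the well-separated $x_k$-slices differ across different $x_j$'s, so the witness $x_k$ exceeding the finite phantoms must be chosen carefully, and extracting a single phantom requires combining pairwise strategyproofness for intermediate values with the push-propagation propositions for the two dictator regimes.
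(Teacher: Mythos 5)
Your high-level plan --- reduce to a single-facility rule for $j$ and $k$ on each well-separated slice, show each slice is governed by a cap, and then glue the caps into one phantom --- is in the same Moulin-style spirit as the paper's proof, and your per-slice analysis (no hole of $I_k(x_i,x_j)$ can have its midpoint inside the well-separated range, by Proposition~\ref{prop:interval}) is sound. But the gluing step, which is the actual content of the lemma, has three concrete gaps. First, defining $p(x_j)$ as the supremum of $g(x_j,\cdot)$ over the slice $[x_j,\, x_j+(x_j-a)/\rho)$ makes $p(x_j)$ always finite and, when the mechanism places $F_2$ at $x_k$ throughout the slice, equal to the slice endpoint; these endpoints differ for different $x_j$, so the claimed constancy of $p(x_j)$ is false as stated (consider {\sc Two-Extremes}). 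You cannot tell from a single slice whether the cap sits at the slice end or beyond it, which is exactly why the paper isolates this as a separate case (Case~II) and needs the nontrivial Claim~\ref{cl:moving-backwards} to establish consistency there. Second, the ``common $x_k$'' witness need not exist: the slices of $x_j^1<x_j^2$ are disjoint as soon as $x_j^2-x_j^1\geq (x_j^1-a)/\rho$, so a pairwise comparison at a shared $x_k$ only works locally and would have to be chained. Third, even where a common $x_k$ exists, two-way strategyproofness of $j$ only kills one ordering of the caps: if $c_1:=p(x_j^1)<x_j^2$ and $c_2:=p(x_j^2)\geq 2x_j^2-c_1$, neither deviation of $j$ is profitable, so pairwise strategyproofness of $j$ alone does not force $c_1=c_2$.

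All three gaps are repaired by the device you reserve only for the two ``dictator regimes'': from any single anchor instance $\vec{z}$ with $F_2(\vec{z})=p\in(z_j,z_k)$, strategyproofness gives $F_2(\vec{z}_{-k},p)=p$ and $F_2(\vec{z}_{-j},p)=p$, and then Propositions~\ref{prop:left_cover} and~\ref{prop:right_cover} propagate $F_2=x_k$ to every well-separated instance with $x_k\leq p$ and $F_2=x_j$ to every one with $x_j\geq p$ --- globally, with no slice-overlap requirement; this is the paper's Case~I, and it in particular disposes of the configuration in the third gap. The complementary Case~II (no instance ever has an interior outcome) is where your supremum definition breaks down and where the paper spends most of its effort. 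So the route is workable in outline, but completing it amounts to reconstructing the paper's interior/boundary dichotomy and its propagation arguments rather than the pairwise-strategyproofness gluing you describe.
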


\begin{proof}
Let $i \in \{ 1, 2, 3\}$ be any agent and $a \in \reals$ be any location. By Proposition~\ref{prop:interval}, for every $(i|j,k)$-well-separated instance $\vec{z}$ with $z_i = a$, $F_2(\vec{z}) \in [z_j, z_k]$. We distinguish between the case where there is an $(i|j,k)$-well-separated instance $\vec{z}$ with $z_i = a$ and $F_2(\vec{z}) \in (z_j, z_k)$, and the case where for all $(i|j,k)$-well-separated instances $\vec{z}$ with $z_i = a$, $F_2(\vec{z}) \in \{ z_j, z_k \}$.

\smallskip\noindent {\bf Case I: $\exists \vec{z}\,F_2(\vec{z}) \in (z_j, z_k)$.}
Let $\vec{z}$ be any $(i|j,k)$-well-separated instance with $z_i = a$ and $F_2(\vec{z}) \in (z_j, z_k)$. In this case, we let $p = F_2(\vec{z})$. Next, we show that for any $(i | j, k)$-well-separated instance $\vec{x}$ with $x_i = a$\,:
\begin{equation}\label{eq:p-median}
 F_2(\vec{x}) = \left\{\begin{array}{ll}
 x_k\ \ \ & \mbox{if $x_k \leq p$}\\
 x_j & \mbox{if $p \leq x_j$}\\
 p & \mbox{if $x_j < p < x_k$}
\end{array}\right.
\end{equation}
This implies the existence of a unique point $p \in (a, +\infty)$, such that
for all $(i | j, k)$-well-separated instances $\vec{x}$ with $x_i = a$, $F_2(\vec{x}) = \med(p, x_j, x_k)$.

For the first two cases of (\ref{eq:p-median}), we consider the instances $\vec{z}' = (\vec{z}_{-k}, p)$ and $\vec{z}'' = (\vec{z}_{-j}, p)$, which are both $(i | j, k)$-well-separated.
Since $F$ is strategyproof, $F_2(\vec{z}') = p = z'_k$. Then, by Proposition~\ref{prop:left_cover}, for every $(i | j, k)$-well-separated instance $\vec{x} = (\vec{z}_{-\{j,k\}}, x_j, x_k)$ with $x_k \leq p$, $F_2(\vec{x}) = x_k$.
Similarly, $F_2(\vec{z}'') = p = z''_j$, and by Proposition~\ref{prop:right_cover}, for every $(i | j, k)$-well-separated instance $\vec{x} = (\vec{z}_{-\{j,k\}}, x_j, x_k)$ with $x_j \geq p$, $F_2(\vec{x}) = x_j$.

For the third case of (\ref{eq:p-median}), we assume that there is an $(i | j, k)$-well-separated instance $\vec{x}$ with $x_i = a$, $x_j < p < x_k$, and $F_2(\vec{x}) = q \neq p$, and reach a contradiction (i.e., this case essentially establishes the uniqueness of $p$). Without loss of generality, we assume that $q > p$ (the case where $q < p$ is symmetric), and let $\eps \in (0, (q-p)/2)$ be appropriately small. To reach a contradiction, we exploit two $(i | j, k)$-well-separated instances, $\vec{y}$ and $\vec{y}'$, with $y_i = y'_i = a$, $y_j = x_j$, $y_k = q$, $y'_j = p$, and $y'_k = p+\eps$, such that $F_2(\vec{y}) = q = y_k$ and $F_2(\vec{y}') = p = y'_j$. Since $\vec{y}$ is an $(i | j, k)$-well-separated instance with $F_2(\vec{y}) = y_k$, Proposition~\ref{prop:left_cover} implies that for the $(i|j,k)$-well-separated instance $\vec{y}' = (\vec{y}_{-\{j,k\}}, p, p+\eps)$, with $y'_k < y_k$, $F_2(\vec{y}') = y'_k \neq p$, a contradiction.

To conclude the proof, it remains to show that indeed $F_2(\vec{y}) = q$ and $F_2(\vec{y}') = p$. To this end, we recall that $q \in [x_j, x_k]$, by Proposition~\ref{prop:interval}. For the former instance $\vec{y} = (\vec{x}_{-k}, q)$, we observe that it is $(i | j, k)$-well-separated because the distance of $x_j$ to $q$ is no greater than the distance of $x_j$ to $x_k$, and that $F_2(\vec{y}) = q$, because $F_2(\vec{x}) = q$ and $F$ is strategyproof. The latter instance $\vec{y}' = (\vec{z}_{-\{j, k\}}, p, p+\eps)$ is also $(i | j, k)$-well-separated, because $p > z_j$ and $\eps$ is chosen sufficiently small. Furthermore, $F_2(\vec{y}') = p$, because $F_2(\vec{z}) = p$, by hypothesis, $F_2(\vec{z}_{-j}, p) = p$, by $F$'s strategyproofness, and $F_2(\vec{z}_{-\{j,k\}}, p, p+\eps) = p$, by Proposition~\ref{prop:c_pulls_b}, because the instance $(\vec{z}_{-j}, p)$ is $(i|j,k)$-well-separated.

\smallskip\noindent {\bf Case II: $\forall \vec{z}\,F_2(\vec{z}) \in \{z_j, z_k\}$.}
Let $\vec{z}$ be any $(i|j,k)$-well-separated instance with $z_i = a$. We let $p = a$, if $F_2(\vec{z}) = z_j$\,%
\footnote{The uniqueness of $p$ for the case where $p = a$ follows from the fact that there are $(i|j,k)$-well-separated instances $\vec{z}$ with $z_i = a$ and $z_j$ arbitrarily close to $a$.},
and $p = \uparrow$, if $F_2(\vec{z}) = z_k$.
To conclude the proof, we show that if ($F_2(\vec{z}) = z_j$ and thus) $p = a$, for all $(i | j, k)$-well-separated instances $\vec{x}$ with $x_i = a$, $F_2(\vec{x}) = x_j$ (the case where $p = \uparrow$ and $F_2(\vec{x}) = x_k$ is symmetric).

To reach a contradiction, let us assume that there is a $(i|j,k)$-well-separated instance $\vec{x}$ with $x_i = a$ and $F_2(\vec{x}) \neq x_j$. Therefore, since we have assumed that $F_2(\vec{x}) \in \{x_j, x_k\}$, $F_2(\vec{x}) = x_k$. By Proposition~\ref{prop:b_pulls_c}, we can assume without loss of generality that $x_j$ can be arbitrarily close to $x_k$ (as long as $x_j$ stays on the left of $x_k$, so that $\vec{x}$ is $(i|j,k)$-well-separated). By Proposition~\ref{prop:right_cover}, for all $(i|j,k)$-well-separated instances $\vec{x}'$ with $x'_i = z_i = a$ and $z_j \leq x'_j$, $F_2(\vec{x}') = x'_j$. Therefore, we can assume that $x_j < z_j$, and since $x_j$ can be arbitrarily close to $x_k$, $x_j < x_k < z_j$. To conclude the proof, we show the following technical claim, which states that the existence of such an instance $\vec{x}$ implies that for all $(i|j,k)$-well-separated instances $\vec{y}$ with $y_i = a$ and $y_k > x_k$, $F_2(\vec{y}) = y_k$, a contradiction to the existence of $\vec{z}$.

\begin{myclaim}\label{cl:moving-backwards}
Let $F$ be any nice mechanism, let $i$ be any agent, and let $a$ be any location of $i$ such that for all $(i|j,k)$-well-separated instances $\vec{z}$ with $z_i = a$, $F_2(\vec{z}) \in \{z_j, z_k\}$. Then, if there exists an $(i|j,k)$-well-separated instance $\vec{x}$ with $x_i = a$ and $F_2(\vec{x}) = x_k$, for all $(i|j,k)$-well-separated instances $\vec{x}' = (\vec{x}_{-\{j,k\}}, x'_j, x'_k)$ with $x_k < x'_k$, $F_2(\vec{x}') = x'_k$.
\end{myclaim}

\begin{proof}[of Claim~\ref{cl:moving-backwards}]
We first show that for all $(i|j,k)$-well-separated instances $\vec{y} = (\vec{x}_{-\{j, k\}}, y_j, y_k)$ with $x_j \leq y_j < x_k < y_k$, $F_2(\vec{y}) = y_k$. For sake of contradiction, let us assume that there exists such an instance $\vec{y}$ for which $F_2(\vec{y}) \neq y_k$.
To establish a contradiction, we first observe that for the instance $\vec{x}' = (\vec{x}_{-j}, y_j)$, $F_2(\vec{x}') = x_k$, by Proposition~\ref{prop:b_pulls_c}, since $\vec{x}'$ is an $(i|j,k)$-well-separated instance.
Since $\vec{y} = (\vec{x}'_{-k}, y_k)$ and $F_2(\vec{y}) \neq y_k$, there exists a $y_k$-hole $(l, r)$ in the imageset $I_k(y_i, y_j)$ with $x_k \leq l < y_k$. Let $y'_k \in (l, y_k)$ be any point in the left half of the hole $(l, r)$, i.e. let $y'_k = \min\{ (y_k + 2l) / 3, (r+2l)/ 3 \}$. Due to $F$'s strategyproofness, $F_2(\vec{y}_{-k}, y'_k) = l$, because $l$ is the closest point to $y'_k$ in $I_k(y_i, y_j)$. This contradicts the hypothesis, since we have found an $(i|j,k)$-well-separated instance $\vec{y}' = (\vec{y}_{-k}, y'_k)$ with $F_2(\vec{y}') \not\in \{y'_j, y'_k\}$.

To conclude the proof, we inductively apply what we have shown above, i.e., that if for an $(i|j,k)$-well-separated instance $\vec{x}$, $F_2(\vec{x}) = x_k$, then for all $(i|j,k)$-well-separated instances $\vec{y} = (\vec{x}_{-\{j, k\}}, y_j, y_k)$ with $x_j \leq y_j < x_k < y_k$, $F_2(\vec{y}) = y_k$. The proof is similar to the proofs of Proposition~\ref{prop:right_cover} and Proposition~\ref{prop:left_cover}.

Let $x'_k > x_k$ be any point, let $d = x'_k - x_k$, let $\delta = (x_k - x_j)/2$, and let $\kappa = \ceil{d/\delta}$. For every $\l = 1, \ldots, \kappa, \kappa+1$, we inductively consider the instance $\vec{x}_\lambda = (\vec{x}_{-\{j, k\}}, x_j + (\lambda-1) \delta, x_k + (\lambda-1)\delta)$.
We observe that the instance $\vec{x}_\lambda$ is $(i | j, k)$-well-separated, because the distance of the locations of agents $j$ and $k$ is $2\delta$, while the distance of the locations of agents $i$ and $j$ is at least their distance in $\vec{x}$.
Therefore, for every $\l = 1, \ldots, \kappa, \kappa+1$, $F_2(\vec{x}_\lambda) = x_k + (\lambda-1)\delta$. Moreover, by Proposition~\ref{prop:left_cover}, for any $(i|j,k)$-well-separated instance $\vec{y} = (x_i, y_j, y_k)$ with $y_k \leq
x_k + \kappa\delta$, $F_2(\vec{y}) = y_k$. Therefore, since $x'_k \leq x_k + \kappa\delta$, for all $(i|j,k)$-well-separated instances $\vec{x}' = (\vec{x}_{-\{j,k\}}, x'_j, x'_k)$ with $x_k < x'_k$, $F_2(\vec{x}') = x'_k$.
\qed\end{proof}

With the proof of Claim~\ref{cl:moving-backwards}, we conclude the proof of the lemma.
\qed\end{proof}

At the conceptual level, the allocation of $F_2(\vec{x})$ to $i$-left-well-separated instances $\vec{x}$ corresponds to a GMVS applied to the subinstance $\vec{x}_{-i}$.
We recall that a GMVS applied to a set $N$ of agents is characterized by $2^{|N|}$ thresholds $\alpha_S$, one for each $S \subseteq N$ (see e.g. \cite[Definition~10.3]{SV07}), with $\alpha_\emptyset$ (resp. $\alpha_N$) equal to the smallest (resp. largest) value in the mechanism's range. Then, for every instance $\vec{x}$, the facility is allocated to $\max_{S \subseteq N}\min\{ \alpha_S, x_i \!:\! i \in S\}$.
Lemma~\ref{l:single-p} implies that there is a unique threshold $p_1$ (resp. $p_2$) such that for any $(i | j, k)$-well-separated (resp. $(i |k, j)$-well-separated) instance $\vec{x}$ with $x_i = a$, $F_2(\vec{x}) = \med(p_1, x_j, x_k)$ (resp. $F_2(\vec{x}) = \med(p_2, x_j, x_k)$).
Therefore, for any $i$-left-well-separated instance $\vec{x}$ with $x_i = a$, a GMVS applied to the subinstance $\vec{x}_{-i}$ is characterized by $\alpha_{\{j\}} = p_2$ and $\alpha_{\{k\}} = p_1$. Setting $\alpha_\emptyset = a$ and $\alpha_{\{j, k\}} =\,\uparrow$, it holds that
$F_2(\vec{x}) = \max\{ \min\{ x_j, p_2 \}, \min\{ x_k, p_1 \} \}$.

The following lemma 
shows that due to the bounded approximation ratio of $F$, either $p_1 =\,\uparrow$ or $p_2 =\,\uparrow$. If $p_2 =\,\uparrow$, $j$ has at least as much power as $k$, and becomes the preferred agent of the agent-location pair $(i, a)$ for $i$-left-well-separated instances. Then, the threshold $p$ of $(i, a)$ is equal to $p_1$, and $F_2(\vec{x}) = \max\{ x_j, \min\{ x_k, p \} \}$ (see also Fig.~\ref{fig:allocation}.a). We note that 
the threshold $p$ essentially quantifies how much more powerful is the preferred agent $j$ than the third agent $k$. At the two extremes, if $p = a$, $j$ serves as a dictator on the right of $i$, while if $p =\,\uparrow$, the rightmost facility is allocated to the maximum of $x_j$ and $x_k$.
Similarly, if $p_1 =\,\uparrow$, $k$ is the preferred agent of the agent-location pair $(i, a)$ for $i$-left-well-separated instances, and $p = p_2$. Then, $F_2(\vec{x}) = \max\{ \min\{ x_j, p \}, x_k, \}$.

%

\begin{lemma}\label{l:i-separated}
For any agent $i \in \{1, 2, 3\}$ and any location $a$, there are a unique threshold $p \in [a, +\infty) \union \{ \uparrow \}$ and a preferred agent $\ell \in \{1, 2, 3\} \setminus \{ i \}$ such that for any $i$-left-well-separated instance $\vec{x}$ with $x_i = a$, it holds that: 
\[ F_2(\vec{x}) = \left\{
    \begin{array}{ll}
        x_\ell   & \mbox{if $x_\ell \geq p$}\\[1pt]
        \med(p, x_j, x_k)\ \ \ &\mbox{otherwise}
    \end{array}\right.\]
\end{lemma}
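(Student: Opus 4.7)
The plan is to apply Lemma~\ref{l:single-p} in both orderings of the close pair to obtain two thresholds $p_1$ and $p_2$, argue that the bounded approximation ratio of $F$ forces at least one of them to equal $\uparrow$, and then read off $(p,\ell)$ and verify the formula.

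First, applying Lemma~\ref{l:single-p} with $j$ as the middle close agent yields a unique $p_1\in[a,+\infty)\cup\{\uparrow\}$ such that $F_2(\vec{x})=\med(p_1,x_j,x_k)$ on every $(i\mid j,k)$-well-separated $\vec{x}$ with $x_i=a$; applying it with $k$ as the middle agent yields a unique $p_2$ with the symmetric property on $(i\mid k,j)$-well-separated instances. Together these characterize $F_2$ on all $i$-left-well-separated instances via the two-voter GMVS from the paragraph preceding the lemma. The claimed $(p,\ell)$ description fits this GMVS precisely when one of $p_1,p_2$ equals $\uparrow$: if $p_2=\uparrow$, take $\ell=j$ and $p=p_1$, at which point the formula reproduces $\med(p_1,x_j,x_k)$ in $(i\mid j,k)$-well-separated instances and reduces to $x_j$ in $(i\mid k,j)$-well-separated instances (where $x_k<x_j$ makes $\med(\uparrow,x_j,x_k)=x_j$, and in the sub-case $x_j<p_1$ the formula's median also evaluates to $x_j$ since $x_k<x_j<p_1$). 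The case $p_1=\uparrow$ is symmetric, with $\ell=k$ and $p=p_2$. Uniqueness of $(p,\ell)$ is inherited from the uniqueness of $p_1,p_2$ in Lemma~\ref{l:single-p}.

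The main step is therefore to show that one of $p_1,p_2$ must equal $\uparrow$. I would argue by contradiction: assume both are finite, WLOG $p_1\leq p_2$. The key is that Lemma~\ref{l:single-p} forces $F_2=p_1$ on any $(i\mid j,k)$-well-separated instance with $x_j<p_1<x_k$ and, analogously, $F_2=p_2$ on any $(i\mid k,j)$-well-separated instance with $x_k<p_2<x_j$. Combining these rigid placements with Proposition~\ref{prop:right_cover} and Proposition~\ref{prop:left_cover}, which propagate the pinning of $F_2$ at a close-pair endpoint as the close pair is shifted, and with the image-set consistency that strategyproofness imposes across the two well-separated sub-regimes, one constructs instances on which no placement of $F_2$ available to the mechanism keeps the social cost within $\rho$ times the optimum. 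The amplification comes from taking the close pair near the larger threshold $p_2$ and iterating the cover propositions to shift the configuration while tracking how the pinned values of $F_2$ at $p_1$ and $p_2$ force increasingly large agent-to-facility distances.

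The hard part is exactly this contradiction. A naive image-set-hole argument for a fixed $x_j$ gives ratios of order $1+\rho/2$, which remain within $\rho$ for $\rho\geq 2$, so one cannot derive the required blow-up from a single threshold in isolation; the proof must exploit the simultaneous finiteness of both $p_1$ and $p_2$ together with the propagation guaranteed by the cover propositions, and I expect the main technical difficulty to be pinning down the exact instance where both threshold constraints conspire to make every $F_2$ placement costly relative to the optimum.
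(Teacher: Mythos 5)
Your overall architecture matches the paper's: invoke Lemma~\ref{l:single-p} on both orderings of the close pair to get thresholds $p_1$ (for $(i|j,k)$-instances) and $p_2$ (for $(i|k,j)$-instances), show that at least one of them is $\uparrow$, and then read off $(p,\ell)$ and check the formula by cases. Your verification of the formula given $p_2=\,\uparrow$ is correct, and uniqueness does follow from Lemma~\ref{l:single-p}. But the entire content of the lemma is the claim that one of $p_1,p_2$ equals $\uparrow$, and there you have a genuine gap: you describe a search for instances whose social cost exceeds $\rho$ times the optimum, concede that a single-threshold hole argument only yields ratios like $1+\rho/2$, and leave the construction of the decisive instance as an open difficulty. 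That step is not proved.

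The missing idea is that the contradiction is not an approximation-ratio blow-up at all; it is a direct violation of the median structure already established in Lemma~\ref{l:single-p}. Assume $p_1,p_2\in[a,+\infty)$ and take an $(i|j,k)$-well-separated instance $\vec{x}$ with $x_i=a$ and $x_j,x_k>\max\{p_1,p_2\}$. Then $F_2(\vec{x})=\med(p_1,x_j,x_k)=x_j$, so $x_k\notin I_k(x_i,x_j)$ and there is an $x_k$-hole $(l,r)$ with $r>x_k>p_2$. Moving agent $k$ to $r-\eps$ (in the right half of the hole) pins a facility at $r$ by strategyproofness; then moving agent $j$ to $r$ keeps a facility at $r$ since $r\in I_j(x_i,r-\eps)$. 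The resulting instance has agent $k$ at $r-\eps$ and agent $j$ at $r$, so it is $(i|k,j)$-well-separated, and Lemma~\ref{l:single-p} for that ordering forces $F_2=\med(p_2,r-\eps,r)=r-\eps\neq r$ because $p_2<r-\eps$. This two-move argument needs neither Propositions~\ref{prop:right_cover}/\ref{prop:left_cover} nor any cost amplification, and it is exactly what your proposal lacks; as written, your route would require a quantitative lower bound you yourself observe you cannot extract.
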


\begin{proof}
By Lemma~\ref{l:single-p}, for any agent $i \in \{1, 2, 3\}$ and any location $a \in \reals$\,:
\begin{itemize}
\item there is a unique $p_1 \in [a, +\infty) \union \{ \uparrow \}$ such that for any $(i | j, k)$-well-separated instance $\vec{x}$ with $x_i = a$, $F_2(\vec{x}) = \med(p_1, x_j, x_k)$, and

\item there is a unique $p_2 \in [a, +\infty) \union \{ \uparrow \}$ such that for any $(i | k, j)$-well-separated instance$\vec{x}$ with $x_i = a$, $F_2(\vec{x}) = \med(p_2, x_j, x_k)$.
\end{itemize}

We observe that it suffices to show that either $p_1 =\,\uparrow$ or $p_2 =\,\uparrow$. More specifically, if $p_1 =\,\uparrow$, then the preferred agent is $k$ and $p = p_2$. Then, if $x_k \geq p_2$, we distinguish between two cases depending on the order of $x_j$ and $x_k$. If $x_k \leq x_j$, then $\vec{x}$ is an $(i | k, j)$-well-separated instance, and $F_2(\vec{x})$ is located at the median of $p_2, x_k, x_j$, that is $x_k$. If $x_k > x_j$, $\vec{x}$ is an $(i | j, k)$-well-separated instance, and $F_2(\vec{x})$ is located at the median of $x_j, x_k, p_1$, that is $x_k$, because $p_1 =\,\uparrow$.
If $x_k < p_2$, we again distinguish between two cases. If $x_k \leq x_j$, then $\vec{x}$ is an $(i | k, j)$-well-separated instance, and $F_2(\vec{x}) = \med(p_2, x_k, x_j)$. If $x_k > x_j$, $\vec{x}$ is an $(i | j, k)$-well-separated instance, and $F_2(\vec{x})$ is located at the median of $x_j, x_k, p_1$, which coincides with $\med(x_j, x_k, p_2)$, because $p_1 =\,\uparrow$ and $p_2 > x_k > x_j$. If $p_2 =\,\uparrow$, then the preferred agent is $j$ and $p = p_1$, and the lemma follows from a similar analysis.

We proceed to establish that either $p_1 =\,\uparrow$ or $p_2 =\,\uparrow$. To reach a contradiction, we assume that both $p_1, p_2 \in [a, +\infty)$. Let $\vec{x}$ be an $(i| j, k)$-well-separated instance where both $x_j$ and $x_k$ exceed $\max\{ p_1, p_2\}$. Then, by Lemma~\ref{l:single-p}, $F_2(\vec{x}) = \med(p_1, x_j, x_k) = x_j$. Since $F_2(\vec{x}) \neq x_k$, there is a $x_k$-hole $(l, r)$ in the image set $I_k(x_i, x_j)$. For some appropriately small $\eps > 0$, we consider the instances $\vec{x}' = (\vec{x}_{-k}, r-\eps)$ and $\vec{x}'' = ({\vec{x}'}_{-j}, r)$. Since $r \in I_k(x_i, x_j)$, $F_2(\vec{x}') = r$. Since $r \in I_j(x_i, r-\eps)$, $F_2(\vec{x}'') = r$. On the other hand, since $\eps$ is chosen appropriately small, the instance $\vec{x}''$ is $(i|k, j)$-well-separated. Therefore, by Lemma~\ref{l:single-p}, $F_2(\vec{x}'') = \med(p_2, r-\eps, r) \neq r$, a contradiction.
\qed\end{proof}

By a symmetric argument, we can establish the symmetric version of Lemma~\ref{l:i-separated} for $i$-right-well-separated instances. The only difference is that $p \in \{ \downarrow \} \union (-\infty, a]$, and that $F_1(\vec{x}) = x_\ell$, if $x_\ell \leq p$, and $F_1(\vec{x}) = \med(p, x_j, x_k)$, otherwise. We highlight that an agent location pair $(i, a)$ may have a different preferred agent and a different threshold for $i$-left and $i$-right well-separated instances.

Lemma~\ref{l:i-separated} implies that every nice mechanism $F$ admits a function $\l_F$ (resp. $\r_F$) 
that maps any agent-location pair $(i, a)$ to a pair $(j, p)$, where $p \in [a, +\infty) \union \{ \uparrow \}$ (resp. $p \in \{ \downarrow \} \union (-\infty, a]$) is the unique threshold and $j \in \{1, 2, 3\} \setminus \{ i \}$ is the preferred agent of $(i, a)$ satisfying the condition of (resp. the symmetric version of) Lemma~\ref{l:i-separated} for $i$-left (resp. $i$-right) well-separated instances.
%
%
We note that the preferred agent is uniquely determined in the proof of Lemma~\ref{l:i-separated}, unless $p =\,\uparrow$ for $i$-left (resp. $p =\,\downarrow$ for $i$-right) well-separated instances. If $p = \,\uparrow$ (resp. $p =\,\downarrow$), any agent different from $i$ can play the role of the preferred agent, since this choice does not affect the allocation of $F_2(\vec{x})$ (resp. $F_1(\vec{x})$). For convenience, we simply write $\l(i, a)$ and $\r(i, a)$, 
and let $\l_p(i, a)$ and $\l_\ell(i, a)$ (resp. $\r_p(i, a)$ and $\r_\ell(i, a)$) denote the threshold $p$ and the preferred agent $j$ of $(i, a)$ for $i$-left (resp. $i$-right) well-separated instances.
Lemma~\ref{l:i-separated} implies that:


\begin{corollary}\label{cor:dict-location}
Let $\vec{x}$ be any $(i|j,k)$-well-separated instance such that $F_2(\vec{x}) = x_j$. Then $\l_\ell(i, x_i) = j$ and $\l_p(i, x_i) \leq x_j$.
\end{corollary}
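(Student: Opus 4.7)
The plan is to read off the corollary directly from Lemma~\ref{l:i-separated} by a short case analysis. Since $\vec{x}$ is $(i|j,k)$-well-separated it is, in particular, $i$-left-well-separated, so Lemma~\ref{l:i-separated} applies with $a = x_i$: it yields a preferred agent $\ell = \l_\ell(i,x_i) \in \{j,k\}$ and a threshold $p = \l_p(i,x_i) \in [x_i,+\infty) \cup \{\uparrow\}$ such that $F_2(\vec{x}) = x_\ell$ when $x_\ell \geq p$ and $F_2(\vec{x}) = \med(p,x_j,x_k)$ otherwise. The whole task is to decide which of $j,k$ the label $\ell$ denotes, and where $p$ sits relative to $x_j$, using the hypothesis $F_2(\vec{x}) = x_j$ together with $x_i < x_j < x_k$.

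First I will rule out the possibility $\ell = k$. In that branch, if $x_k \geq p$ the lemma gives $F_2(\vec{x}) = x_k$ directly, while if $x_k < p$ it gives $F_2(\vec{x}) = \med(p,x_j,x_k)$; since $x_j < x_k < p$, the median is $x_k$. Either way $F_2(\vec{x}) = x_k \neq x_j$, contradicting the hypothesis. Hence $\ell = j$, which is the first claim $\l_\ell(i,x_i) = j$.

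With $\ell = j$ in hand, I will show $p \leq x_j$. Suppose instead that $p > x_j$; then Lemma~\ref{l:i-separated} puts us in the ``otherwise'' branch and yields $F_2(\vec{x}) = \med(p,x_j,x_k)$. Using $x_j < x_k$ and $x_j < p$, this median equals $\min(p,x_k) > x_j$, contradicting $F_2(\vec{x}) = x_j$. So $x_j \geq p$, i.e., $\l_p(i,x_i) \leq x_j$, completing the proof.

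There is no real obstacle here: the corollary is essentially a look-up in the piecewise description provided by Lemma~\ref{l:i-separated}. The only thing that needs care is keeping track of which of the two non-$i$ agents plays the role of the preferred agent $\ell$ and which plays the role of the ``other'' argument in $\med(p,x_j,x_k)$, but the ordering $x_j < x_k$ pins both cases down unambiguously.
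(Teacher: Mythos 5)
Your proof is correct and follows essentially the same route as the paper's own argument: both rule out $\l_\ell(i,x_i)=k$ by checking that either branch of Lemma~\ref{l:i-separated} would force $F_2(\vec{x})=x_k$, and then observe that $p > x_j$ would put the instance in the median branch and yield $F_2(\vec{x}) = \med(p,x_j,x_k) > x_j$. The case analysis and conclusions match the paper's formal argument exactly.
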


\begin{proof}
We show that for any $(i|j,k)$-well-separated instance $\vec{x}$, the rightmost facility is allocated to the middle agent $j$ only if $j$ is the preferred agent of $(i, x_i)$ and is located on the left of $(i, x_i)$'s threshold. This can be easily verified by Fig.~\ref{fig:allocation}.a, where the rightmost facility is never allocated to $k$ below the $x_j = x_k$ line, where $k$ is the middle agent, and is allocated to the preferred agent $j$ above the $x_j = x_k$ line, where $j$ is the middle agent, only on the right of $p$.

We also give a more formal argument. For convenience, we let $p_i \equiv \l_p(i, x_i)$. We first observe that if $\l_\ell(i, x_i) = k$, $F_2(\vec{x}) = x_k$. This follows from Lemma~\ref{l:i-separated}, because $x_j < x_k$. More specifically, either $x_k \geq p_i$, and the rightmost facility is allocated to $k$, as the preferred agent, or $x_k < p_i$, and the rightmost facility is allocated to $x_k$, as the median of $p_i$, $x_j$, and $x_k$. Hence, $\l_\ell(i, x_i) = j$.
Moreover, if $p_i > x_j$, $F_2(\vec{x}) = \med(p_i, x_j, x_k) > x_j$, by the second case of Lemma~\ref{l:i-separated}. Therefore, $p_i \leq x_j$.
\qed\end{proof}


\subsection{General Instances and the Range of the Threshold $p$}
\label{s:gen-instances}

Next, we proceed to show that $\l_p(i, a) \in \{a, \uparrow\}$ (and by symmetry, $\r_p(i, a) \in \{a, \downarrow\}$). As before, we consider the instances from left to right, and state and prove our lemmas in terms of $\l(i, a)$. It is straightforward to verify that the symmetric statements hold for $\r(i, a)$.

We start with three useful technical lemmas that essentially extend the allocation of Lemma~\ref{l:i-separated} to instances that are not necessarily well-separated (see also Fig.~\ref{fig:allocation}.b).
%
%
The first lemma shows that the preferred agent essentially serves as a dictator when located on the right of the threshold $p$.
%

%

\begin{lemma}\label{l:non-separated}
Let $i$ be any agent and $a$ be any location, and let $\l(i, a) = (j, p)$. For any instance $\vec{x}$ with $a = x_i < \min\{x_j, x_k\}$, and $x_j \geq p$, it holds that $x_j \in F(\vec{x})$.
\end{lemma}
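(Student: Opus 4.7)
The plan is to extend the allocation rule of Lemma~\ref{l:i-separated}, which is only stated for $i$-left-well-separated instances, to all instances in which $i$ is the leftmost agent and $x_j \geq p$. I would first dispatch the well-separated case, then attack the non-well-separated case via an image-set argument.

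In the \textbf{well-separated case}, if $\vec{x}$ is $i$-left-well-separated, then since $\l_\ell(i,a) = j$ and $x_j \geq p$, Lemma~\ref{l:i-separated} directly gives $F_2(\vec{x}) = x_j$, so $x_j \in F(\vec{x})$.

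In the \textbf{non-well-separated case}, my plan is to reduce the claim to showing $x_j \in I_j(x_i, x_k)$. By strategyproofness of $F$ with respect to agent $j$ reporting her true location, the facility assigned to $j$ in $F(\vec{x})$ is the point of $I_j(x_i, x_k)$ closest to $x_j$; if $x_j \in I_j(x_i, x_k)$, then $x_j$ is itself this closest point and $x_j \in F(\vec{x})$. To exhibit a witness report $y$ for $j$ with $x_j \in F(x_i, y, x_k)$, I would first use Lemma~\ref{l:i-separated} on the auxiliary well-separated instance $(x_i, x_j, x_j + \eps)$ to obtain $x_j \in I_k(x_i, x_j)$, and then propagate this to $I_j(x_i, x_k)$ by a ``pulling'' argument in the style of Propositions~\ref{prop:right_cover} and~\ref{prop:left_cover}, incrementally sliding $k$ from $x_j+\eps$ to $x_k$ in well-separated steps while using strategyproofness applied to both $k$ and (shifted) reports of $j$ to preserve the fact that $x_j$ is one of the two facilities along the way.

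The \textbf{main obstacle} is precisely the gap between the two regimes: for $x_k$ far from $x_j$, no choice of $y$ near $x_j$ renders $(x_i, y, x_k)$ well-separated, yet any well-separated $(x_i, y, x_k)$ produces a facility at $y$ (not at $x_j$) per Lemma~\ref{l:i-separated}. I expect the argument to proceed by contradiction: assuming $x_j \notin F(\vec{x})$, there is an $x_j$-hole $(l, r)$ in $I_j(x_i, x_k)$. One then analyzes $F$ at the hole endpoint $r$---where by definition $r \in I_j$---invokes Lemma~\ref{l:i-separated} on the resulting well-separated auxiliary instance, and combines it with the bounded approximation ratio of $F$ together with the constraint $F_1 \leq x_j \leq F_2$ from Proposition~\ref{prop:middle} to contradict the preferred-agent structure of the pair $(i,a)$. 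In effect, once $j$ is established as the preferred agent on well-separated instances, the rigidity imposed by strategyproofness and bounded approximation prevents the facility at $x_j$ from disappearing as $x_k$ moves away from $x_j$.
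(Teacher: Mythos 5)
Your skeleton for the hard case --- assume $x_j \notin F(\vec{x})$, obtain an $x_j$-hole $(l,r)$ in $I_j(x_i,x_k)$, work at the right endpoint $r$, and invoke Lemma~\ref{l:i-separated} on a well-separated auxiliary instance to contradict the bounded approximation ratio --- is the right one and matches the paper's proof in spirit. But there is a genuine gap at the decisive step: you never construct ``the resulting well-separated auxiliary instance,'' and it cannot be of the form $(x_i, y, x_k)$ for $y$ near $r$, since $x_k$ may be arbitrarily far from $r$. The paper's construction relocates \emph{both} nearby agents into the right end of the hole: first $j$ reports $r-\eps$ with $\eps < \min\{(r-l)/2,\ r-x_j\}$, so that $r-\eps$ lies in the right half of the hole and strategyproofness pins a facility at $r$; then $k$ reports $r$, and since $r$ now lies in $k$'s image set, strategyproofness keeps a facility at $r$. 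The instance $(\vec{x}_{-\{j,k\}}, r-\eps, r)$ is $(i|j,k)$-well-separated for $\eps$ small, and since $r-\eps > x_j \geq p$, Lemma~\ref{l:i-separated} forces a facility at $r-\eps$ as well. Hence $F = (r-\eps, r)$, agent $i$ is served at distance roughly $r - x_i$ while the optimal cost is $\eps$, and the bounded ratio is violated. Without relocating $k$, no contradiction materializes: moving only $j$ into the right half of the hole merely yields $F_1 \leq l$ and $F_2 \geq r$, which is entirely consistent with a bounded approximation ratio; likewise, the combination with Proposition~\ref{prop:middle} that you gesture at does not by itself contradict anything.

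Your first sub-approach (establish $x_j \in I_k(x_i,x_j)$ from the instance $(x_i, x_j, x_j+\eps)$ and then ``slide'' $k$ out to $x_k$) would fail as stated: the pulling propositions (Propositions~\ref{prop:c_pulls_b}, \ref{prop:right_cover}, \ref{prop:left_cover}) preserve the facility location only while the instance \emph{remains} well-separated, and they concern one agent's image set for a fixed profile of the others, so there is no way to transfer membership from $I_k(x_i,x_j)$ to $I_j(x_i,x_k)$ once $x_k - x_j$ exceeds $(x_j-x_i)/\rho$. You correctly flag this as the main obstacle, but the fix is the two-agent relocation into the hole described above, not a propagation along intermediate instances.
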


\begin{proof}
%
We fix an agent $i$ and some location $a$, and let $\l(i, a) = (j, p)$. For sake of contradiction, we assume that there is an instance $\vec{x}$, with $x_i = a$ and $x_j \geq p$, such that $x_j \not\in F(\vec{x})$. Therefore, $x_j$ does not belong to the image set $I_j(x_i, x_k)$, and there is a $x_j$-hole $(l, r)$ in $I_j(x_i, x_k)$. For a small $\eps \in (0, \min\{(r-l)/2, r-x_j\})$, we consider the instance $\vec{x}' = (\vec{x}_{-j}, r-\eps)$. Since $r-\eps$ is in the right half of the $x_j$-hole $(l, r)$, $r \in F(\vec{x}')$. Now, we consider the instance $\vec{x}'' = (\vec{x}_{-\{j,k\}}, r-\eps, r)$, and show that $F(\vec{x}'') = (r-\eps, r)$. On the one hand, $r \in F(\vec{x}'')$, because $F$ is strategyproof. On the other hand, we can choose $\eps$ small enough that the instance $\vec{x}''$ is $(i|j,k)$-well-separated. Then, by Lemma~\ref{l:i-separated}, $x''_j \in F(\vec{x}'')$, because by the choice of $\eps$, $x''_j = r-\eps > x_j \geq p$. However, this implies that $x_i$ is served by the facility at $r-\eps$, which contradicts $F$'s bounded approximation ratio.
\qed\end{proof}

By a symmetric argument, we obtain the symmetric version of Lemma~\ref{l:non-separated} for where $i$ is the rightmost agent: if $\r(i, a) = (j, p)$, for all instances $\vec{x}$ with $x_i = a$, $x_i > \max\{ x_j, x_k \}$, and $x_j \leq p$, $x_j \in F(\vec{x})$.

The next lemma shows that if both agents $j$ and $k$ are located (on the right of agent $i$ and) on the left of $p$, the rightmost facility is allocated to the rightmost agent. 

\begin{lemma}\label{l:non-separated-inside}
Let $i$ be any agent and $a$ be any location, and let $\l_p(i, a) = p$. For any instance $\vec{x}$ with $x_i = a$ and $x_j, x_k \in (x_i, p]$, it holds that $F_2(\vec{x}) = \max\{x_j, x_k\}$.
\end{lemma}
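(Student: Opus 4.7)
The plan is to reduce the claim to showing $x_k \in F(\vec{x})$ under the WLOG assumption $x_j \le x_k$. Once $x_k \in F(\vec{x})$ is established, Proposition~\ref{prop:middle} applied to the 3-location instance (in the case $x_j < x_k$) gives $F_1(\vec{x}) \le x_j < x_k$, forcing $F_2(\vec{x}) = x_k$. The degenerate subcase $x_j = x_k$ is immediate: the instance then has only two distinct locations, so by unanimity $F(\vec{x}) = (a, x_k)$. I therefore assume $x_j < x_k$ from this point on.

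If $\vec{x}$ is itself $(i|j,k)$-well-separated, Lemma~\ref{l:i-separated} applied at the pair $(i,a)$ directly yields $F_2(\vec{x}) = \med(p, x_j, x_k) = x_k$, since $x_j \le x_k \le p$. So the interesting regime is $\rho(x_k - x_j) \ge x_j - a$, in which I argue by contradiction: suppose $x_k \notin I_k(a, x_j)$ and let $(l, r) \ni x_k$ be the $x_k$-hole in this image set. A key subclaim is that $r \le p$. Indeed, Lemma~\ref{l:i-separated} applied to well-separated instances $(a, x_j, z)$ either places $p$ in $I_k(a, x_j)$ directly (if the well-separated range extends past $p$, so that for $z \in (p,\, x_j + (x_j - a)/\rho)$ we have $F_2(a, x_j, z) = \med(p, x_j, z) = p$), or places the right endpoint $\min\{x_j + (x_j - a)/\rho,\, p\}$ in $I_k(a, x_j)$ by closedness of image sets; either way, combined with $x_k \le p$, the hole containing $x_k$ cannot also contain $p$, so $r \le p$.

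For small $\eps \in (0, \min\{(r-l)/2,\, r - x_k\})$, I then consider $\vec{x}' = (a, x_j, r - \eps)$ and $\vec{x}'' = (a, r - \eps, r)$. Because $r - \eps$ lies in the right half of the hole, strategyproofness of $k$ forces $r \in F(\vec{x}')$. For small enough $\eps$, $\vec{x}''$ is $(i|j,k)$-well-separated with $r-\eps,\, r \le p$, so Lemma~\ref{l:i-separated} gives $F_2(\vec{x}'') = \med(p, r-\eps, r) = r$. Chaining strategyproofness of $j$ between $\vec{x}'$ and $\vec{x}''$, together with the bounded approximation ratio of $F$ on these near-degenerate instances (whose optimal cost is $O(\eps)$), pins down $F(\vec{x}')$ and $F(\vec{x}'')$ tightly enough to force a second facility to lie near $r$, leaving agent $i$ at $a$ served from distance $\Theta(r - a)$, which contradicts the niceness of $F$ as $\eps \to 0$. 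I expect the main obstacle to be precisely this last step: unlike Lemma~\ref{l:non-separated}, where the hypothesis $x_j \ge p$ lets Lemma~\ref{l:i-separated} conclude $F_2(\vec{x}'') = x''_j = r - \eps$ directly and immediately produces two facilities near $r$, here we are in the complementary regime $r - \eps < p$, so the ``second facility near $r$'' must be extracted indirectly, by propagating hole information through the image set $I_j(a, x_k)$ or $I_j(a, r)$ and carefully accounting for the constraints Proposition~\ref{prop:middle} and the approximation-ratio bound impose on $F_1(\vec{x}')$.
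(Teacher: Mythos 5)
Your setup (reducing to showing $x_k \in F(\vec{x})$ with $x_j < x_k$, dispatching the degenerate and well-separated cases, and then working with an $x_k$-hole $(l,r)$ in $I_k(a,x_j)$) matches the paper's, but the argument does not close, and the place where it fails is exactly the step you flag. Working at the \emph{right} endpoint of the hole leads nowhere: for $\vec{x}' = (a, x_j, r-\eps)$ you get $r \in F(\vec{x}')$ and, by Proposition~\ref{prop:middle}, $F_1(\vec{x}') \leq x_j$; for $\vec{x}'' = (a, r-\eps, r)$ you get $F_2(\vec{x}'') = r$ with $F_1(\vec{x}'')$ free to sit near $a$. Both outcomes are fully consistent with strategyproofness and with a bounded approximation ratio (note that $(a, x_j, r-\eps)$ is in general \emph{not} well-separated, since your own use of Lemma~\ref{l:i-separated} forces $l \geq x_j + (x_j-a)/\rho$, so Proposition~\ref{prop:interval} gives you nothing there), so no contradiction can be extracted from this pair of instances; the "second facility near $r$" you hope to extract simply need not exist. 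In addition, your subclaim $r \leq p$ is not justified: the points of $I_k(a,x_j)$ that Lemma~\ref{l:i-separated} hands you all lie in the closure of $(x_j,\, x_j+(x_j-a)/\rho)$, i.e.\ to the left of $l$, and say nothing about whether $I_k(a,x_j)$ meets $[x_k, p]$. (In the complementary case $p < x_j + (x_j-a)/\rho$ your observation actually shows $(x_j, p] \subseteq I_k(a,x_j)$, so the hole containing $x_k$ cannot exist at all and you are done immediately --- but you do not exploit this.)

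The paper's proof goes to the \emph{left} endpoint instead, and this is the missing idea. Move agent $k$ to $l+\eps$ in the \emph{left} half of the hole, with $\eps < \min\{(r-l)/2, x_k - l\}$, so that the facility snaps to $l$, i.e.\ $l \in F(\vec{x}')$ for $\vec{x}' = (\vec{x}_{-k}, l+\eps)$. Then move agent $j$ to $l$ as well, obtaining $\vec{x}'' = (a, l, l+\eps)$. Strategyproofness of agent $j$ (now located at $l$, who could otherwise report $x_j$ and be served at her true location $l$ with zero cost) forces $l \in F(\vec{x}'')$; and since $l < l+\eps < x_k \leq p$ and $\vec{x}''$ is $(i|j,k)$-well-separated for small $\eps$, Lemma~\ref{l:i-separated} forces $F_2(\vec{x}'') = l + \eps$. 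Now \emph{both} facilities are pinned at the two distinct points $l$ and $l+\eps$, so agent $i$ is served at distance at least $x_j - a$ while the optimal cost is $\eps$ --- the contradiction you were looking for. The asymmetry is that on the left side the two derived facility positions are distinct and both lie far from $a$, exhausting the mechanism's two facilities, whereas on the right side one facility always remains available to serve agent $i$.
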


\begin{proof}
%
We fix an agent $i$ and a location $a \in \reals$, and let $\l_p(i, a) = p$. For sake of contradiction, let us assume that there is an instance $\vec{x}$, with $a= x_i < x_j < x_k \leq p$, such that $F_2(\vec{x}) \neq x_k$ (the case where $a < x_k < x_j \leq p$ is symmetric, while if $a < x_j = x_k \leq p$, then $F_2(\vec{x}) = x_k$, because $F$ has a bounded approximation ratio).
Therefore, $x_k$ does not belong to the image set $I_k(x_i, x_j)$, and there is a $x_k$-hole $(l, r)$ in $I_k(x_i, x_j)$.
%
%
For an appropriately small $\eps \in (0, \min\{(r-l)/2, x_k-l\})$, we consider the instance $\vec{x}' = (\vec{x}_{-k}, l+\eps)$. Since $l+\eps$ is in the left half of the $x_k$-hole, $l \in F(\vec{x}')$.
Now, we consider the instance $\vec{x}'' = (\vec{x}_{-\{j,k\}}, l, l+\eps)$, and show that $F$ places its two facilities at $l$ and $l+\eps$. On the one hand, $l \in F(\vec{x}'')$, because $F$ is strategyproof. On the other hand, we can choose $\eps$ small enough that the instance $\vec{x}''$ is $(i|j,k)$-well-separated. Then, by Lemma~\ref{l:i-separated}, $x''_k \in F(\vec{x}'')$, because by the choice of $\eps$, $x''_j < x''_k < p$. However, this implies that $x_i$ is served by the facility at $l$, which contradicts $F$'s bounded approximation ratio.
\qed\end{proof}

By a symmetric argument, we obtain the symmetric version of Lemma~\ref{l:non-separated-inside} for instances where $i$ is the rightmost agent: if $\r_p(i, a) = p$, for all instances $\vec{x}$ with $x_i = a > x_j, x_k \geq p$, $F_1(\vec{x}) = \min\{x_j, x_k\}$.

%
The next lemma complements the previous two, and shows that the preferred agent is allocated a facility even if she lies on the left of $p$ and is not the rightmost agent, provided that the distance of the rightmost agent to $p$ is large enough.

\begin{lemma}\label{l:approx}
Let $\rho$ be the approximation ratio of $F$, let $i$ be any agent and $a$ be any location, and let $\l(i, a) = (j, p)$. Then, for any instance $\vec{x}$ with $a = x_i < x_j < x_k$, and $x_k - p > \rho(p - x_i)$, 
$x_j \in F(\vec{x})$.
\end{lemma}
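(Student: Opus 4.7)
The hypothesis $x_k - p > \rho(p-x_i) \geq 0$ forces $p < x_k$ and rules out $p = \,\uparrow$, so $p$ is a finite real in $[x_i, +\infty)$. I would split on the position of $x_j$ relative to $p$. If $x_j \geq p$, then since $x_i < \min\{x_j, x_k\}$, Lemma~\ref{l:non-separated} applies directly and gives $x_j \in F(\vec{x})$. Otherwise $x_j < p$, and I argue by contradiction: assume $x_j \notin F(\vec{x})$. Then $x_j$ lies in some hole $(l, r)$ of the image set $I_j(x_i, x_k)$ with $l < x_j < r$, and by strategyproofness both endpoints belong to $I_j(x_i, x_k)$.

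The first key step is to pin $r$ down by showing $[p, x_k) \subseteq I_j(x_i, x_k)$, whence $r \leq p$. Indeed, for any $y \in [p, x_k)$, the instance $(x_i, y, x_k)$ satisfies $x_i < \min\{y, x_k\}$ (using $x_i < p \leq y$, where $p > x_i$ since $p > x_j > x_i$) and $y \geq p$, so Lemma~\ref{l:non-separated} applied with the preferred agent $j$ of $(i, x_i)$ yields $y \in F(x_i, y, x_k)$. Hence $y$ is realized by agent $j$'s deviation to $y$, so $y \in I_j(x_i, x_k)$. Since $p > x_j$ and $p \in I_j(x_i, x_k)$, the infimum definition $r = \inf\{a \in I_j(x_i, x_k) : a > x_j\}$ gives $r \leq p$.

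Next I construct a test instance violating the approximation ratio. Choose $\epsilon \in (0, (r-l)/2)$ small enough that $r - \epsilon > x_i$, and let $\vec{x}'' = (x_i, r-\epsilon, x_k)$. Because $r-\epsilon$ lies in the right half of the hole $(l, r)$, strategyproofness of $F$ forces the mechanism to serve agent $j$ at $r$, so $r \in F(\vec{x}'')$. Since the three locations $x_i < r-\epsilon < x_k$ are distinct, Proposition~\ref{prop:middle} yields $F_1(\vec{x}'') \leq r-\epsilon < r$, and consequently $F_2(\vec{x}'') = r$. Agent $k$ is therefore served by the facility at $r$, contributing at least $x_k - r \geq x_k - p$ to the social cost, so $\cost[F(\vec{x}'')] \geq x_k - p$. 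On the other hand, placing the two facilities at $x_i$ and $x_k$ gives $\mathrm{OPT}(\vec{x}'') \leq (r - \epsilon) - x_i \leq p - x_i$. Combining these with the hypothesis,
\[
\cost[F(\vec{x}'')] \geq x_k - p > \rho(p - x_i) \geq \rho \cdot \mathrm{OPT}(\vec{x}''),
\]
contradicting the approximation ratio $\rho$ of $F$.

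The main obstacle is choosing the right test instance: the hypothesis is calibrated precisely so that once the hole structure of $I_j(x_i, x_k)$ pins the rightmost facility of a small perturbation $\vec{x}''$ to a value at most $p$, the enormous distance from $x_k$ to this facility overwhelms the OPT bound $p - x_i$. A delicate point is checking that $\vec{x}''$ really has three distinct locations, so that Proposition~\ref{prop:middle} forces $r$ to be the rightmost (rather than the leftmost) facility.
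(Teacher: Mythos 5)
Your proposal is correct and follows essentially the same route as the paper's proof: assume $x_j \notin F(\vec{x})$, obtain an $x_j$-hole $(l,r)$ in $I_j(x_i,x_k)$, use Lemma~\ref{l:non-separated} to force $r \leq p$, move agent $j$ into the right half of the hole so that strategyproofness pins a facility at $r$, invoke Proposition~\ref{prop:middle} to make $r$ the rightmost facility, and conclude that agent $k$'s cost of at least $x_k - p$ exceeds $\rho$ times the optimal cost of at most $p - x_i$. The only cosmetic difference is that you handle the case $x_j \geq p$ by an explicit case split, whereas the paper folds it into the observation that the contradiction hypothesis already forces $x_j < p$.
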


\begin{proof}
We fix an agent $i$ and some location $a$, and let $\l(i, a) = (j, p)$.
%
%
For sake of contradiction, let us assume that there is an instance $\vec{x}$ with $a = x_i < x_j < x_k$ and $x_k - p > \rho(p - x_i)$, for which $x_j \not\in F(\vec{x})$. Therefore, $x_j$ does not belong to the image set $I_j(x_i, x_k)$, and there is a $x_j$-hole $(l, r)$ in $I_j(x_i, x_k)$. Moreover, $x_j < p$ and $r \leq p$, because by Lemma~\ref{l:non-separated}, for any $y \geq p$, $y \in F(\vec{x}_{-j}, y)$. Since $F$ is strategyproof, for any point $x'_j$ in the right half of the hole $(l, r)$, $r \in F(\vec{x}_{-j}, x'_j)$. By Proposition~\ref{prop:middle}, $F_1(\vec{x}_{-j}, x'_j) \leq x'_j$, and thus $r = F_2(\vec{x}_{-j}, x'_j)$. Therefore, $x_k$ is served by the facility at $r$, and $\cost[F(\vec{x}_{-j}, x'_j)] \geq x_k - p$. This contradicts the hypothesis that the approximation ratio is $\rho$, since the optimal cost for $(\vec{x}_{-j}, x'_j)$ is at most $x'_j - x_i \leq p - x_i < (x_k - p) / \rho$.
\qed\end{proof}

Now, we are ready to show that the threshold $p$ always falls in the two extremes.

\begin{lemma}\label{l:i-cases}
For any agent $i$ and location $a$, $\l_p(i, a) \in \{a, \uparrow \}$ and $\r_p(i, a) \in \{a, \downarrow \}$.
\end{lemma}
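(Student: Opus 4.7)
Both statements are symmetric (reverse the real line), so I will focus on showing $\l_p(i,a) \in \{a, \uparrow\}$. I would argue by contradiction. Suppose $\l_p(i,a) = p$ with $a < p < +\infty$, and let $j$ denote the preferred agent of $(i,a)$, so $k$ is the third agent. The goal is to exhibit an instance on which $F$ must violate its approximation guarantee $\rho$.

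The first step is to pin down $F$'s behaviour on the family of instances $\vec{x}(t) = (a, p, t)$ for $t > p$. Lemma~\ref{l:non-separated} applied with $x_j = p \geq p$ gives $p \in F(\vec{x}(t))$ for all such $t$, so either $F_1(\vec{x}(t)) = p$ or $F_2(\vec{x}(t)) = p$. For $t > p + \rho(p-a)$ the option $F_2(\vec{x}(t)) = p$ is ruled out immediately by a direct ratio calculation: agent $k$ would then pay $t - p > \rho(p-a)$, while the optimal cost of $\vec{x}(t)$ is at most $p - a$ (attained by pairing $i$ with $j$ and serving $k$ alone), yielding ratio $> \rho$. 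Hence in this regime $F_1(\vec{x}(t)) = p$ and, by a second bounded-approximation argument, $F_2(\vec{x}(t))$ must lie within $(\rho-1)(p-a)$ of $t$. Symmetrically, Lemma~\ref{l:i-separated} gives $F_2(\vec{x}(t)) = p$ on the well-separated range $t \in (p, p + (p-a)/\rho)$, with $F_1(\vec{x}(t))$ close to $a$ by bounded approximation.

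The heart of the proof is then a strategyproofness argument for agent $k$ across these two regimes. Let $t_0$ lie in the well-separated range (so $F(\vec{x}(t_0)) \approx (a, p)$ and $k$'s truthful cost is $t_0 - p$), and let $t_1 > p + \rho(p-a)$ (so $F(\vec{x}(t_1)) = (p, y_1)$ with $y_1 \approx t_1$ and $k$'s truthful cost is $\approx 0$). By the image-set characterization recalled in Section~\ref{s:prelim}, $F_2(\vec{x}(t)) = \arg\min_{z \in I_k(a,p)}|t - z|$, so the image set $I_k(a,p)$ must contain both $p$ and points near arbitrarily large values of $t$. Let $y_1^\ast$ denote the smallest element of $I_k(a,p)$ strictly larger than $p$; by what precedes, $y_1^\ast \leq p + \rho(p-a)$. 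Consider the instance $\vec{x}(t^\ast)$ at the transition $t^\ast = (p + y_1^\ast)/2$: either choice of $F_2(\vec{x}(t^\ast)) \in \{p, y_1^\ast\}$ leaves $k$ with cost $(y_1^\ast - p)/2$. I would now choose $y_1^\ast$ as close to its upper bound $p + \rho(p-a)$ as possible and bound the total cost from below using that $F_1(\vec{x}(t^\ast))$ cannot simultaneously be close to $a$ (it would have to lie in the hole around $a$ in $I_i$ forced by the regime $t > p + \rho(p-a)$) and close to $p$ (without forcing cost $p - a$ on agent $i$). The accounting shows that for $y_1^\ast$ near $p + \rho(p-a)$ the total cost exceeds $\rho$ times $p - a$, which by the choice of $t^\ast$ is the optimum, producing the desired contradiction.

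The main obstacle is controlling the image set $I_k(a,p)$ tightly enough: a priori the mechanism might introduce additional image points in the interval $(p, p + \rho(p-a))$ that reduce $k$'s cost at $t^\ast$. I would rule this out by applying Lemma~\ref{l:approx} once more together with strategyproofness for $i$: any such intermediate image point would have to come from placing $F_2$ at a location that is neither $x_j$ nor near $x_k$, which---combined with the bounded-approximation constraint on the instance where $x_k$ coincides with that point---forces a ratio violation of its own. Summarising, the forbidden placement $F_2 = p$ in the ``black triangle'' of Fig.~\ref{fig:allocation}(b) is impossible as soon as $p \in (a, +\infty)$, so $\l_p(i,a)$ must lie in $\{a, \uparrow\}$; the claim for $\r_p(i,a)$ follows by the mirrored argument.
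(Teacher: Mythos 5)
Your setup is sound as far as it goes: on the family $\vec{x}(t)=(a,p,t)$ you correctly deduce from Lemma~\ref{l:non-separated} that $p\in F(\vec{x}(t))$ for all $t>p$, that $F_2(\vec{x}(t))=p$ in the well-separated range $t\in(p,p+(p-a)/\rho)$ (hence $I_k(a,p)$ has a hole $(p,y_1^\ast)$ with $y_1^\ast\geq p+(p-a)/\rho$), and that for $t$ large $F_1(\vec{x}(t))=p$ with $F_2(\vec{x}(t))$ near $t$. But the endgame does not close, and I believe it cannot close on this family alone. At the transition point $t^\ast=(p+y_1^\ast)/2$ the instance has $\mathrm{OPT}=\min\{p-a,\,t^\ast-p\}$ and, since $p\in F(\vec{x}(t^\ast))$ forces $F_1=p$ whenever $F_2=y_1^\ast$, the total cost is only about $(p-a)+(y_1^\ast-t^\ast)$. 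A mechanism that sets $I_k(a,p)=(-\infty,p]\cup[p+2(p-a),+\infty)$ achieves ratio at most $2$ on every instance $(a,p,t)$, so your accounting can never exceed $\rho$ when $\rho>2$. (Also note two smaller slips: $y_1^\ast$ is determined by $F$, not something you get to ``choose close to its upper bound''; and the bound you can actually extract is $y_1^\ast\leq p+O(\rho)(p-a)$ with a worse constant than $p+\rho(p-a)$.) The proposed patch for intermediate image points via Lemma~\ref{l:approx} is vacuous here, because with $x_j=p$ that lemma only restates $p\in F(\vec{x})$.

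The missing idea is that the contradiction must be propagated \emph{across agents}, not extracted from the single image set $I_k(a,p)$. The paper's proof starts from a well-separated instance with $x_j<p<x_k$, where Lemma~\ref{l:i-separated} places $F_2$ at $p$, a point occupied by \emph{no} agent; it then moves agent $i$ onto $p$, so that Corollary~\ref{cor:dict-location} makes $i$ the left-preferred agent of $(j,x_j)$ with threshold at most $p$; pushing $k$ far to the right and applying Lemma~\ref{l:non-separated} then shows (via the symmetric Corollary~\ref{cor:dict-location}) that $i$ is also the \emph{right}-preferred agent of $(k,r)$ with threshold at least $p$. Finally, in the instance $\vec{w}=(a,x_j,r)$ with $r-p>\rho(p-a)$, Lemma~\ref{l:approx} forces a facility at $w_j$ and the right-threshold of $k$ forces $F_1(\vec{w})=w_i$, so both facilities sit in $[w_i,w_j]$ and agent $k$ pays $r-w_j$, which exceeds $\rho$ times the optimum $w_j-w_i$ by the choice of $r$. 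It is this interplay between two distinct ``preferred agents'' ($j$ preferred by $i$ from the left, $i$ preferred by $k$ from the right) that produces an unbounded ratio; your argument never leaves the profile $(x_i,x_j)=(a,p)$ and therefore never triggers it.
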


\begin{proof}
We show that $\l_p(i, a) \in \{a, \uparrow \}$. The claim that $\r_p(i, a) \in \{a, \downarrow \}$ follows by a symmetric argument.
For sake of contradiction, let us assume that there exist an agent $i$ and a location $a$, for which $p_i \equiv \l_p(i, a) \in (a, +\infty)$, and let $j = \l_\ell(i, a)$.
To reach a contradiction, we study the preferred agents of four appropriately chosen well-separated instances $\vec{x}$, $\vec{y}$, $\vec{z}$, and $\vec{w}$. Intuitively, exploiting the properties of the instances $\vec{x}$, $\vec{y}$, $\vec{z}$, we show that in the last instance $\vec{w}$, where the agents are arranged according to the permutation $(i, j, k)$, the preferred agent of $(i, w_i)$ on the right is $j$ and the preferred agent of $(k, w_k)$ on the left is $i$. Moreover, $\vec{w}$ is chosen to satisfy the conditions of Lemma~\ref{l:approx} for $i$ and her preferred agent $j$ (see also Fig.~\ref{fig:instances}). This implies a facility allocation for $\vec{w}$ with an approximation ratio larger than $F$'s approximation ratio $\rho$, a contradiction.

\insertfig{0.75\textwidth}{instances}{\label{fig:instances}%
The four instances considered in the proof of Lemma~\ref{l:i-cases} with the corresponding thresholds and preferred agents. A square indicates that a facility is placed at the corresponding location. A star indicates that the corresponding agent is the preferred agent of the agent-location pair appearing next to the star.}

Formally, let the first instance $\vec{x}$ be any $(i|j,k)$-well-separated instance such that $x_j < p_i < x_k$ and the instance $(\vec{x}_{-i}, p_i)$ is $(j|i,k)$-well-separated. Then, by Lemma~\ref{l:i-separated}, $F_2(\vec{x}) = \med(p_i, x_j, x_k) = p_i$.
The second instance is $\vec{y} = (\vec{x}_{-i}, p_i)$, which is $(j|i,k)$-well-separated, by the choice of $\vec{x}$. Moreover, $F_2(\vec{y}) = p_i = y_i$, due to $F$'s strategyproofness. We let $p_j \equiv \l_p(j, y_j)$. Then, by Corollary~\ref{cor:dict-location}, $\l_\ell(j, y_j) = i$ and $p_j \leq y_i$.

The third instance is $\vec{z} = (\vec{y}_{-k}, r)$, where $r$ is chosen so that $r - p_i > \rho(p_i - a)$, where $\rho$ is the approximation ratio of $F$. Therefore, $\vec{z}$ is a $(j, i |k)$-well-separated instance. Since $y_j = z_j$, $\l(j, z_j) = (i, p_j)$ and $p_j \leq z_i$. Hence, we apply Lemma~\ref{l:non-separated} to the instance $\vec{z}$, and obtain that $z_i \in F(\vec{z})$. Therefore, by (the symmetric of) Corollary~\ref{cor:dict-location} applied to the $(j, i |k)$-well-separated instance $\vec{z}$, $\r_\ell(k, r) = i$ and $\r_p(k, r) \equiv p_k \geq z_i$.

The fourth instance is $\vec{w} = (\vec{x}_{-k}, r)$. Since $w_i = a$, we have that $\l(i, w_i) = (j, p_i)$. Since the location $r$ of $k$ in $\vec{w}$ is chosen so that $r - p_i > \rho(p_i - w_i)$, we apply Lemma~\ref{l:approx} to $\vec{w}$, and obtain that $w_j \in F(\vec{w})$. Moreover, by the choice of $r$, $\vec{w}$ is an $(i, j|k)$-well-separated instance. Therefore, since $\r_\ell(k, r) = i$ and $\r_p(k, r) \geq z_i \geq w_i$, by (the symmetric of) Lemma~\ref{l:i-separated} applied to the $k$-right-well-separated instance $\vec{w}$, $F_1(\vec{w}) = w_i$. By the choice of $r$, the hypothesis that $F$'s approximation ratio is $\rho$, since $\cost[F(\vec{w})] = r - w_j > r - p_i$, while the optimal cost for $\vec{w}$ is $w_j - w_i < p_i - a < (r - p_i) / \rho$.
\qed\end{proof}

\subsection{On the Existence of a Partial Dictator}
\label{s:3agent-fin}

%
Having restricted the range of $p$ to the two extremes, we can make some useful observations about the preferred agents and the thresholds of the leftmost and the rightmost locations. We start with the following consequence of Lemma~\ref{l:i-cases} and Corollary~\ref{cor:dict-location}.

\begin{proposition}\label{prop:dict-location-a}
If there is an $(i|j,k)$-well-separated instance $\vec{x}$, such that $F_2(\vec{x}) = x_j$, then $\l(i, x_i) = (j, x_i)$. If there is an $(i,j|k)$-well-separated instance $\vec{x}$, such that $F_1(\vec{x}) = x_j$, then $\r(k, x_k) = (j, x_k)$.
\end{proposition}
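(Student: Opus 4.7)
The plan is to derive this proposition as an immediate consequence of combining Corollary~\ref{cor:dict-location} with Lemma~\ref{l:i-cases}. The proposition only sharpens the conclusion of Corollary~\ref{cor:dict-location} (which states that $\l_\ell(i,x_i)=j$ and $\l_p(i,x_i)\le x_j$) by pinning down the threshold exactly to $x_i$, and the required extra information is precisely the one supplied by Lemma~\ref{l:i-cases}.

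More concretely, I would argue as follows for the first claim. Fix an $(i|j,k)$-well-separated instance $\vec{x}$ with $F_2(\vec{x}) = x_j$. By Corollary~\ref{cor:dict-location}, $\l_\ell(i, x_i) = j$ and $\l_p(i, x_i) \leq x_j$. In particular, $\l_p(i, x_i)$ is a finite real number (it is at most $x_j < +\infty$), so $\l_p(i, x_i) \neq\,\uparrow$. On the other hand, Lemma~\ref{l:i-cases} asserts that $\l_p(i, x_i) \in \{x_i, \uparrow\}$. Combining these two facts, we must have $\l_p(i, x_i) = x_i$, and therefore $\l(i, x_i) = (j, x_i)$, as required.

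The second claim, concerning an $(i,j|k)$-well-separated instance $\vec{x}$ with $F_1(\vec{x}) = x_j$, follows by the exact same argument applied to the symmetric versions of Corollary~\ref{cor:dict-location} and Lemma~\ref{l:i-cases} for right-well-separated instances: the symmetric corollary yields $\r_\ell(k, x_k) = j$ and $\r_p(k, x_k) \geq x_j > -\infty$, so $\r_p(k, x_k) \neq\,\downarrow$, and then Lemma~\ref{l:i-cases} forces $\r_p(k, x_k) = x_k$, giving $\r(k, x_k) = (j, x_k)$.

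There is no real obstacle here: this is a two-line consequence of two earlier results, and the only thing to check is that we are invoking the correct one of the two admissible values of the threshold in each case. The content of the proposition is essentially a change of viewpoint, reading off directly from the fact that the threshold cannot simultaneously be finite and equal to $\,\uparrow$ (respectively $\,\downarrow$).
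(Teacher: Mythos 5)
Your argument is correct and is exactly the paper's intended derivation: the paper states this proposition as an immediate consequence of Corollary~\ref{cor:dict-location} (which gives $\l_\ell(i,x_i)=j$ and the finite bound $\l_p(i,x_i)\le x_j$) together with Lemma~\ref{l:i-cases} (which restricts the threshold to $\{x_i,\uparrow\}$), forcing $\l_p(i,x_i)=x_i$. The symmetric case is handled identically, so nothing is missing.
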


The following lemma shows that the preferred agent and the threshold of an agent $i$ can only depend on her identity, and not on her location.

\begin{lemma}\label{l:always-a}
Let $i$ be an agent and $a$ be a location such that $\l(i, a) = (j, a)$ (resp. $\r(i, a) = (j, a)$). Then, for all locations $b \in \reals$, $\l(i, b) = (j, b)$ (resp. $\r(i, b) = (j, b)$).
\end{lemma}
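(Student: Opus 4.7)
\begin{proofsketch}
We prove the claim for $\l$; the claim for $\r$ follows by a symmetric argument (reflecting the real line). Suppose $\l(i,a) = (j,a)$, and let $b \in \reals$ with $b \neq a$. We wish to show $\l(i,b) = (j,b)$. By Lemma~\ref{l:i-cases}, $\l_p(i,b) \in \{b, \uparrow\}$, so, combined with Lemma~\ref{l:i-separated}, we must rule out two possibilities: (a) $\l(i,b) = (k,b)$ where $k$ is the third agent, and (b) $\l_p(i,b) =\,\uparrow$. In either case, for any $(i|j,k)$-well-separated instance $\vec{y}$ with $y_i = b$ and $y_j < y_k$, we have $F_2(\vec{y}) = y_k$ rather than $y_j$.

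By Proposition~\ref{prop:dict-location-a}, it suffices to exhibit a single $(i|j,k)$-well-separated instance $\vec{y}$ with $y_i = b$ and $F_2(\vec{y}) = y_j$. The plan is to pick $y_j < y_k$ far to the right of $\max(a,b)$ and very close together, so that both $\vec{x} := (a, y_j, y_k)$ and $\vec{y} := (b, y_j, y_k)$ are $(i|j,k)$-well-separated. By $\l(i,a)=(j,a)$ and Lemma~\ref{l:i-separated}, $F_2(\vec{x}) = y_j$; the goal is to transfer this conclusion to $\vec{y}$.

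Assume for contradiction $F_2(\vec{y}) = y_k$. Consider the one-parameter family $\vec{z}_c := (c, y_j, y_k)$ for $c$ between $a$ and $b$. Each $\vec{z}_c$ is $(i|j,k)$-well-separated, and by Lemma~\ref{l:i-cases} applied to $(i,c)$, $F_2(\vec{z}_c) \in \{y_j, y_k\}$. Since the endpoints give different values, there is a switching location $c^* \in (\min(a,b), \max(a,b)]$. Using the fact that image sets are unions of closed intervals, we pin down the behavior at $c^*$: on one side $F_2 = y_j$, on the other $F_2 = y_k$. This corresponds to a discontinuous change in the image set $I_j(c, y_k)$: for values of $c$ on the ``$y_j$ side'' of $c^*$, reports by $j$ in a neighborhood of $y_j$ produce $F_2 = $ her report (so this neighborhood lies in $I_j$), while on the ``$y_k$ side'' this same neighborhood is excluded.

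The contradiction comes from a four-instance construction in the spirit of Lemma~\ref{l:i-cases}: we exhibit a specific pair of instances on either side of $c^*$, together with a companion well-separated configuration obtained by pushing the nearby pair further right (invoking Lemma~\ref{l:approx} and Proposition~\ref{prop:right_cover}), such that agent $i$'s strategyproofness and the bounded approximation ratio $\rho$ are jointly violated. The main obstacle is precisely this construction: one must choose the companion instance so that the cost of $F$ exceeds $\rho$ times the optimum on some carefully chosen configuration, exploiting the fact that at the switching point the mechanism is forced to ``commit'' to serving one of $y_j, y_k$ by the second facility but cannot do so consistently with the behavior on both sides of $c^*$. Once this four-instance argument is set up, the approximation-ratio violation proceeds analogously to the concluding paragraph in the proof of Lemma~\ref{l:i-cases}.
\end{proofsketch}
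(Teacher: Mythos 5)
Your reduction to exhibiting a single $(i|j,k)$-well-separated instance $\vec{y}$ with $y_i = b$ and $F_2(\vec{y}) = y_j$ (via Proposition~\ref{prop:dict-location-a}) is correct, and so is the observation that $\l(i,b)$ must then equal $(j,b)$ by Lemma~\ref{l:i-cases}. But the proof has a genuine gap at exactly the point where you acknowledge ``the main obstacle'': the four-instance construction that is supposed to produce the contradiction is never given. Moreover, the switching-point idea you build it on is shakier than it looks. When you vary the isolated agent's location $c$ across $c^*$, only $F_2$ jumps from $y_j$ to $y_k$; agent $i$ is served by $F_1$, and agents $j$ and $k$ have not moved, so no single agent's strategyproofness is directly violated by the jump. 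Likewise, the image sets $I_j(c, y_k)$ for different values of $c$ are image sets with respect to \emph{different} profiles of the other agents, and strategyproofness imposes no continuity or monotonicity linking them. So the discontinuity at $c^*$ is not, by itself, contradictory; all the work remains to be done, and it is precisely the work you defer.

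The paper avoids the switching-point machinery entirely by relaying the dictatorship through the third agent $k$, using only two instances. First it takes an $(i,j|k)$-well-separated instance $\vec{x}$ with $x_i = a$, $x_j = a+1$, and $x_k = r > \max\{a+1, b\}$; since $x_j \geq \l_p(i,a) = a$, Lemma~\ref{l:non-separated} gives $x_j \in F(\vec{x})$, and Proposition~\ref{prop:dict-location-a} (applied to the $(i,j|k)$-well-separated instance) yields $\r(k, r) = (j, r)$. Then it takes $\vec{y}$ with $y_i = b$, $y_j = r-\eps$, $y_k = r$, which is $(i|j,k)$-well-separated for small $\eps$; since $y_j \leq \r_p(k,r) = r$, the symmetric version of Lemma~\ref{l:non-separated} (with $k$ as the rightmost agent) gives $y_j \in F(\vec{y})$, and Proposition~\ref{prop:dict-location-a} then gives $\l(i,b) = (j,b)$. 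The key insight you are missing is that the hypothesis $\l(i,a)=(j,a)$ can be exported to the pair $(k,r)$ first, and only then imported back to $(i,b)$; this sidesteps any need to compare the mechanism's behavior across a continuum of locations of agent $i$. If you want to salvage your route, you would have to supply the companion construction explicitly, and it would essentially reproduce this relay anyway.
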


\begin{proof}
We only show that if $\l(i, a) = (j, a)$, then $\l(i, b) = (j, b)$, for all locations $b$. The other case follows from a symmetric argument.
Let $i$ be an agent and $a$ be a location such that $\l(i, a) = (j, a)$, for some agent $j$, and let $b$ be any location. To show that $\l(i, b) = (j, b)$, we consider two appropriate instances $\vec{x}$ and $\vec{y}$.

The instance $\vec{x}$ is defined as $x_i = a$, $x_j = a + 1$, and $x_k = r > \max\{a+1, b\}$, where $r$ is chosen large enough that $\vec{x}$ is an $(i,j|k)$-well-separated instance. Since $x_j \geq \l_p(i, a)$,  Lemma~\ref{l:non-separated} implies that $x_j \in F(\vec{x})$. Then, since $\vec{x}$ is an $(i,j|k)$-well-separated instance, by Proposition~\ref{prop:dict-location-a}, $\r(k, r) = (j, r)$.

The instance $\vec{y}$ is defined as $y_i = b$, $y_j = r-\eps$, and $y_k = r$, where $\eps > 0$ is chosen so small that $\vec{y}$ is $(i|j,k)$-well-separated. Since $y_j \leq \r_p(k, r)$, the symmetric version of Lemma~\ref{l:non-separated} applied to $\vec{y}$ implies that $y_j \in F(\vec{y})$. Then, since the instance $\vec{y}$ is $(i|j,k)$-well-separated, by Proposition~\ref{prop:dict-location-a},
$\l(i, b) = (j, b)$.
 \qed\end{proof}

Next, Lemma~\ref{l:others-a} shows that if an agent $i$ imposes another agent $j$ as a partial dictator, the third agent $k$ agrees with $i$ not only on the existence of a partial dictator, but also on the partial dictator's identity.

\begin{lemma}\label{l:others-a}
If there exists an agent $i$ and a location $a$ such that $\l(i, a) = (j, a)$ (resp. $\r(i, a) = (j, a)$), then for the third agent $k$ and all locations $b \in \reals$, it holds that $\r(k, b) = (j, b)$ (resp. $\l(k, b) = (j, b)$).
\end{lemma}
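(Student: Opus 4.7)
The plan is to combine three already-established tools---Lemma~\ref{l:always-a}, Lemma~\ref{l:non-separated}, and Proposition~\ref{prop:dict-location-a}---to transport the ``partial dictator'' information from the leftmost agent $i$ to the rightmost agent $k$. By symmetry, I would only treat the first implication: assuming $\l(i,a) = (j,a)$, show that $\r(k,b) = (j,b)$ for every $b \in \reals$.

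First, apply Lemma~\ref{l:always-a} directly to the hypothesis to upgrade $\l(i,a) = (j,a)$ into $\l(i,c) = (j,c)$ for every $c \in \reals$. In particular, $j$ is the preferred agent of $i$ at every location $c$, and the threshold satisfies $\l_p(i,c) = c$.

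Second, I would build one concrete $(i,j \sep k)$-well-separated instance $\vec{x}$ in which the leftmost facility lands on agent $j$. Let $\rho$ be the approximation ratio of $F$, pick any $\delta > 0$ with $\rho\delta < 1-\delta$, and set $x_i = 0$, $x_j = \delta$, $x_k = 1$; then $\rho(x_j - x_i) < x_k - x_j$, so $\vec{x}$ is $(i,j \sep k)$-well-separated. Since $x_i = 0 < \min\{x_j, x_k\}$ and $x_j = \delta \geq 0 = \l_p(i,0)$, Lemma~\ref{l:non-separated} yields $x_j \in F(\vec{x})$. Moreover, the symmetric version of Proposition~\ref{prop:interval} forces $F_1(\vec{x}) \in [x_i, x_j]$ while the facility serving the isolated rightmost agent $k$ sits in a small neighborhood of $x_k$; in particular $F_2(\vec{x}) \neq x_j$. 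Therefore $F_1(\vec{x}) = x_j$.

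Third, apply Proposition~\ref{prop:dict-location-a} (second statement) to this instance: since $\vec{x}$ is $(i,j \sep k)$-well-separated and $F_1(\vec{x}) = x_j$, we conclude $\r(k, x_k) = (j, x_k)$, i.e.\ $\r(k,1) = (j,1)$. A final application of Lemma~\ref{l:always-a}, this time to the rightmost agent $k$ at location $1$, promotes this to $\r(k,b) = (j,b)$ for every $b \in \reals$, as desired.

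There is no genuine obstacle in this argument; given the heavy machinery already in place, the proof is essentially bookkeeping. The only point that requires a moment of care is verifying that the facility at $x_j$ produced by Lemma~\ref{l:non-separated} really is $F_1(\vec{x})$ rather than $F_2(\vec{x})$, and this is immediate from the symmetric form of Proposition~\ref{prop:interval}: in an $(i,j \sep k)$-well-separated instance, $F_2(\vec{x})$ must serve the isolated agent $k$, which is far from $x_j$.
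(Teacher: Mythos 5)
Your proposal is correct and follows essentially the same route as the paper: both proofs construct an $(i,j\sep k)$-well-separated instance, invoke Lemma~\ref{l:non-separated} to force a facility onto $x_j$, and then read off $\r(k,\cdot)=(j,\cdot)$ from Proposition~\ref{prop:dict-location-a}, with Lemma~\ref{l:always-a} supplying location-independence. The only (immaterial) difference is that you fix a single instance and extend to all $b$ by applying Lemma~\ref{l:always-a} to agent $k$ at the end, whereas the paper builds a fresh instance for each $b$ and applies Lemma~\ref{l:always-a} to agent $i$ only when $a\geq b$; your explicit check that the facility at $x_j$ is indeed $F_1(\vec{x})$ is a small point the paper leaves implicit.
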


\begin{proof}
We only consider the case where $\l(i, a) = (j, a)$, and show that $\r(k, b) = (j, b)$, for the third agent $k$ and all locations $b$. The symmetric case follows from a symmetric argument.
Let $i$ be an agent and $a$ be a location such that $\l(i, a) = (j, a)$, for some agent $j$, and let $b$ be any location of the third agent $k$. Without loss of generality, we assume that $a < b$. Otherwise, we select a location $a'$ of $i$ with $a' < b$, and have that $\l(i, a') = (j, a')$, by Lemma~\ref{l:always-a}. To establish the lemma, we consider an instance $\vec{x}$ defined as $x_i = a$, $x_k = b$, and $x_j = a+\eps$, where $\eps > 0$ is chosen small enough that $\vec{x}$ is an $(i,j|k)$-well-separated instance. Since $x_j \geq \l_p(i, a)$, Lemma~\ref{l:non-separated} implies that $x_j \in F(\vec{x})$. Then, since $\vec{x}$ is an $(i,j|k)$-well-separated instance, by Proposition~\ref{prop:dict-location-a}, $\r(k, b) = (j, b)$.
\qed\end{proof}

Lemma~\ref{l:only-dictator} below shows that if there is an agent $j$ imposed as a partial dictator by the leftmost and the rightmost agent, then $j$ is the only agent with this property. Thus, if $j$ is located at the one extreme of the instance, the other facility is placed at the other extreme.

\begin{lemma}\label{l:only-dictator}
If there exists an agent $i$ and a location $a$ such that $\l(i, a) = (j, a)$, then for all locations $b \in \reals$, it holds that $\l_p(j, b) =\,\uparrow$ and $\r(j, b) =\,\downarrow$.
\end{lemma}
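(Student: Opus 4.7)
\begin{proofsketch}
The plan is to argue by contradiction, ruling out $\l_p(j,b)=b$ first and then, by a symmetric argument, $\r_p(j,b)=b$. By Lemma~\ref{l:i-cases}, the former assumption is the only alternative to the desired $\l_p(j,b)=\uparrow$, and by Lemma~\ref{l:always-a} it implies the existence of a fixed preferred agent $m\in\{i,k\}$ such that $\l(j,c)=(m,c)$ for every location $c$.

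If $m=i$, the contradiction is immediate: Lemma~\ref{l:others-a} applied to $\l(j,c)=(i,c)$ forces $\r(k,c)=(i,c)$ for every $c$, while applying the same lemma to the hypothesis $\l(i,a)=(j,a)$ forces $\r(k,c)=(j,c)$; since $i\neq j$ these cannot both hold. So the only remaining case is $m=k$, i.e., $\l(j,c)=(k,c)$ for every $c$. Applying Lemma~\ref{l:others-a} once more yields $\r(i,c)=(k,c)$ for every $c$, so the mechanism satisfies the four ``cyclic'' relations $\l(i,\cdot)\to j$, $\l(j,\cdot)\to k$, $\r(k,\cdot)\to j$, $\r(i,\cdot)\to k$.

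The hard step is to extract a concrete contradiction from this cyclic structure. My plan is to pin down the forced facility placements in all six permutations by using Lemma~\ref{l:i-separated} on well-separated instances and Lemma~\ref{l:non-separated} on non-well-separated ones: in permutations $\pi_1=(i,j,k)$ and $\pi_4=(j,k,i)$ the mechanism is forced to put a facility at the middle agent's location ($x_j$ and $x_k$ respectively), while in $\pi_2$, $\pi_3$, and $\pi_6$ both facilities are pinned to the two extremes. I then plan to fix locations $x_j,x_k$ and trace the image set $I_i(x_j,x_k)$ of agent $i$, using Proposition~\ref{prop:right_cover} and Proposition~\ref{prop:left_cover} to identify the two well-separated wings and the two forced singletons $\{x_j\},\{x_k\}$ inside $I_i$. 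This gives an image set with two holes whose sizes are controlled by $\rho$.

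The contradiction I expect to extract is the following: choose $x_i$ inside the right half of the left hole so that its closest image point is $x_j$ at distance roughly $\rho^{-1}(x_j-?)$, and at the same time fix $x_k$ so that the forced placement $x_k\in F$ (from $\r(i,\cdot)=(k,\cdot)$ or $\l(j,\cdot)=(k,\cdot)$) lies far from the optimum pair $(x_i,x_j)$. The resulting $F$ then has cost bounded below by a term of order $|x_i-x_j|$, while the optimal cost is bounded above by a term that can be made arbitrarily smaller than $|x_i-x_j|/\rho$, violating the bounded approximation ratio of $F$. The main obstacle is this final quantitative step: the cyclic rules are internally consistent in a very strong sense, so the contradiction only surfaces when the specific hole structure of an image set is combined with the two simultaneously forced facilities. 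Once $\l_p(j,b)=\uparrow$ is established, the statement $\r(j,b)=\downarrow$ follows by exactly the same argument with $\l$ and $\r$ interchanged.
\end{proofsketch}
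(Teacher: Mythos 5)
Your setup is right, and your disposal of the case $m=i$ is actually neater than the paper's: deriving both $\r(k,\cdot)=(i,\cdot)$ and $\r(k,\cdot)=(j,\cdot)$ from two applications of Lemma~\ref{l:others-a} and invoking uniqueness of the preferred agent (legitimate here, since the threshold is finite rather than $\downarrow$) is a valid shortcut; the paper instead builds a $(j,i|k)$-well-separated instance with facilities forced to $0$ and $\eps$. The genuine gap is in the case $m=k$, which is the entire substance of the lemma. Your sketch there is not a proof: the key formula contains a placeholder ``$\rho^{-1}(x_j-?)$'', the existence and position of the holes of $I_i(x_j,x_k)$ you want to exploit are never established, and the target configuration is internally inconsistent. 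You want $(x_i,x_j)$ to be ``the optimum pair'' while simultaneously lower-bounding $F$'s cost by a term of order $|x_i-x_j|$ and upper-bounding the optimal cost by something arbitrarily smaller than $|x_i-x_j|/\rho$. But for a $3$-agent, $2$-facility instance the optimal cost is exactly the smaller of the two consecutive gaps; if $x_k$ is ``far from the pair $(x_i,x_j)$'', that smaller gap is $|x_i-x_j|$ itself, and an $F$-cost of order $|x_i-x_j|$ yields a constant ratio, not a contradiction.

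What is actually needed (and what the paper does) is an instance in which the two forced facilities land within distance $\delta$ of each other while the third agent sits isolated at distance much larger than $\rho\delta$. Concretely: take the $(i|j,k)$-well-separated instance $\vec{z}=(0,1,1+\eps)$; since $\l(i,0)=(j,0)$, Lemma~\ref{l:i-separated} forces $F_2(\vec{z})=1$, so there is a $z_k$-hole $(1,r)$ in $I_k(0,1)$. Then in $\vec{z}'=(\vec{z}_{-\{i,k\}},r,r-\delta)$ (permutation $x_j<x_k<x_i$, a $(j|k,i)$-well-separated instance), strategyproofness of agent $i$ against the deviation to $0$ forces a facility at $r$, and $\l(j,1)=(k,1)$ together with Lemma~\ref{l:non-separated} forces one at $r-\delta$; agent $j$ at $1$ is then served at distance $r-1-\delta$ while the optimum is $\delta$. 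Your plan of cataloguing forced placements in all six permutations and then inspecting $I_i(x_j,x_k)$ does not, as written, produce such a configuration, so the hard half of the lemma remains unproved.
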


\begin{proof}
We only show that for all locations $b$, $\l(j, b) =\,\uparrow$. The claim that $\r(j, b) =\,\downarrow$ follows from a symmetric argument.
In the following, we let $(i, a)$ be any agent-location pair with $\l_p(i, a) = (j, a)$, and let $k$ be the third agent. We observe that for all locations $a' \in \reals$, $\l(i, a') = (j, a')$, by Lemma~\ref{l:always-a}, and $\r(k, a') = (j, a')$, by Lemma~\ref{l:others-a}.
For sake of a contradiction, we assume that for some location $b \in \reals$, $\l_p(j, b) = b$. Therefore, there exists a (unique) preferred agent $\ell \in \{i, k\}$ such that $\l(j, b) = (\ell, b)$, and by Lemma~\ref{l:always-a}, for all locations $b' \in \reals$, $\l(j, b') = (\ell, b')$. Moreover, for all instances $\vec{x}$ with $x_j = b < \min\{ x_i, x_k \}$, Lemma~\ref{l:non-separated} implies that $x_\ell \in F(\vec{x})$. To reach a contradiction, we next show that $\l_\ell(j, b) \neq i$ and that $\l_\ell(j, b) \neq k$.

To this end, we first assume that $\l_\ell(j, b) = i$, and consider an instance $\vec{y}$ with $y_j = 0$, $y_k = 1$, and $y_i = \eps$, where $\eps > 0$ is chosen small enough that $\vec{y}$ is an $(j,i|k)$-well-separated instance.
Since $\l(j, 0) = (i, 0)$, since $j$ is the leftmost agent, and since $y_i > 0$, Lemma~\ref{l:non-separated} implies that $y_i \in F(\vec{x})$.
Moreover, since $\r(k, 1) = (j, 1)$, $k$ is the rightmost agent, and $y_j < 1$, the symmetric version of Lemma~\ref{l:non-separated} implies that $y_j \in F(\vec{x})$. Therefore, the agent $k$ is served by the facility at $y_i$, which contradicts $F$'s bounded approximation ratio, because the instance $\vec{y}$ is $(j,i|k)$-well-separated. 

We have also to consider the case where $\l_\ell(j, b) = k$. In this case, we consider an instance $\vec{z}$ with $z_i = 0$, $z_j = 1$, and $z_k = 1+\eps$, where $\eps > 0$ is chosen small enough that $\vec{z}$ is an $(i|j,k)$-well-separated instance.
Since $\l(i, 0) = (j, 0)$, $i$ is the leftmost agent in $\vec{z}$, $z_j > 0$, and the instance $\vec{z}$ is $(i|j,k)$-well-separated, Lemma~\ref{l:i-separated} implies that $F_2(\vec{z}) = z_j$.
Therefore, $z_k \not\in F(\vec{z})$, and there is a $z_k$-hole $(l, r)$ in the imageset $I_k(z_i, z_j)$ with $l = 1$. We now consider the instance $\vec{z}' = (\vec{z}_{-\{i,k\}}, r, r-\delta)$, where $\delta > 0$ is chosen small enough that $r-\delta$ is in the right half of the hole $(l, r)$ and $\vec{z}'$ is an $(j|k,i)$-well-separated instance. We observe that $F(\vec{z}') = (r-\delta, r)$.
More specifically, $r = z'_i \in F(\vec{z}')$, due to $F$'s strategyproofness. Otherwise, the agent $i$ could switch to location $z_i$ and have $r \in F(\vec{z}'_{-i}, z_i)$, since $r$ is the closest point to $z'_k = r-\delta$ in the imageset $I_k(z_i, z_j)$.
Furthermore, since $\l(j, 1) = (k, 1)$, since $j$ is the leftmost agent in $\vec{z}'$, and since $z'_k > 1$, Lemma~\ref{l:non-separated} implies that $r-\delta = z'_k \in F(\vec{z}')$.
Therefore, the agent $j$ is served by the facility at $z'_k$, which contradicts $F$'s bounded approximation ratio, because the instance $\vec{z}'$ is $(j|k,i)$-well-separated. 
\qed\end{proof}

Now, we are are ready to conclude the proof of Theorem~\ref{thm:3agents}. We show that the behavior of any nice mechanism $F$ is essentially characterized by whether there are two agents that agree on imposing the third agent as a partial dictator or not. Theorem~\ref{thm:3agents} is an immediate consequence of the following lemma.

\begin{lemma}\label{l:3agents}
Either for all instances $\vec{y}$, $F(\vec{y}) \in ( \min \vec{y}, \max \vec{y} )$, or there exists an instance $\vec{x}$, with $x_i < x_j < x_k$, such that $x_j \in F(\vec{x})$. In the latter case:
\begin{itemize}
\item For all instances $\vec{y}$ with $y_i < y_j < y_k$, $y_j \in F(\vec{y})$.

\item Either for all instances $\vec{y}$ with $y_k < y_j < y_i$, $y_j \in F(\vec{y})$, or for all instances $\vec{y}$ with $y_k < y_j < y_i$,
    $F(\vec{y}) = ( \min \vec{y}, \max \vec{y} )$

\item For all the remaining instances $\vec{y}$, $F(\vec{y}) = ( \min \vec{y}, \max \vec{y} )$
\end{itemize}
\end{lemma}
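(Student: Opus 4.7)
\emph{Plan.} I would split into the two alternatives of the statement and propagate a single ``middle-agent-at-facility'' datum into a global structure by chaining the four lemmas of Section~\ref{s:3agent-fin}. The key technical move is the extraction of the partial-dictator identity $\l(i,x_i)=(j,x_i)$ from the mere existence of an instance $\vec{x}$ with $x_j\in F(\vec{x})$; all remaining steps are bookkeeping over permutations.

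\emph{First alternative.} Assume no strictly ordered instance places a facility at the middle agent. I plan to show $\l_p(i,a)=\,\uparrow$ and $\r_p(i,a)=\,\downarrow$ for every agent $i$ and location $a$. By Lemma~\ref{l:i-cases} the only other possibility for $\l_p(i,a)$ is $a$, with some preferred agent $\ell$; then choosing a third agent's location to the right of $\ell$'s produces an $i$-left-well-separated instance in which Lemma~\ref{l:i-separated} places $F_2$ at $x_\ell$, with $\ell$ sitting as the middle agent -- a contradiction. The symmetric argument handles $\r_p$. With $\l_p\equiv\,\uparrow$ and $\r_p\equiv\,\downarrow$ globally, Lemma~\ref{l:non-separated-inside} and its symmetric form force $F(\vec{y})=(\min\vec{y},\max\vec{y})$ on every strictly ordered $\vec{y}$; coincident-location instances reduce to the same outcome via unanimity and Proposition~\ref{prop:middle}.

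\emph{Second alternative.} Fix $\vec{x}$ with $x_i<x_j<x_k$ and $x_j\in F(\vec{x})$. First I would rule out $\l_p(i,x_i)=\,\uparrow$: in that case Lemma~\ref{l:non-separated-inside} gives $F_2(\vec{x})=x_k$ and hence $F_1(\vec{x})=x_j$, and then a case split on $\r(k,x_k)$ using Lemma~\ref{l:i-cases} and the symmetric forms of Lemma~\ref{l:non-separated-inside} and Lemma~\ref{l:non-separated} leaves only $\r_\ell(k,x_k)=j$, which via Lemma~\ref{l:others-a} globalizes to $\l(i,\cdot)=(j,\cdot)$, contradicting $\l_p(i,x_i)=\,\uparrow$. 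So $\l_p(i,x_i)=x_i$. The preferred agent cannot be $k$ either, since $\l(i,x_i)=(k,x_i)$ would propagate via Lemmas~\ref{l:others-a} and~\ref{l:only-dictator} to $\r_p(k,\cdot)=\,\downarrow$, and then the symmetric Lemma~\ref{l:non-separated-inside} would force $F_1(\vec{x})=x_i$, contradicting $x_j\in F(\vec{x})$. Thus $\l(i,x_i)=(j,x_i)$, after which Lemmas~\ref{l:always-a}, \ref{l:others-a}, and~\ref{l:only-dictator} fix the global structure $\l(i,\cdot)=(j,\cdot)$, $\r(k,\cdot)=(j,\cdot)$, $\l_p(j,\cdot)=\,\uparrow$, $\r_p(j,\cdot)=\,\downarrow$.

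\emph{Bullet points and main obstacle.} The first bullet then follows directly from Lemma~\ref{l:non-separated} applied to $\l(i,y_i)=(j,y_i)$. For the second, the behaviour on arrangements $y_k<y_j<y_i$ is controlled by $\r(i,\cdot)$; Lemma~\ref{l:always-a} makes this choice location-independent, so either $\r(i,\cdot)=(j,\cdot)$ globally -- the alternative $\r_\ell(i,\cdot)=k$ being excluded by Lemma~\ref{l:others-a} together with $\l_p(j,\cdot)=\,\uparrow$ -- giving $y_j\in F(\vec{y})$ by the symmetric Lemma~\ref{l:non-separated}, or $\r_p(i,\cdot)=\,\downarrow$ globally, in which case a parallel analysis on $\l(k,\cdot)$ forces $\l_p(k,\cdot)=\,\uparrow$ and the two forms of Lemma~\ref{l:non-separated-inside} pin $F(\vec{y})=(\min\vec{y},\max\vec{y})$. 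For the third, one enumerates the four remaining strict orderings of $\{i,j,k\}$; in each, $j$ occupies an extreme, so Lemma~\ref{l:non-separated-inside} or its symmetric counterpart applied to $\l_p(j,\cdot)=\,\uparrow$ or $\r_p(j,\cdot)=\,\downarrow$ puts one facility at the opposite extreme, while Lemma~\ref{l:non-separated} applied to the appropriate one of $\l(i,\cdot)$, $\r(k,\cdot)$, $\r(i,\cdot)$, $\l(k,\cdot)$ puts the other at $j$'s location; both subcases of the second bullet collapse to $(\min\vec{y},\max\vec{y})$. The hardest step is the extraction of $\l(i,x_i)=(j,x_i)$ in the second alternative: without assuming well-separation, Corollary~\ref{cor:dict-location} is unavailable, and each branch of the case split must be closed using Lemma~\ref{l:i-cases}, the non-well-separated lemmas, and the cross-agent propagation of Lemma~\ref{l:others-a} and Lemma~\ref{l:only-dictator} in a non-circular way.
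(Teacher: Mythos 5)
Your proposal follows essentially the same architecture as the paper's proof: a dichotomy driven by Lemma~\ref{l:i-cases}, propagation of the partial-dictator datum through Lemmas~\ref{l:always-a}, \ref{l:others-a} and~\ref{l:only-dictator}, and then an ordering-by-ordering application of Lemma~\ref{l:non-separated} and Lemma~\ref{l:non-separated-inside}. The only structural difference is the direction of the case split: the paper splits on whether some $\l_p(i,a)=a$ and lets the lemma's dichotomy fall out, whereas you split on the lemma's own dichotomy and must therefore extract $\l(i,x_i)=(j,x_i)$ from a not-necessarily-well-separated witness $\vec{x}$. That extraction works, but one sub-step of it is incomplete as written: to rule out $\l_\ell(i,x_i)=k$ you derive $\r_p(k,\cdot)=\,\downarrow$ and conclude $F_1(\vec{x})=x_i$, claiming this contradicts $x_j\in F(\vec{x})$ --- it does not, since $F_2(\vec{x})=x_j$ remains consistent with $F_1(\vec{x})=x_i$. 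The repair is one line: $\l(i,x_i)=(k,x_i)$ also gives $x_k\in F(\vec{x})$ by Lemma~\ref{l:non-separated} (as $x_k>x_i=p$), so $F(\vec{x})=(x_i,x_k)$ and $x_j\notin F(\vec{x})$. On the other hand, your exclusion of $\l(k,b)=(i,b)$ in the second bullet --- via Lemma~\ref{l:others-a} forcing $\r_p(j,\cdot)=b'$ against $\r_p(j,\cdot)=\,\downarrow$ --- is cleaner than the paper's, which builds an explicit $(k,i|j)$-well-separated instance and derives an approximation-ratio contradiction. With the one-line repair above, the argument is complete.
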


\begin{proof}
We distinguish between the case where for all agents $i \in \{ 1, 2, 3\}$ and for all locations $a \in \reals$, $\l_p(i, a) =\,\uparrow$, and the case where there exists an agent $i \in \{ 1, 2, 3\}$ and a location $a \in \reals$ such that $\l_p(i, a) = a$. Lemma~\ref{l:i-cases} implies that these two cases are indeed complementary to each other.

\smallskip\noindent{\bf Case I: $\forall i \forall a \, \l_p(i, a) =\,\uparrow$.}
We show that in this case, $F$ always places the facilities at two extremes.
To this end, we let $\vec{x}$ be any instance, and let $i$ be the leftmost agent and $k$ be the rightmost agent in $\vec{x}$. By hypothesis, $\l_p(i, x_i) =\,\uparrow$. Therefore, by Lemma~\ref{l:non-separated-inside}, $F_2(\vec{x}) = x_k$.
Moreover, $\r_p(k, x_k) =\,\downarrow$, since otherwise, it would be $\r_p(k, x_k) = x_k$, by Lemma~\ref{l:i-cases}, and thus $\l_p(i, x_i) = x_i$, by Lemma~\ref{l:others-a}, which contradicts the hypothesis. Therefore, by the symmetric version of Lemma~\ref{l:non-separated-inside}, $F_1(\vec{x}) = x_i$.

\smallskip\noindent{\bf Case II: $\exists i \exists a \, \l_p(i, a) = a$.}
We let $(i, a)$ be any agent-location pair with $\l_p(i, a) = a$, let $j = \l_\ell(i, a)$ be the preferred agent of $i$, and let $k$ be the third agent. Therefore, for all locations $b$, $\l_p(j, b) =\,\uparrow$ and $\r_p(j, b) =\,\downarrow$, by Lemma~\ref{l:only-dictator}. We next show that in this case, the agent $j$ serves as a partial dictator, and satisfies the latter case of the conclusion.

We first observe that for all locations $a' \in \reals$, $\l(i, a') = (j, a')$, by Lemma~\ref{l:always-a}, and $\r(k, a') = (j, a')$, by Lemma~\ref{l:others-a}. Therefore, for all instances $\vec{x}$ with $x_i < x_k$, $x_j \in F(\vec{x})$. More specifically, if $x_i < x_j$, this follows from Lemma~\ref{l:non-separated}, while if $x_j < x_k$, this follows from the symmetric version of Lemma~\ref{l:non-separated}. Then, using Lemma~\ref{l:non-separated-inside}, we obtain the conclusion of the lemma for all instances $\vec{x}$ with $x_i < x_k$\,:

\begin{itemize}
\item For all instances $\vec{x}$ with $x_i < x_j < x_k$, $x_j \in F(\vec{x})$.

\item For all instances $\vec{x}$ with $x_j < x_i < x_k$, $F_1(\vec{x}) = x_j$ and $F_2(\vec{x}) = x_k$, by Lemma~\ref{l:non-separated-inside}, because $\l_p(j, x_j) =\,\uparrow$. Therefore, $F(\vec{x}) = (\min\vec{x}, \max\vec{x})$.

\item For all instances $\vec{x}$ with $x_i < x_k < x_j$, $F_2(\vec{x}) = x_j$ and $F_1(\vec{x}) = x_i$, by the symmetric version of Lemma~\ref{l:non-separated-inside}, because $\r_p(j, x_j) =\,\downarrow$. Therefore, $F(\vec{x}) = (\min\vec{x}, \max\vec{x})$.
\end{itemize}

As for instances $\vec{x}$ with $x_k < x_i$, we first show that either $\l(k, b) = (j, b)$ or $\l_p(k, b) =\,\uparrow$, for all locations $b$ (i.e., we exclude the possibility that for some $b$, $\l(k, b) = (i, b)$). In the former case, the facilities are allocated as in the case where $x_i < x_k$. In the latter case, the facilities are placed at the two extremes.


To reach a contradiction, let us assume that for some location $b$, $\l(k, b) = (i, b)$. Let $\vec{y}$ be any $(k,i|j)$-well-separated instance with $y_k = b$. Since $\l(k, b) = (i, b)$, since $k$ is the leftmost agent in $\vec{y}$, and since $y_i > b$, Lemma~\ref{l:non-separated} implies that $y_i \in F(\vec{y})$.
Moreover, since $\r(j, y_j) =\,\downarrow$, and since $j$ is the rightmost and $y_k$ is the leftmost agent in $\vec{y}$, the symmetric version of Lemma~\ref{l:non-separated-inside} implies that $y_k \in F(\vec{y})$. Therefore, the agent $j$ is served by the facility at $y_i$, which contradicts $F$'s bounded approximation ratio, because $\vec{y}$ is an $(k,i|j)$-well-separated instance.

To conclude the proof, we distinguish between the case where there is a location $b$ such that $\l(k, b) = (j, b)$, and the case where for all locations $b$, $\l_p(k, b) =\,\uparrow$.
We recall that $\l_p(j, b) =\,\uparrow$ and $\r_p(j, b) =\,\downarrow$, for all locations $b$, by Lemma~\ref{l:only-dictator}, because $\l(i, a) = (j, a)$.

If there is a location $b$ such that $\l(k, b) = (j, b)$, then for all locations $b' \in \reals$, $\l(k, b') = (j, b')$, by Lemma~\ref{l:always-a}, and $\r(i, b') = (j, b')$, by Lemma~\ref{l:others-a}. Therefore, for all instances $\vec{x}$ with $x_k < x_i$, $x_j \in F(\vec{x})$. More specifically, if $x_k < x_j$, this follows from Lemma~\ref{l:non-separated}, while if $x_j < x_i$, this follows from the symmetric version of Lemma~\ref{l:non-separated}. Then, 
we use Lemma~\ref{l:non-separated-inside}, and obtain the conclusion of the lemma for all instances $\vec{x}$ with $x_k < x_i$\,:

\begin{itemize}
\item For all instances $\vec{x}$ with $x_k < x_j < x_i$, $x_j \in F(\vec{x})$.

\item For all instances $\vec{x}$ with $x_j < x_k < x_i$, $F_1(\vec{x}) = x_j$ and $F_2(\vec{x}) = x_i$, by Lemma~\ref{l:non-separated-inside}, because $\l_p(j, x_j) =\,\uparrow$. Therefore, $F(\vec{x}) = (\min\vec{x}, \max\vec{x})$.

\item For all instances $\vec{x}$ with $x_k < x_i < x_j$, $F_2(\vec{x}) = x_j$ and $F_1(\vec{x}) = x_k$, by the symmetric version of Lemma~\ref{l:non-separated-inside}, because $\r_p(j, x_j) =\,\downarrow$. Therefore, $F(\vec{x}) = (\min\vec{x}, \max\vec{x})$.
\end{itemize}

Otherwise, for all locations $b$, $\l_p(k, b) =\,\uparrow$. We observe that for all locations $b$, $\r_p(i, b) =\,\downarrow$. Otherwise, there would exist a location $b'$ with $\r_p(i, b') = b'$, by Lemma~\ref{l:i-cases}, and thus $\l_p(k, b') = b'$, by Lemma~\ref{l:others-a}, a contradiction.
As before, we now use Lemma~\ref{l:non-separated-inside}, and obtain the conclusion of the lemma for all instances $\vec{x}$ with $x_k < x_i$. More specifically, since $x_k < x_i$, the leftmost agent of $\vec{x}$ is either $k$ or $j$. If the leftmost agent is $k$ (resp. $j$), $F_2(\vec{x}) = \max\vec{x}$, by Lemma~\ref{l:non-separated-inside}, because $\l_p(k, x_k) =\,\uparrow$ (resp. $\l_p(j, x_j) =\,\uparrow$).
Similarly, the rightmost agent of $\vec{x}$ is either $j$ or $i$. If the rightmost agent is $j$ (resp. $i$), $F_1(\vec{x}) = \min\vec{x}$, by the symmetric version of Lemma~\ref{l:non-separated-inside}, because $\r_p(j, x_j) =\,\downarrow$ (resp.
$\r_p(i, x_i) =\,\downarrow$).
Thus, if $\l_p(k, b) =\,\uparrow$, for all instances $\vec{x}$ with $x_k < x_i$, $F(\vec{x}) = (\min\vec{x}, \max\vec{x})$.
\qed\end{proof}

\section{Strategyproof Allocation of 2 Facilities to 3 Locations: The Proof of Theorem~\ref{thm:3locations}}
\label{s:3locations}

The proof of Theorem~\ref{thm:3locations} is based on the following extension of Theorem~\ref{thm:3agents} to 3-location instances. In fact, we can restate the whole proof of Theorem~\ref{thm:3agents} with 3 coalitions of agents instead of 3 agents. Then, using that any strategyproof mechanism is also partial group strategyproof \cite[Lemma~2.1]{LSWZ10}, we obtain that:

\begin{corollary}\label{cor:3agents}
Let $F$ be any nice mechanism applied to 3-location instances with $n \geq 3$ agents. Then, there exist at most two permutations $\pi_1$, $\pi_2$ of the agent coalitions with $\pi_1(2) = \pi_2(2)$ such that for all instances $\vec{x}$ where the coalitions are arranged on the line according to $\pi_1$ or $\pi_2$, $F(\vec{x})$ places a facility at the location of the middle coalition. For any other permutation $\pi$ and instance $\vec{x}$, where the agent coalitions are arranged on the line according to $\pi$, $F(\vec{x}) = ( \min \vec{x}, \max \vec{x} )$.
\end{corollary}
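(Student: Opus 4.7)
The plan is to mimic the proof of Theorem~\ref{thm:3agents} verbatim, replacing each individual agent by a coalition of agents sharing a common location, and substituting partial group strategyproofness for strategyproofness wherever the original argument invokes it. Since any nice mechanism $F$ is partial group strategyproof by \cite[Lemma~2.1]{LSWZ10}, for every coalition $N_i$ occupying a single location $x_i$ and for every fixed configuration $\vec{x}_{-N_i}$ of the remaining two coalitions, we can define a coalition image set $I_{N_i}(\vec{x}_{-N_i})$ as the set of facility locations reachable by any simultaneous redeployment of $N_i$. As in \cite[Section~3.1]{LSWZ10}, $I_{N_i}(\vec{x}_{-N_i})$ is a union of closed intervals, the cost of every member of $N_i$ equals the distance from $x_i$ to the nearest point of $I_{N_i}(\vec{x}_{-N_i})$, and boundedness of $F$'s approximation ratio forces every hole in every such coalition image set to be a bounded interval. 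Thus the structural facts about image sets used throughout Section~\ref{s:well-separated}--\ref{s:3agent-fin} remain intact when the role of an agent is played by a coalition.

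Next, I would lift the notion of a well-separated instance to the 3-location setting: a 3-location instance $(x_1 \sep N_1, x_2 \sep N_2, x_3 \sep N_3)$ is $(N_i | N_j, N_k)$-well-separated if $x_i < x_j < x_k$ and $\rho(x_k - x_j) < x_j - x_i$, with the symmetric definitions for the other cases. The proofs of Propositions~\ref{prop:middle}, \ref{prop:interval}, \ref{prop:right_cover}, and \ref{prop:left_cover} use only the bounded approximation ratio of $F$ together with a deviation of the single nearby agent (or of the isolated agent). Performing the analogous deviation simultaneously by every member of the corresponding coalition is legitimized by partial group strategyproofness, and yields the identical conclusions: in any $(N_i | N_j, N_k)$-well-separated instance one facility serves $N_i$ and the other lies in $[x_j, x_k]$, while the left-cover and right-cover properties transfer to coalitions without change.

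With this coalition calculus in place, I would replay the entire chain Lemma~\ref{l:single-p} $\to$ Lemma~\ref{l:i-separated} $\to$ Lemmas~\ref{l:non-separated}--\ref{l:approx} $\to$ Lemma~\ref{l:i-cases} $\to$ Lemmas~\ref{l:always-a}--\ref{l:3agents} with the three coalitions $N_1, N_2, N_3$ in the roles of the three agents. Every auxiliary instance built in the original proofs is obtained by moving one or two agent locations while keeping the third fixed; in the coalition version each such move becomes a simultaneous redeployment of an entire coalition to a common new location, which preserves the 3-location structure and is a legal deviation under partial group strategyproofness. The thresholds $\l(N_i, x_i)$ and $\r(N_i, x_i)$ are defined exactly as before, Lemma~\ref{l:i-cases} still forces each threshold into $\{x_i, \uparrow\}$ or $\{x_i, \downarrow\}$, and the final case analysis of Lemma~\ref{l:3agents} yields the conclusion: either $F$ always outputs $(\min\vec{x}, \max\vec{x})$, or there is a distinguished middle coalition $N_j$ that is allocated a facility for every arrangement of the coalitions in which it lies between the other two, and possibly also for the reverse arrangement; in all other arrangements the facilities sit at the two extremes. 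This is precisely the statement of Corollary~\ref{cor:3agents}, with the at most two exceptional permutations $\pi_1, \pi_2$ satisfying $\pi_1(2) = \pi_2(2) = $ the index of the partial-dictator coalition.

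The only real obstacle is bookkeeping: one must check, for each auxiliary instance constructed in Section~\ref{s:3agents}, that it remains a 3-location instance on the fixed agent set $N$ and that every deviation invoked is a collective move of one coalition rather than of an individual agent. All moves in the original proof either slide one of the three locations while keeping the other two fixed, or send two locations to coincide; in the coalition setting these correspond respectively to a coalition-wide deviation to a new point or to a coalition-wide merge that leaves at most three distinct locations, both of which fall within the class $\Ithr(N)$. Because partial group strategyproofness supplies exactly the analogue of single-agent strategyproofness for such moves, every quantitative inequality derived from $F$'s approximation ratio $\rho$ (which depends only on $n$ and $K$, not on the number of distinct locations) remains valid, and the coalition version of the argument goes through without further modification.
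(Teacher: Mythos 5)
Your proposal is correct and is exactly the route the paper takes: the paper's entire justification for Corollary~\ref{cor:3agents} is the remark that one can ``restate the whole proof of Theorem~\ref{thm:3agents} with 3 coalitions of agents instead of 3 agents,'' invoking partial group strategyproofness from \cite[Lemma~2.1]{LSWZ10}. Your write-up simply fills in the bookkeeping (coalition image sets, coalition-wide deviations, preservation of the 3-location structure) that the paper leaves implicit.
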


A central notion in the proof of Theorem~\ref{thm:3locations} is that of a \emph{dictator coalition}. A (non-empty) coalition $C \subset N$, $|C| \leq |N|-2$, is called a dictator for 3-location instances, if for all partitions $N_1, N_2$ of $N \setminus C$ and all instances $\vec{x} = (x_1\sep N_1, x\sep C, x_2\sep N_2) \in \Ithr(N)$, $x \in F(\vec{x})$.
The first key step of the proof is to show that if there exists a 3-location instance where the middle coalition has at most $n-3$ agents and is allocated a facility, then any superset of the middle coalition is a dictator for 3-location instances.

\begin{lemma}\label{l:any-partition}
Let $N$ be a set of $n \geq 4$ agents. If there exists a 3-location instance $\vec{x} = (x_1\sep N_1, x_2\sep N_2, x_3\sep N_3)$ with $x_1 < x_2 < x_3$ and $|N_2| \leq n-3$, such that $x_2 \in F(\vec{x})$, then any coalition $N'_2 \supseteq N_2$ is a dictator for 3-location instances.
\end{lemma}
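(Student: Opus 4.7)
The plan is to establish two sub-claims in sequence: first, that $N_2$ itself is a dictator for 3-location instances; and second, that any superset $N_2' \supseteq N_2$ of a dictator coalition is again a dictator. For the second sub-claim (extension to supersets), I would write $N_2' = N_2 \cup T$ and consider any target instance $(y_1: A_1, y: N_2', y_3: A_3)$ with $(A_1, A_3)$ a partition of $N \setminus N_2'$. I would start from the instance $(y_1: A_1 \cup T, y: N_2, y_3: A_3)$, which has $y \in F$ by the first sub-claim applied to the valid partition $(A_1 \cup T, A_3)$ of $N \setminus N_2$. Then I would absorb each agent $a \in T$ one at a time: when $a$ moves from $y_1$ to $y$, strategyproofness of $a$ in the new instance (true location $y$, misreport $y_1$ reproducing the previous instance with $y \in F$) gives $\cost[y, F(\text{new})] \leq \cost[y, F(\text{old})] = 0$, so $y$ remains a facility. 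Iterating over $T$ yields the target instance with $y \in F$.

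For the first sub-claim, I would apply Corollary~\ref{cor:3agents} to the 3-coalition partition $(N_1, N_2, N_3)$: since $x_2 \in F(\vec{x})$, the corollary identifies $N_2$ as the middle partial dictator for this partition, so for every $y_1 < y < y_3$, the instance $(y_1: N_1, y: N_2, y_3: N_3)$ (and its mirror) has $y \in F$. To extend this to any partition $(B_1, B_3)$ of $N \setminus N_2$ into two non-empty parts, I would perform a sequence of single-agent swaps between the two outer coalitions, inductively maintaining that $N_2$ is the middle partial dictator for the current partition. The hypothesis $|N_2| \leq n-3$ gives $|N_1| + |N_3| \geq 3$, which guarantees both that the swap graph on such partitions is connected and that every required swap can be executed from an outer coalition of size $\geq 2$; this combinatorial slack is precisely why the hypothesis $|N_2| \leq n-3$ is essential.

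The technical heart of the argument, and the step I expect to be the main obstacle, is the single-swap step. Suppose $N_2$ is the middle partial dictator for $(B_1, B_3)$ with $|B_1| \geq 2$, and consider the new partition $\Pi' = (B_1 \setminus \{b\}, N_2, B_3 \cup \{b\})$ after moving some $b \in B_1$ to the other side. I would apply Corollary~\ref{cor:3agents} to $\Pi'$, obtaining three possibilities: (A) $N_2$ is again the middle partial dictator, as desired; (B) some other coalition is the middle partial dictator; or (C) $F$ outputs the two extremes for every instance with partition $\Pi'$. In both (B) and (C), for any locations $y_1 < y < y_3$ and the instance $\vec{y}' = (y_1: B_1 \setminus \{b\}, y: N_2, y_3: B_3 \cup \{b\})$ where $N_2$ is spatially middle, the corollary forces $F(\vec{y}') = (y_1, y_3)$. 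Strategyproofness of $b$ in $\vec{y} = (y_1: B_1, y: N_2, y_3: B_3)$ (true location $y_1$, misreport $y_3$ yielding $\vec{y}'$) then gives $\cost[y_1, F(\vec{y})] \leq \cost[y_1, F(\vec{y}')] = 0$, so $y_1 \in F(\vec{y})$; combined with $y \in F(\vec{y})$ from the inductive hypothesis, this forces $F(\vec{y}) = (y_1, y)$. But the social cost of this allocation on $\vec{y}$ is $|B_3|(y_3 - y)$, whereas placing facilities at $y$ and $y_3$ gives cost at most $|B_1|(y - y_1)$; fixing $y_1 = 0$, $y_3 = 1$, and letting $y \to 0$ makes the approximation ratio diverge, contradicting $F$'s bounded approximation. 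Hence only (A) survives, the induction goes through, and by the extension sub-claim every $N_2' \supseteq N_2$ is a dictator.
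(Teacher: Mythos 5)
Your proposal is correct in substance and follows the same overall strategy as the paper---migrate agents one at a time between coalitions while using Corollary~\ref{cor:3agents} to keep a facility pinned to the middle coalition---but the key technical step is executed differently. The paper first places the middle coalition at $y_1+\eps$ (so that, by the bounded approximation ratio, no facility sits at $y_1$) and justifies each outer-to-outer move by a direct strategyproofness violation: the moving agent at $y_1$ would gain by reporting $y_3$ if the middle lost its facility. It interleaves these moves with the moves into the middle coalition in a single left-to-right and then right-to-left sweep, at the cost of some case analysis ($|N_1|\geq 2$, $N_1\cap N'_1\neq\emptyset$, $|N'_2|=n-2$, $y_1>y_3$). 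You instead keep the middle location $y$ generic, deduce $F(\vec{y})=(y_1,y)$ when a swap ``fails'', and let $y\to y_1$ to blow up the approximation ratio; and you cleanly separate the redistribution of the outer agents (your first sub-claim) from growing the middle coalition (your second sub-claim, which matches the paper's strategyproofness argument for absorbing agents into the middle). Your version buys a cleaner contradiction and avoids the $\eps$-placement bookkeeping, at the price of having to argue connectivity of the swap graph on ordered partitions---which is exactly where $|N_2|\leq n-3$ enters in both proofs.

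Two points in your write-up should be made explicit, though neither is a genuine gap. First, the parenthetical ``(and its mirror)'' in your first sub-claim does not follow from Corollary~\ref{cor:3agents} alone: the corollary guarantees \emph{at most} two partial-dictator permutations with a common middle, so the witnessed instance $\vec{x}$ only certifies the ordering $(N_1,N_2,N_3)$; the mirror ordering $(N_3,N_2,N_1)$ must be reached through your swap chain, which your connectivity argument does deliver since with at least three outer agents one can route $(B_1,B_3)$ to $(B_3,B_1)$ without emptying a block. Second, the dictator property also requires the coalition to receive a facility when it sits at an \emph{extreme} location; this follows in one line from Corollary~\ref{cor:3agents} (once the middle orderings are the partial-dictator permutations, every other permutation yields the two extremes, which serves an extreme $N_2$), and your absorption step, which is insensitive to the ordering of $y_1,y,y_3$, then carries it to $N'_2$.
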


\begin{proof}
%
We consider any $3$-location instance $\vec{y} = (y_1\sep N'_1, y_2\sep N'_2, y_3\sep N'_3)$ with $N'_2 \supseteq N_2$, and show that $y_2 \in F(\vec{y})$.
We start with the case where $y_1 < y_3$. By the hypothesis about the existence of $\vec{x}$, and by Corollary~\ref{cor:3agents}, $F$ allocates a facility to the coalition $N_2$ for any instance $(y_1\sep N_1, y_2\sep N_2, y_3\sep N_3)$ with $y_1 < y_3$. In particular, the existence of $\vec{x}$, where the middle coalition $N_2$ is allocated a facility, implies that $N_2$ serves as the partial dictator of Corollary~\ref{cor:3agents}. Therefore, $F$ allocates a facility to $N_2$ if either $N_2$ is the middle coalition and $y_1 < y_3$, or it is the left or the right coalition.

Therefore, we only need to show that $F$ allocates a facility to the middle coalition $N'_2$ if $N_2 \subset N'_2$, and either $N'_1 \neq N_1$, or $N'_3 \neq N_3$ (or both). To this end, we show how to move agents between the left and the right coalition and from the left and the right coalitions to the middle coalition, and obtain the desired partition $N'_1, N'_2, N'_3$ of agents, while keeping a facility of $F$ allocated to the middle coalition. Then, the lemma follows from Corollary~\ref{cor:3agents}.

For simplicity, we assume that $|N_1| \geq 2$ and that $N_1 \cut N'_1 \neq \emptyset$. We show how to remove these assumptions later on.
We first consider the instance $\vec{y}' = (y_1\sep N_1, y_1+\eps\sep N_2, y_3\sep N_3)$, where $\eps > 0$ is chosen small enough that due to $F$'s bounded approximation ratio, the rightmost facility of $F(\vec{y}')$ is placed on the right of $y_1+\eps$. Since $F(\vec{x})$ allocates a facility to $N_2$,  Corollary~\ref{cor:3agents} implies that $y_1+\eps \in F(\vec{y}')$.
We observe that as long as there are at least two agents at $y_1$, we can move either some agent $j \in N_1 \cut N'_2$ from $y_1$ to $y_1+\eps$ or some agent $j \in N_1 \cut N'_3$ from $y_1$ to $y_3$, while keeping a facility of $F$ to $y_1+\eps$.
More specifically, if an agent $j \in N_1 \cut N'_2$ moves from $y_1$ to $y_1+\eps$, it holds that $y_1+\eps \in F(y_1\sep N_1 \setminus \{ j \}, y_1+\eps\sep N_2 \union \{ j \}, y_3\sep N_3)$, due to $F$'s strategyproofness. Otherwise, agent $j$ in instance $(y_1\sep N_1 \setminus \{ j \}, y_1+\eps\sep N_2 \union \{j\}, y_3\sep N_3)$ could manipulate $F$ by reporting $y_1$ instead of $y_1+\eps$.
If an agent $j \in N_1 \cut N'_3$ moves from $y_1$ to $y_3$, it holds that $y_1+\eps \in F(y_1\sep N_1 \setminus \{ j \}, y_1+\eps\sep N_2, y_3\sep N_3 \union \{j\})$, by Corollary~\ref{cor:3agents}. In particular, if the middle coalition is not allocated a facility in $(y_1\sep N_1 \setminus \{ j \}, y_1+\eps\sep N_2, y_3\sep N_3 \union y_3\sep N_3 \union \{j\})$, Corollary~\ref{cor:3agents} implies that the two facilities are placed at $y_1$ and $y_3$. Therefore, agent $j$ in instance $(y_1\sep N_1, y_1+\eps\sep N_2, y_3\sep N_3)$ could manipulate $F$ by reporting $y_3$ instead of $y_1$.
By repeatedly applying this argument, we move all agents in $N_1 \cut N'_2$ from the left to the middle coalition and all agents in $N_1 \cut N'_3$ from the left to the right coalition, and keep at least one agent at $y_1$, since $N_1 \cut N'_1 \neq \emptyset$. Thus, we obtain an instance $\vec{z} = (y_1\sep Z_1, y_1+\eps\sep Z_2, y_3\sep Z_3)$ such that $Z_1 = N_1 \cut N'_1 \neq \emptyset$, $Z_2 = N_2 \union (N_1 \cut N'_2)$, $Z_3 = N_3 \union (N_1 \cut N'_3)$, and $y_1+\eps \in F(\vec{z})$. Moreover, since $F$ allocates a facility to the middle coalition $Z_2$ of $\vec{z}$, by Corollary~\ref{cor:3agents}, $F$ allocates a facility to the coalition $Z_2$ for all instances $(y_1\sep Z_1, y\sep Z_2, y_3\sep Z_3)$ with $y_1 < y_3$.

Next, we consider the instance $\vec{z}' = (y_1\sep Z_1, y_3-\eps'\sep Z_2, y_3\sep Z_3)$, where $\eps' > 0$ is chosen small enough that due to $F$'s bounded approximation ratio, the leftmost facility of $F(\vec{z}')$ is located on the left of $y_3-\eps'$. By Corollary~\ref{cor:3agents}, $y_3-\eps' \in F(\vec{z}')$.
By an argument symmetric to the argument above, we obtain that as long as there are at least two agents at $y_3$, we can move either some agent $j \in Z_3 \cut N'_2$ from $y_3$ to $y_3-\eps$ or some agent $j \in Z_3 \cut N'_1$ from $y_3$ to $y_1$, while keeping a facility of $F$ to $y_3-\eps'$.
As before, by repeatedly applying this argument, we move all agents in $Z_3 \cut N'_2$ from the right to the middle coalition and all agents in $Z_3 \cut N'_1$ from the right to the left coalition. Thus, we obtain the instance $\vec{y}'' = (y_1\sep N'_1, y_3-\eps\sep N'_2, y_3\sep N'_3)$ where $y_3-\eps \in F(\vec{y}'')$.
Since $F$ allocates a facility to the middle coalition $N'_2$ of $\vec{y}''$, by Corollary~\ref{cor:3agents}, $F$ allocates a facility to the coalition $N'_2$ for all instances $\vec{y} = (y_1\sep N'_1, y_2\sep N'_2, y_3\sep N'_3)$ with $y_1 < y_3$.

We are now ready to remove the assumptions that $|N_1| \geq 2$ and that $N_1 \cut N'_1 \neq \emptyset$. If $|N_1| = 1$, then $|N_2| \geq 2$, because $|N_2| \leq n-3$, and we start moving agents from $N_3$ (i.e., we first consider the instance $\vec{z}'$ and then the instance $\vec{y}'$). If $N_1 \cut N'_1 = \emptyset$, we first apply two rounds of agent moves between the left and the right coalition, so that every agent in $N'_1 \union (N'_2 \setminus N_2)$ ends up in the left coalition and every agent in $N'_2$ ends up in the right coalition. In a final half-round of agent moves, where we consider only the instance $\vec{y}'$, every agent in $N'_2 \setminus N_2$ moves from the left coalition to the middle coalition. This also deals with the case where $|N'_2| = n-2$.
If $y_1 > y_3$, we first work as above and move all agents in $N_1$ from $y_1$ to $y_3$ and all agents in $N_3$ from $y_3$ and $y_1$. Thus, we obtain an instance $(y_3\sep N_1, y_2\sep N_2, y_1\sep N_3)$, with $y_3 < y_1$, and proceed as above.
\qed\end{proof}

To conclude the proof of Theorem~\ref{thm:3locations}, we distinguish, for technical reasons, between instances where the largest coalition has size at most $n-3$, and instances where the largest coalition has size $n-2$ and the other two coalitions are singletons. Given a set $N$ of agents, we let $\IthrS(N)$ denote the former class and $\IthrL(N)$ denote the latter class of 3-location instances. We note that for all sets $N$ of $n \geq 5$ agents, $\IthrS(N)$ and $\IthrL(N)$ form a partition of $\Ithr(N)$. The following pair of lemmas establish Theorem~\ref{thm:3locations} first for instances in $\IthrS(N)$, and then for all instances in $\Ithr(N)$. We first establish Theorem~\ref{thm:3locations} for instances in $\IthrS(N)$.

\begin{lemma}\label{l:unique-dictator-small}
Let $N$ be a set of $n \geq 5$ agents. If there is an instance $\vec{x} = (x_1\sep N_1, x_2\sep N_2, x_3\sep N_3) \in \IthrS(N)$ with $x_1 < x_2 < x_3$, such that $x_2 \in F(\vec{x})$, then there exists a unique agent $j \in N_2$ such that for all instances $\vec{y} \in \IthrS(N)$, $y_j \in F(\vec{y})$.
\end{lemma}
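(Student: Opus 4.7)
The plan has three phases: setup, uniqueness, and existence.

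First, I would apply Lemma~\ref{l:any-partition} directly to the hypothesised $\vec{x}$ to record that every $N'_2 \supseteq N_2$ with $|N'_2| \leq n-2$ is a dictator for 3-location instances; in particular, $N_2$ itself is one. The uniqueness of $j$ then follows from a short approximation argument. Suppose, for contradiction, that two distinct $j, j' \in N_2$ both satisfy the conclusion. Since $n \geq 5$, pick $a \in N \setminus \{j, j'\}$ and form the partition $M_1 = \{j, a\}$, $M_2 = \{j'\}$, $M_3 = N \setminus \{j, j', a\}$, of sizes $(2, 1, n-3)$, which lies in $\IthrS(N)$. Consider the instance $\vec{y} = (0 \sep M_1,\,1 \sep M_2,\,R \sep M_3)$ with $R$ arbitrarily large. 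The dual-dictator assumption forces $\{0, 1\} \subseteq F(\vec{y})$, hence $F(\vec{y}) = (0, 1)$, giving social cost at least $(n-3)(R-1) \geq 2(R-1)$, while the optimum is at most $1$ (facilities at $0$ and $R$). As $R \to \infty$, this violates the bounded approximation ratio of $F$.

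For existence, I would fix a candidate $j \in N_2$ and verify $y_j \in F(\vec{y})$ for every $\vec{y} \in \IthrS(N)$ by distinguishing whether $N_2$ stays inside a single coalition of $\vec{y}$'s partition. In the non-split case, some coalition $M_i$ of $\vec{y}$ contains all of $N_2$; then $M_i$ is a dictator by the setup, and Corollary~\ref{cor:3agents} forces $M_i$ to be the designated middle $C^*$ of the partition. If $i = 2$ the middle receives a facility, so $y_j = y_2 \in F(\vec{y})$; otherwise $M_i$ sits at an extreme, Corollary~\ref{cor:3agents} yields $F(\vec{y}) = (\min\vec{y}, \max\vec{y})$, and $y_j$ coincides with that extreme.

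The delicate case, where $N_2$ is split across two or three coalitions of $\vec{y}$, is the main technical hurdle. I would attack it by importing the agent-moving technique from Lemma~\ref{l:any-partition}'s proof: from a split instance one migrates a single $N_2$-agent into a neighbouring coalition using partial group strategyproofness, yielding a non-split auxiliary instance whose designated middle must contain $N_2$ by the setup, so the middle receives a facility there. Strategyproofness of the migrated agent then constrains the original split-instance's facility configuration, and iterating this tracking over all split partitions $\pi$ should single out one agent $j^* \in N_2$ lying in $C^*(\pi)$ for every $\pi$ in $\IthrS(N)$; combined with the uniqueness from the first paragraph, $j^*$ is the desired dictator. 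The hardest part I anticipate is making this tracking argument rigorous when $N_2$ is split across all three coalitions simultaneously and the auxiliary instances must also be kept inside $\IthrS(N)$ so that the setup applies to them, which is exactly where the size condition $|M_i| \leq n-3$ and the hypothesis $n \geq 5$ are essential.
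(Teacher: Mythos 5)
Your setup and your uniqueness argument are sound and match the paper's in spirit: $N_2$ (and any superset of size at most $n-2$) is a dictator coalition by Lemma~\ref{l:any-partition}, and two distinct single-agent dictators would force both their locations into $F(\vec{y})$ on an instance where this strands a large coalition, contradicting the bounded ratio. The non-split case of existence is also fine (indeed simpler than you make it: if some coalition $M_i$ of $\vec{y}$ contains $N_2$, it is a dictator coalition, so its location is in $F(\vec{y})$ by definition, with no need to consult Corollary~\ref{cor:3agents}).

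The genuine gap is the split case, which is the entire content of the lemma, and your sketch does not contain an argument that closes it. Migrating an $N_2$-agent into a neighbouring coalition and invoking (partial group) strategyproofness only constrains that agent's \emph{cost} across the two instances; it does not force a facility at her original location in the split instance, nor does it tell you \emph{which} fragment of $N_2$ retains the facility. Since different split partitions could a priori favour different agents of $N_2$, ``iterating the tracking'' has no mechanism for singling out one agent $j^*$ that works simultaneously for all of $\IthrS(N)$ --- and that consistency is exactly what must be proved. The paper closes this by a minimal-counterexample argument of a quite different shape: assuming no single agent works, it takes a minimal dictator coalition $N'_2 \subseteq N_2$ with $|N'_2| \geq 2$, and shows that then $\{j,i\}$ is a dictator coalition for \emph{every} $j \in N'_2$ and $i \notin N'_2$. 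That step is itself delicate: supposing $\{j,i\}$ is not a dictator, Lemma~\ref{l:any-partition} forbids the middle location of $(x_1\sep N', x_2\sep\{j,i\}, x_3\sep N'_2\setminus\{j\})$ from receiving a facility; sending $i$ far to the right to a point $r$ in her image set and case-analysing the position $a$ of the second facility ($a = x_2$, $a > x_2$, $a < x_2$) yields a profitable deviation for $i$, for $N'$, or for $N'_2\setminus\{j\}$ respectively, using dictatorship of $N'_2$ and non-dictatorship of $N'_2\setminus\{j\}$ to pin down the resulting outcomes. Having two disjoint dictator pairs $\{j_1,i_1\}$ and $\{j_2,i_2\}$ then forces both facilities onto two nearby locations, abandoning the remaining $n-4$ agents and contradicting the bounded ratio. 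Without this (or some equally concrete substitute), your proof establishes the lemma only for instances in which $N_2$ is not split, so the core claim remains open.
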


\begin{proof}
Let $\vec{x} = (x_1\sep N_1, x_2\sep N_2, x_3\sep N_3) \in \IthrS(N)$ with $x_1 < x_2 < x_3$, such that $x_2 \in F(\vec{x})$. Then, by Lemma~\ref{l:any-partition}, $N_2$ is a dictator for 3-location instances. For sake of contradiction, we assume that there is a minimal (sub)coalition $N'_2 \subseteq N_2$, $|N'_2| \geq 2$, that violates the lemma. Namely, $N'_2$ is a dictator for 3-location instances, while for every agent $j \in N'_2$, $N'_2 \setminus \{ j \}$ is not a dictator.
The lemma follows from the observation that if such a minimal dictator coalition $N'_2$ exists, then, for any agent $j \in N'_2$ and any agent $i \in N \setminus N'_2$, $\{ j, i \}$ is a dictator for 3-location instances.

Before proving this claim, let us first show that it indeed implies the lemma. By the claim above, if we let $j_1, j_2$ be any two agents in the minimal dictator coalition $N'_2$, for any pair of agents $i_1, i_2 \in N \setminus N'_2$, the coalitions $\{ j_1, i_1 \}$ and $\{ j_2, i_2 \}$ are both dictators for 3-location instances.
But the existence of two disjoint dictator coalitions contradicts the hypothesis that $F$ has a bounded approximation ratio.
To see this, we consider the instance $\vec{z} = (0\sep  N \setminus \{j_1, i_1, j_2, i_2\}, 1\sep \{ j_1, i_1\}, 1+\eps\sep \{j_2, i_2\})$, where $\eps > 0$ is chosen sufficiently small. Since both $\{ j_1, i_1\}$ and $\{ j_2, i_2 \}$ are dictators for 3-location instances, $F(\vec{z}) = (1, 1+\eps)$, which for a sufficiently small $\eps$, contradicts that $F$ has a bounded approximation ratio.

Therefore, there exists an agent $j \in N_2$ such that for all instances $\vec{y} \in \IthrS(N)$, $y_j \in F(\vec{y})$. The uniqueness of such an agent $j$ follows from an argument similar to the argument above, due to $F$'s bounded approximation ratio.

To complete the proof of Lemma~\ref{l:unique-dictator-small}, we have also to show that if there is a minimal coalition $N'_2$, $|N'_2| \geq 2$, such that $N'_2$ is a dictator for $3$-location instances, while for every agent $j \in N'_2$, $N'_2 \setminus \{ j \}$ is not a dictator for $3$-location instances, then for any agent $j \in N'_2$ and any agent $i \in N \setminus N'_2$, the coalition $\{ j, i \}$ is a dictator for $3$-location instances.
For sake of contradiction, we assume that there is an agent $j \in N'_2$ and an agent $i \in N \setminus N'_2$, such that $\{ i, j \}$ is not a dictator for $3$-location instances. For simplicity of notation, we let $C' = N'_2 \setminus \{ j \}$, $C_j = \{j, i\}$, and $N' = N \setminus (C' \union C_j)$. Since the coalition $C_j$ is not a dictator, Lemma~\ref{l:any-partition} implies that for all instances $\vec{x} = (x_1\sep N', x_2\sep C_j, x_3\sep C')$ with $x_1 < x_2 < x_3$, $x_2 \not\in F(\vec{x})$.
To reach a contradiction, we consider any such instance $\vec{x}$, and choose a location $r > 2|x_3|+|x_2|$ large enough that for the ($4$-location) instance $\vec{x}' = (\vec{x}_{-i}, r)$, $r \in F(\vec{x}')$. Such an $r$ exists because $F$ has a bounded approximation ratio, and thus every hole in the image set $I_i(\vec{x}_{-i})$ is a bounded interval.

Let $a$ be the location of the other facility of $F(\vec{x}')$. We show that there is no choice of $a$ compatible with the assumption that $F$ is strategyproof, thus obtaining a contradiction.
More specifically, if $a = x_2$, the agent $i$ in the instance $\vec{x}$ could manipulate $F$ by reporting $r$ instead of $x_2$.
If $a \in (x_2, +\infty)$, the coalition $N'$ in $\vec{x}$ could manipulate $F$ by reporting $x_2$ instead of $x_1$. Then, $\vec{x}_1 = (x_2\sep N' \union \{j\}, x_3\sep C', r\sep \{i\})$ is a $3$-location instance, and since the coalition $C'$ is not a dictator for $3$-location instances, $F_1(\vec{x}_1) = x_2$. Otherwise, by Corollary~\ref{cor:3agents}, $x_3 \in F(\vec{x})$, and thus, by Lemma~\ref{l:any-partition}, $C'$ would be a dictator.
If $a \in [-\infty, x_2)$, the coalition $C'$ in $\vec{x}$ could manipulate $F$ by reporting $x_2$ instead of $x_3$. Then, $\vec{x}_2 = (x_1\sep N', x_2\sep N'_2, r\sep \{i\})$ is a $3$-location instance, and since the coalition $N'_2$ is a dictator for $3$-location instances, $x_2 \in F(\vec{x}_2)$.

Therefore, there is an instance $\vec{x} = (x_1\sep N', x_2\sep C_j, x_3\sep C')$, with $x_1 < x_2 < x_3$ and $|C_j| \leq n-3$, such that $x_2 \in F(\vec{x})$. Thus, by Lemma~\ref{l:any-partition}, the coalition $C_j$ is a dictator for $3$-location instances. This concludes the proof of the claim and the proof of the lemma.
\qed\end{proof}

The next lemma shows that that $F$ behaves in the same way for all instances in $\Ithr(N)$, and concludes the proof of Theorem~\ref{thm:3locations}. Interestingly, Lemma~\ref{l:unique-dictator-large} shows that we can tell whether all instances in $\Ithr(N)$ admit a dictator or not, by only checking whether instances in $\IthrS(N)$ admit a dictator.


\begin{lemma}\label{l:unique-dictator-large}
Let $N$ be a set of $n \geq 5$ agents. If there is an instance $\vec{x} = (x_1\sep N_1, x_2\sep N_2, x_3\sep N_3) \in \IthrS(N)$ with $x_1 < x_2 < x_3$, such that $x_2 \in F(\vec{x})$, then there exists a unique agent $j \in N_2$ such that for all instances $\vec{y} \in \Ithr(N)$, $y_j \in F(\vec{y})$. Otherwise, for all instances $\vec{y} \in \Ithr(N)$, $F(\vec{y}) = (\min \vec{y}, \max \vec{y})$.
\end{lemma}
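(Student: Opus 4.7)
The plan is to handle the two alternatives of the statement separately, using Corollary~\ref{cor:3agents} and Lemma~\ref{l:any-partition} to reduce the analysis to a small number of subcases. Fix $\vec{y} = (y_1\sep N_1, y_2\sep N_2, y_3\sep N_3) \in \IthrL(N)$ with $y_1 < y_2 < y_3$; exactly one of the $N_i$ has size $n-2$ and the other two are singletons. By Corollary~\ref{cor:3agents} applied to the partition $(N_1, N_2, N_3)$, $F(\vec{y})$ is either $(y_1, y_3)$ or places a facility at $y_2$, so the task in each subcase reduces to choosing between these two options.

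Suppose first the hypothesis of Lemma~\ref{l:unique-dictator-small} holds, and let $j$ be the unique $\IthrS$-dictator. An $\IthrS$ instance with $\{j\}$ as middle coalition exists whenever $n \geq 5$ (split the other $n-1$ agents as $n-3$ and $2$), and in such an instance the dictator property gives $y_j \in F$; Lemma~\ref{l:any-partition} then promotes $\{j\}$ to a dictator coalition for all of $\Ithr$. So whenever $j$ sits in the middle coalition of $\vec{y}$---whether that coalition is $\{j\}$ itself or the size-$(n-2)$ coalition containing $j$---the conclusion $y_j \in F(\vec{y})$ follows directly, using in the second case an additional SP move of a second middle agent to an extreme, which brings us into $\IthrS$ and then transfers the facility back to $\vec{y}$ because the moved agent's truthful cost is zero. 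When $j$ is a singleton extreme of $\vec{y}$ and the middle coalition is a singleton $\{m\}$ with $m \neq j$, Corollary~\ref{cor:3agents} together with Lemma~\ref{l:any-partition} and the uniqueness of $j$ rules out ``facility at $y_2$'' (otherwise $\{m\}$ would be a dictator coalition, producing a second unique $\IthrS$-dictator by Lemma~\ref{l:unique-dictator-small}), so $F(\vec{y}) = (y_1, y_3)$ and $y_j$ is one of the extremes. The remaining and hardest subcase is $j$ a singleton extreme with the size-$(n-2)$ coalition in the middle. I would pick $i \in N_2$, form the $\IthrS$ instances $\vec{z}$ and $\vec{z}^\#$ by relocating $i$ to $y_1$ and to $y_3$, and use the uniqueness of $j$ to conclude $F(\vec{z}) = F(\vec{z}^\#) = (y_1, y_3)$; then, tracking the image set $I_i(\vec{y}_{-i})$ (which contains $y_1, y_3$, and would contain $y_2$ exactly when $y_2 \in F(\vec{y})$) together with the rigidity of Corollary~\ref{cor:3agents} on the fixed partition of $\vec{y}$ and instances where $y_3 - y_2$ is stretched against $y_2 - y_1$ so that the bounded approximation ratio of $F$ becomes incompatible with one of the shapes $(p, y_2)$ or $(y_2, q)$, I expect to force $y_2 \notin F(\vec{y})$.

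In the complementary case, no $\IthrS$ instance with $x_1 < x_2 < x_3$ has $x_2 \in F(\vec{x})$; Corollary~\ref{cor:3agents} applied partition-by-partition then forces $F(\vec{z}) = (\min\vec{z}, \max\vec{z})$ for every $\vec{z} \in \IthrS$. For $\vec{y} \in \IthrL$, when the middle coalition has size $1$, ``facility at $y_2$'' would make that singleton a dictator coalition by Lemma~\ref{l:any-partition}, contradicting the hypothesis via any $\IthrS$ instance with the singleton in the middle; hence $F(\vec{y}) = (\min\vec{y}, \max\vec{y})$ in this subcase. The only remaining work is again the size-$(n-2)$ middle; I would close it by the same image-set-plus-stretched-instance argument used in the hard subcase above, now relying on $F(\vec{z}) = (\min\vec{z}, \max\vec{z})$ on the $\IthrS$ neighbours of $\vec{y}$ in lieu of the uniqueness of a dictator.

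The main obstacle, common to both cases, is the subcase where $|N_2| = n-2$: Lemma~\ref{l:any-partition} is just out of reach since it requires $|N_2| \leq n-3$, and a direct SP exchange between $\vec{y}$ and its $\IthrS$ neighbours produces only trivial inequalities, because $y_2 \in F(\vec{y})$ makes the moved agent's truthful cost equal to zero. Closing this gap will require combining the rigidity of Corollary~\ref{cor:3agents} on the fixed partition of $\vec{y}$ with stretched-location instances that, via the bounded approximation ratio of $F$, rule out one of the two possible shapes $(p, y_2)$ or $(y_2, q)$, in the spirit of the arguments in the proofs of Lemma~\ref{l:i-cases} and Lemma~\ref{l:any-partition}.
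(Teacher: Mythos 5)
Your overall decomposition (dictator case vs.\ no-dictator case, with $\IthrL(N)$ as the residual difficulty) matches the paper's, but the proposal has a genuine gap, and it is exactly the one you flag yourself: the subcase $|N_2| = n-2$ is never actually closed. In both occurrences you only describe an intended ``image-set-plus-stretched-instance'' argument (``I expect to force $y_2 \notin F(\vec{y})$'', ``I would close it by\dots''); no concrete deviation or contradiction is exhibited, and the obstacle you correctly identify --- that when $y_2 \in F(\vec{y})$ the middle agents' truthful cost is zero, so moving them yields vacuous strategyproofness inequalities --- is precisely what such a sketch must overcome. The paper's missing idea is to move the \emph{extreme singleton} agent $k$ to a far location $r$, producing a \emph{four}-location instance $\vec{x}' = (\vec{x}_{-k}, r)$ with $r \in F(\vec{x}')$, and then to rule out every possible position $a$ of the second facility: $a = x_2$ violates strategyproofness for $k$; $a > x_2$ lets the left singleton deviate to $x_2$ and land in an $\IthrS(N)$ instance whose outcome is already pinned down; $a < x_2$ lets the coalition at $x_3$ deviate to $x_2$ and land in the known $\IthrL(N)$ instance with a facility at the middle. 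Each deviation is strictly profitable because the deviating agents have \emph{positive} truthful cost, which is how the paper escapes the trap you ran into. This is Claim~\ref{cl:remove-agent-claim}, and without it (or an equivalent) your Case~2 is incomplete.

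Separately, your ``hardest subcase'' in Case~1 is spurious. Once you have a single $\IthrS(N)$ instance with the singleton $\{j\}$ as the middle coalition receiving a facility (which you correctly construct), Lemma~\ref{l:any-partition} makes \emph{every superset of $\{j\}$} a dictator coalition for $3$-location instances, and a dictator coalition by definition receives a facility for \emph{all} partitions of the remaining agents and all locations --- including instances in $\IthrL(N)$, instances where $j$'s coalition is an extreme, and instances where $j$ shares her location with $n-3$ others. So in Case~1 the conclusion $y_j \in F(\vec{y})$ for all $\vec{y} \in \Ithr(N)$ follows in one line, and none of the additional SP moves, the exclusion of a second dictator $\{m\}$, or the image-set analysis for ``$j$ a singleton extreme with the size-$(n-2)$ coalition in the middle'' is needed. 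The real content of the lemma is entirely concentrated in the Case~2 subcase you left open.
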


\begin{proof}
We distinguish between the case where for some $\vec{x} = (x_1\sep N_1, x_2\sep N_2, x_3\sep N_3) \in \IthrS(N)$, with $x_1 < x_2 < x_3$, $x_2 \in F(\vec{x})$, and the case where for all instances $\vec{y} \in \IthrS(N)$, $F(\vec{y}) = (\min\vec{y}, \max\vec{y})$.

In the former case, Lemma~\ref{l:unique-dictator-small} implies the existence of a unique agent $j \in N_2$ such that for all instances $\vec{y} \in \IthrS(N)$, $y_j \in F(\vec{y})$. Let $i, k \in N$ be any two agents different from $j$. Since the instance $\vec{x}' = (x_1\sep N\setminus\{i,j,k\}, x_2\sep \{ j \}, x_3\sep \{i, k\})$ is a $3$-location instance in $\IthrS(N)$, $x_2 \in F(\vec{x}')$. Moreover, since $x_1 < x_2 < x_3$ and the cardinality of the middle coalition of $\vec{x}'$ is at most $n-3$, Lemma~\ref{l:any-partition} implies that any coalition $N'_2$ that includes $j$ is a dictator for $3$-location instances. Therefore, for all instances $\vec{y} \in \Ithr(N)$, $y_j \in F(\vec{y})$. The uniqueness of such an agent $j$ follows from the bounded approximation ratio of $F$, as in the proof of Lemma~\ref{l:unique-dictator-small}.

In the latter case, for all instances $\vec{y} \in \IthrS(N)$, $F(\vec{y}) = (\min \vec{y}, \max \vec{y})$, and thus instances in $\IthrS(N)$ do not admit a dictator. We next show that for all $\vec{y} \in \IthrL(N)$, it is also the case that $F(\vec{y}) = (\min \vec{y}, \max \vec{y})$.
For sake of contradiction, let us assume that there is an instance $\vec{z} = (z_1\sep N_1, z_2\sep N_2, z_3\sep N_3) \in \IthrL(N)$ with $z_1 < z_2 < z_3$, such that $F(\vec{z}) \neq (z_1, z_3)$. Thus, by Corollary~\ref{cor:3agents}, $z_2 \in F(\vec{z})$. Since $\vec{z} \in \IthrL(N)$, a coalition has size $n-2$ and the other two coalitions are singletons. If the middle coalition $N_2$ is a singleton, by Lemma~\ref{l:any-partition}, the agent in $N_2$ is a dictator for all $3$-location instances. Otherwise, $|N_2| = n-2$, and Claim~\ref{cl:remove-agent-claim} implies that for any agent $j \in N_2$, the coalition $N_3 \union \{ j \}$ is a dictator for $3$-location instances. In both cases, we reach a contradiction to the hypothesis that instances in $\IthrS(N)$ do not admit a dictator.

\begin{myclaim}\label{cl:remove-agent-claim}
Let $N$ be a set of $n \geq 5$ agents. If for all instances $\vec{y} \in \IthrS(N)$, $F(\vec{y}) = (\min\vec{y}, \max\vec{y})$, and there exists an instance $\vec{z} = (z_1\sep N_1, z_2\sep N_2, z_3\sep N_3)$, with $z_1 < z_2 < z_3$ and $N_2 = |n-2|$, such that $z_2 \in F(\vec{z})$, then for any agent $j \in N_2$, the coalition $N_3 \union \{ j \}$ is a dictator for $3$-location instances.
\end{myclaim}

\begin{proof}[of Claim~\ref{cl:remove-agent-claim}]
For simplicity of notation, we let $j \in N_2$ be any agent, let $C' = N_2 \setminus \{ j \}$, let $i$ be the unique agent in $N_1$ and $k$ be the unique agent in $N_3$, and let $C_j = \{ j, k \}$.
For sake of contradiction, let us assume that the coalition $C_j$ is not a dictator for $3$-location instances. Therefore, since $|C_j| \leq n-3$, by Lemma~\ref{l:any-partition}, for all instances $\vec{x} = (x_1\sep \{i\}, x_2\sep C_j, x_3\sep C')$, with $x_1 < x_2 < x_3$, $x_2 \not\in F(\vec{x})$.
To reach a contradiction, we consider any such instance $\vec{x}$, and choose a location $r > 2|x_3|+|x_2|$ large enough that for the ($4$-location) instance $\vec{x}' = (\vec{x}_{-k}, r)$, $r \in F(\vec{x}')$. Such an $r$ exists because $F$ has a bounded approximation ratio, and thus every hole in the image set $I_k(\vec{x}_{-k})$ is a bounded interval.

Let $a$ be the location of the other facility of $F(\vec{x}')$. We show that there is no choice of $a$ compatible with the assumption that $F$ is strategyproof, thus obtaining a contradiction.
More specifically, if $a = x_2$, the agent $k$ in the instance $\vec{x}$ could manipulate $F$ by reporting $r$ instead of $x_2$.
If $a \in (x_2, +\infty)$, the agent $i$ in $\vec{x}$ could manipulate $F$ by reporting $x_2$ instead of $x_1$. Then, $\vec{x}_1 = (x_2\sep \{i, j\}, x_3\sep C', r\sep \{k\})$ is a $3$-location instance in $\IthrS(N)$, and by the hypothesis of the claim, $F(\vec{x}_1) = (x_2, r)$.
If $a \in [-\infty, x_2)$, the coalition $C'$ in $\vec{x}$ could manipulate $F$ by reporting $x_2$ instead of $x_3$. Then, $\vec{x}_2 = (x_1\sep \{i\}, x_2\sep N_2, r\sep \{k\})$ is a $3$-location instance where $N_1 = \{i\}$ is the left coalition, $N_2$ is the middle coalition, and $N_3 = \{k\}$ is the right coalition. Therefore, by the hypothesis of the claim and Corollary~\ref{cor:3agents}, $x_2 \in F(\vec{x}_2)$.

Hence, there is an instance $\vec{x} = (x_1\sep \{i\}, x_2\sep C_j, x_3\sep C')$, with $x_1 < x_2 < x_3$ and $|C_j| \leq n-3$, such that $x_2 \in F(\vec{x})$. Thus, by Lemma~\ref{l:any-partition}, the coalition $C_j$ is a dictator for $3$-location instances. \qed\end{proof}

With the proof of Claim~\ref{cl:remove-agent-claim}, we conclude the proof of the lemma.
\qed\end{proof}

\section{Strategyproof Allocation of 2 Facilities: The Proof of Theorem~\ref{thm:2fac-gen}}
\label{s:2fac-gen}

The final step is that we extend Theorem~\ref{thm:3locations} to general instances with $n \geq 5$ agents, and conclude the proof of Theorem~\ref{thm:2fac-gen}. 
The proof considers two different cases, depending on how the mechanism $F$ behaves for 3-location instances, and proceeds by induction on the number of different locations.

We first consider the case where $F$ admits a dictator $j$ for 3-location instances, and show that agent $j$ is a dictator for all $\vec{x} \in \I(N)$.
For sake of contraction, we assume 
an instance $\vec{x} = (x_1, \ldots, x_n) \in \I(N)$ for which $x_j \not\in F(\vec{x})$. 
W.l.o.g., we let $k \neq j$ be the rightmost agent of $\vec{x}$ (if $j$ is the rightmost agent, the argument is symmetric). Since $x_j \not\in F(\vec{x})$, there is a $x_j$-hole $(l, r)$ in the imageset $I_j(\vec{x}_{-j})$.
For a small $\eps \in (0, (r-l)/2)$, we consider the instance $\vec{x}_1 = (\vec{x}_{-j}, l+\eps)$, where $j$ moves from $x_j$ to $l+\eps$. By strategyproofness, and since $l+\eps$ is in the left half of the hole $(l, r)$, $l \in F(\vec{x}_1)$.
Then, we iteratively move all agents $i \in N \setminus \{j, k\}$ from $x_i$ to $l$. By strategyproofness, if $F$ has a facility at $l$ before $i$ moves from $x_i$ to $l$, $F$ keeps its facility at $l$ after $i$'s move. Otherwise, agent $i$ with location $l$ could manipulate $F$ by reporting $x_i$.
Thus, we obtain a 3-location instance $\vec{x}' = (l\sep N \setminus \{j, k\}, l+\eps\sep \{j\}, x_k\sep \{k\})$ with $l < l+\eps < x_k$, such that $l \in F(\vec{x}')$. Moreover, since $j$ is a dictator for 3-location instances, $l+\eps \in F(\vec{x}')$, and thus $F(\vec{x}') = (l, l+\eps)$. For $\eps$ sufficiently smaller than $x_k - l$, this contradicts the bounded approximation ration of $F$.

If $F$ does not admit a dictator for 3-location instances, by Theorem~\ref{thm:3locations}, $F(\vec{x}) = (\min\vec{x}, \max\vec{x})$ for all instances $\vec{x} \in \Ithr(N)$. Next, we show that in this case, $F(\vec{x}) = (\min\vec{x}, \max\vec{x})$ for all instances $\vec{x} \in \I(N)$.
For sake of contradiction, we assume that there exists some instance $\vec{x} = (x_1, \ldots, x_n) \in \I(N)$, for which $F(\vec{x}) \neq (\min\vec{x}, \max\vec{x})$. We let $j$ be the leftmost and $k$ be the rightmost agent of $\vec{x}$. Since $F(\vec{x}) \neq (x_j, x_k)$, there is a location $a \in F(\vec{x})$ with $a \neq x_j$ and $a \neq x_k$.

If $x_j < a < x_k$, we iteratively move all agents $i \in N \setminus \{j, k\}$ from $x_i$ to $a$. As in the previous case, since $F$ is strategyproof, it keeps allocating a facility at $a$ after each agent $i$ moves to $a$. Thus, we obtain a 3-location instance $\vec{x}' = (x_j\sep \{j\}, a\sep N \setminus \{j, k\}, x_k\sep \{k\})$ for which $F$ does not allocate the facilities to the two extremes, a contradiction. 

We proceed to consider the case where $a < x_j$, 
(the case where $a > x_k$ is identical). 
Without loss of generality, we assume that the selected instance $\vec{x}$ has the minimum number of distinct locations among all instances for which $F$ allocates a facility outside the two extremes. 
Since $a < x_j$, either $x_j$ or $x_k$ is not allocated a facility by $F(\vec{x})$. Next, we assume that $x_j \not\in F(\vec{x})$ (the case where $x_k \not\in F(\vec{x})$ is symmetric). Let $S_j \subseteq N$ be the set of agents located at $x_j$, and let $b = \min\vec{x}_{-S_j}$ be the second location from the left in $\vec{x}$.
Since $x_j \not\in F(\vec{x})$, there is a $x_j$-hole $(l, r)$ in the image set $I_{S_j}(\vec{x}_{-S_j})$.
We observe that $r \leq b$, because if all agents in $S_j$ move from $x_j$ to $b$, we obtain the instance $\vec{x}' = (\vec{x}_{-S_j}, (b, \ldots, b))$ that has less distinct locations than $\vec{x}$ and $b$ as its leftmost location. Since $\vec{x}$ has the minimum number of distinct locations among all instances for which $F$ allocates a facility outside the two extremes, $F(\vec{x}')$ allocates a facility to $b$. 
We now choose $\eps > 0$ such that $r-\eps$ lies in the right half of the hole $(l, r)$, and move all agents in $S_j$ from $x_j$ to $r-\eps$. Thus, we obtain the instance $\vec{x}'' = (\vec{x}_{-S_j}, (r-\eps, \ldots, r-\eps))$. Since $F$ is strategyproof and $r$ is the closest location to $r-\eps$ in $I_{S_j}(\vec{x}_{-S_j})$, $F(\vec{x}'')$ allocates a facility to $r > r-\eps$ (see also \cite[Lemma~3.1]{LSWZ10}).
Therefore, $F(\vec{x}'')$ allocates a facility inside the two extremes of $\vec{x}''$, which contradicts what we have shown above: namely that if $F$ does not admit a dictator for 3-location instances, then $F$ never allocates a facility inside the two extremes. \qed

\section{Inexistence of Anonymous Nice Mechanisms for More Than 2 Facilities}
\label{s:3facilities}

We next obtain an impossibility result for anomynous nice $K$-Facility Location mechanisms, for all $K \geq 3$.


\begin{theorem}\label{thm:3facilities}
For every $K \geq 3$, any deterministic anonymous strategyproof mechanism for $K$-Facility Location with $n \geq K+1$ agents on the real line has an unbounded approximation ratio.
\end{theorem}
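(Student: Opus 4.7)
The plan is to induct on $K \ge 3$, reducing at each step to the inexistence for $K-1$ facilities through a ``remote agent'' trick; at the base $K=3$ the reduction lands on Theorem~\ref{thm:3agents}, which under anonymity rules out any nice 2-facility 3-agent mechanism other than {\sc Two-Extremes}. I focus on the sharpest case $n=K+1$; the general case $n>K+1$ follows by padding with agents coincident at existing locations and invoking Corollary~\ref{cor:3agents} in place of Theorem~\ref{thm:3agents} when the induced mechanism has a repeated location.

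For the base case $K=3$, $n=4$, suppose for contradiction that $F$ is anonymous, strategyproof, and $\rho$-approximate. For any three locations $(a_2,a_3,a_4)$ and any parameter $R>0$ chosen large relative to them, define the induced mechanism
$\hat F^{-R}(a_2,a_3,a_4) := F(-R, a_2, a_3, a_4) \setminus \{F_1(-R, a_2, a_3, a_4)\}$.
Since the optimum for $F$'s 4-agent input equals the optimum 2-facility cost $c^{\ast}$ for $(a_2,a_3,a_4)$ (one can always place a free facility at $-R$), the bounded approximation ratio forces $F_1$ to lie within $O(\rho c^{\ast})$ of $-R$, and hence never to be the nearest facility to any of agents $2,3,4$. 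The chain $C_{\hat F}(\text{truth}) = C_F(\text{truth}) \le C_F(\text{lie}) \le C_{\hat F}(\text{lie})$---where the outer equality uses that $F_1$ is not nearest to the truthful location and the outer inequality uses that $\hat F$ has one fewer facility than $F$---then shows $\hat F^{-R}$ is a nice anonymous 2-facility 3-agent mechanism. By Theorem~\ref{thm:3agents}, at most two agent-orderings (sharing the same middle agent) can carry the median in its output; anonymity requires this count to be either $0$ or $6$, and $6$ is forbidden, so $\hat F^{-R}(a_2,a_3,a_4) = (\min,\max)$. A mirror construction with a remote agent at large $+R'$ yields $\hat F^{+R'}$ that also outputs $(\min,\max)$ on its input.

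The contradiction now follows from the multiset $\{-R,\, 2M,\, 2M+\delta,\, R'\}$ with $0<\delta\ll M\ll R,R'$: read through $\hat F^{-R}$ on $(2M,2M+\delta,R')$ the full $F$-output is $\{-R,\,2M,\,R'\}$, while read through $\hat F^{+R'}$ on $(-R,2M,2M+\delta)$ it is $\{-R,\,2M+\delta,\,R'\}$; anonymity demands these coincide, but $2M \ne 2M+\delta$. For the inductive step $K\ge 4$, the same remote-agent trick converts a hypothetical nice anonymous $K$-facility mechanism on $K+1$ agents into a nice anonymous $(K-1)$-facility mechanism on $K$ agents, contradicting the inductive hypothesis (Theorem~\ref{thm:3facilities} for $K-1 \ge 3$). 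The main technical obstacle I anticipate is the bookkeeping that pins down the niceness of $\hat F^{-R}$, in particular choosing $R$ large enough (as a function of the three-agent instance) that the leftmost facility is unambiguously ``the dummy facility'' across the relevant reports and misreports, so that both the approximation bound and strategyproofness transport cleanly. Once this is handled, the symmetry between the two remote-agent directions yields the contradiction with essentially no further work.
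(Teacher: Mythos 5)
Your final ``two readings of one instance'' contradiction is genuinely nice, and the endgame would work \emph{if} you could certify that both induced mechanisms output the two extremes. The gap is exactly where you locate it, but it is not bookkeeping --- it is the crux, and as stated the step fails. For any \emph{fixed} $R$, the object $\hat F^{-R}$ is not a nice mechanism on all of $\reals^3$: on instances $(a_2,a_3,a_4)$ whose diameter is comparable to $R$, or which straddle $-R$ (e.g.\ $(-2R,0,1)$), the leftmost facility of $F(-R,a_2,a_3,a_4)$ need not be a ``dummy'' serving only the remote agent, so neither the identity $C_{\hat F}(\text{truth})=C_F(\text{truth})$ nor the approximation bound for $\hat F^{-R}$ holds there. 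Consequently Theorem~\ref{thm:3agents} cannot be invoked: its hypothesis is niceness on the \emph{entire} domain, and its proof genuinely consults unboundedly spread auxiliary instances (Lemma~\ref{l:approx} and Lemma~\ref{l:i-cases} place an agent at $r$ with $r-p>\rho(p-a)$ for arbitrarily large $r$; Claim~\ref{cl:moving-backwards} and Propositions~\ref{prop:right_cover}--\ref{prop:left_cover} push the nearby pair arbitrarily far; Lemma~\ref{l:always-a} ranges over all $b\in\reals$). So a ``local'' version of Theorem~\ref{thm:3agents} valid only on a window where the dummy is identifiable is not available without reproving the theorem, and letting $R$ depend on the instance destroys the single-mechanism structure needed for strategyproofness (a deviation changes the instance and hence the admissible $R$; there is also no guarantee that $F(-R,\cdot)$ stabilizes as $R\to\infty$). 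The same objection hits your inductive step for $K\ge 4$. Note also that your target instance itself contains $R'$ of the same order as $R$, so even the single application of the (unavailable) characterization to $(2M,2M+\delta,R')$ sits outside any window in which $\hat F^{-R}$ could plausibly be certified nice.

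The paper takes a different and more self-contained route that avoids constructing an induced mechanism altogether: it works directly with the $4$-agent $(1|2|3,4)$-well-separated instance $(0,\lambda,3\lambda^2+\lambda,3\lambda^2+\lambda+1)$, uses Proposition~\ref{prop:interval} to force $F_3(\vec{x})\in\{x_3,x_4\}$, then exploits a hole in the image set $I_3(\vec{x}_{-3})$ (resp.\ $I_4(\vec{x}_{-4})$) whose endpoints are pinned down by the approximation ratio, moves one nearby agent into the hole, and uses \emph{anonymity} only to swap the indices of agents $3$ and $4$ so that Proposition~\ref{prop:left_cover} (resp.\ Proposition~\ref{prop:right_cover}) forces two facilities at distance $\eps$ from each other, stranding agent $2$ at distance $\lambda$ and blowing up the ratio. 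If you want to salvage your reduction, you would essentially have to re-derive the relevant pieces of Theorem~\ref{thm:3agents} for the restricted, $R$-dependent domain --- at which point you are redoing the paper's well-separated-instance analysis by hand.
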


\begin{proof}
We only consider the case where $K = 3$ and $n = 4$. It is straightforward to verify that the proof generalizes to any $K \geq 3$ and any $n \geq K+1$.
For sake of contradiction, we let $F$ be an anonymous nice mechanism for $3$-Facility Location, and let $\rho$ be the approximation ratio of $F$ for instances with $4$ agents. Next, we construct a family of instances for which the approximation ratio of $F$ is greater than $\rho$.

Since $F$ is anonymous, we assume that for any instance $\vec{x}$, $x_1 < x_2 < x_3 < x_4$ (i.e. the agents are arranged on the line in increasing order of their indices).
For some sufficiently large $\lambda > \rho$, we consider the instance $\vec{x} = (0, \lambda, 3\lambda^2+\lambda, 3\lambda^2+\lambda+1)$, which is $(1|2|3,4)$-well-separated. By Proposition~\ref{prop:interval}, $F_3(\vec{x}) \in [x_3, x_4]$. W.l.o.g., we assume that $F_3(\vec{x}) \in \{x_3, x_4\}$. Otherwise, if $F_3(\vec{x}) = a$ and $x_3 < a < x_4$, the instance $(\vec{x}_{-4}, a)$ is also $(1|2|3,4)$-well-separated and has $F_3(\vec{x}_{-4}, a) = a$, due to $F$'s strategyproofness.

We start with the case where $F_3(\vec{x}) = x_4$. Since $\vec{x}$ is an $(1|2|3,4)$-well-separated instance, both $x_3$ and $x_4$ are served by the facility at $x_4$. Hence, there is a $x_3$-hole $(l, r)$ in the image set $I_3(\vec{x}_{-3})$.
We note that $3\lambda^2+\lambda = x_3 < r \leq x_4 = 3\lambda^2+\lambda+1$, since $x_3 \not\in F(\vec{x})$ and $x_4 \in F(\vec{x})$, and that $l \geq \lambda^2+\lambda-1$.
As for the latter, if $l < \lambda^2+\lambda-1$, then $y = 2\lambda^2+\lambda$ would lie in the right half of the hole $(l, r)$. Thus, if agent $3$ moves to $y$, by strategyproofness, the nearest facility to $y$ in $F(\vec{x}_{-3}, y)$ would be at $r > 3\lambda^2+\lambda$, and thus $\cost[F(\vec{x}_{-3}, y)] > \lambda^2$. Since the optimal cost for $(\vec{x}_{-3}, y)$ is $\lambda$, $F$'s approximation ratio would be $\lambda > \rho$.

Let us now consider the instance $\vec{x}' = (\vec{x}_{-3}, l+\eps)$, where $\eps \in (0, 1]$ is chosen small enough that $l+\eps$ lies in the left half of the hole $(l, r)$ and the instance $(0, \lambda, l, l+\eps)$ is $(1|2|3,4)$-well-separated. Since $F$ is strategyproof, and since $l$ is the nearest point to $l+\eps$ in $I_3(\vec{x}_{-3})$, $l \in F(\vec{x}')$.
Then, we consider the instance $\vec{x}'' = (\vec{x}'_{-4}, l)$. Since $F$ is anonymous%
\footnote{We highlight that the agents $3$ and $4$ implicitly switch indices in $\vec{x}'$ and $\vec{x}''$. More specifically, since we require that the agents are arranged on the line in increasing order of their indices, the location of agent $3$ is $l+\eps$ in $\vec{x}'$ and $l$ in $\vec{x}''$, and the location of agent $4$ is $3\lambda^2+\lambda+1$ in $\vec{x}'$ and $l+\eps$ in $\vec{x}''$. Therefore, to argue about the outcome of $F(\vec{x}'')$ based on the outcome of $F(\vec{x}')$, we resort to the anonymity of $F$.}
and strategyproof, and since $l \in F(\vec{x}')$, $x''_3 = l \in F(\vec{x}'')$.
Moreover, by Proposition~\ref{prop:left_cover}, $x''_4 = l+\eps \in F(\vec{x}'')$, because for the $(1|2|3,4)$-well-separated instance $\vec{x}$, $F_3(\vec{x}) = x_4$, and $\vec{x}''$ is an $(1|2|3,4)$-well-separated instance with $x''_4 \leq x_4$.
Since both $x''_3, x''_4 \in F(\vec{x}'')$, either the agents $1$ and $2$ are served by the same facility of $F(\vec{x}'')$ or the agent $2$ is served by the facility at $l$. In both cases, $\cost[F(\vec{x}'')] \geq \lambda$. But the optimal cost for $\vec{x}''$ is $\eps \leq 1$, and $F$'s approximation ratio is at least $\lambda > \rho$.

Next, we consider the case where $F_3(\vec{x}) = x_3$, which is symmetric to the case where $F_3(\vec{x}) = x_4$. As before, both $x_3$ and $x_4$ are served by the facility at $x_3$, and there is a $x_4$-hole $(l, r)$ in the image set $I_4(\vec{x}_{-4})$.
We note that $3\lambda^2+\lambda = x_3 \leq l < x_4 = 3\lambda^2+\lambda+1$, since $x_3 \in F(\vec{x})$ and $x_4 \not\in F(\vec{x})$, and that $r \leq 5\lambda^2+\lambda+2$.
As for the latter, if $r > 5\lambda^2+\lambda+2$, then $y = 4\lambda^2+\lambda+1$ would lie in the left half of the hole $(l, r)$. Therefore, if agent $4$ moves to $y$, by $F$'s strategyproofness, the nearest facility to $y$ in $F(\vec{x}_{-4}, y)$ would be at $l < 3\lambda^2+\lambda+1$, and thus $\cost[F(\vec{x}_{-4}, y)] > \lambda^2$. Since the optimal cost for instance $(\vec{x}_{-4}, y)$ is $\lambda$, the approximation ratio of $F$ would be $\lambda > \rho$.

Let us now consider the instance $\vec{x}' = (\vec{x}_{-4}, r-\eps)$, where $\eps \in (0, 1]$ is chosen small enough that $r-\eps$ lies in the right half of the hole $(l, r)$ and the instance $(0, \lambda, r-\eps, r)$ is $(1|2|3,4)$-well-separated. Since $F$ is strategyproof, and since $r$ is the nearest point in $I_4(\vec{x}_{-4})$ to $r-\eps$, $r \in F(\vec{x}')$.
Then, we consider the instance $\vec{x}'' = (\vec{x}'_{-3}, r)$ (as before, since we require that the agents are arranged on the line in increasing order of their indices, the agents $3$ and $4$ switch indices in $\vec{x}'$ and $\vec{x}''$).
Since $F$ is anonymous and strategyproof, and since $r \in F(\vec{x}')$, $x''_4 = r \in F(\vec{x}'')$. Moreover, by Proposition~\ref{prop:right_cover}, $x''_3 = r-\eps \in F(\vec{x}'')$, because for the $(1|2|3,4)$-well-separated instance $\vec{x}$, $F_3(\vec{x}) = x_3$, and $\vec{x}''$ is an $(1|2|3,4)$-well-separated instance with $x''_3 \geq x_3$.
Since both $x''_3, x''_4 \in F(\vec{x}'')$, either the agents $1$ and $2$ are served by the same facility of $F(\vec{x}'')$ or the agent $2$ is served by the facility at $r-\eps$. In both cases, $\cost[F(\vec{x}'')] \geq \lambda$. On the other hand, the optimal cost for $\vec{x}''$ is $\eps \leq 1$, and the approximation ratio of $F$ is at least $\lambda > \rho$. 
\qed\end{proof}

\section{Inexistence of Nice Mechanisms for 2-Facility Location in More General Metrics}
\label{s:tree}

Throughout this section, we consider $3$-location instances of 2-Facility Location with $n \geq 3$ agents in a metric space consisting of $3$ half-lines $[0, \infty)$ with a common origin $O$. This is conceptually equivalent to a continuous metric determined by a star with center $O$ and $3$ long branches starting at $O$. So, we refer to this metric as $S_3$, and to the $3$ half-lines (or branches) of $S_3$ as $b_1$, $b_2$, and $b_3$.
A location $(x, b_\ell)$ in $S_3$ is determined by the distance $x \geq 0$ to the center $O$ and the corresponding branch $b_\ell$, $\ell \in \{1, 2, 3\}$. The distance of two locations $(x, b_\ell)$ and $(x', b_{\ell'})$ in $S_3$ is $|x - x'|$, if $\ell = \ell'$ (i.e., if the locations are on the same branch), and $x+x'$, otherwise. Given two locations $(x, b_\ell)$ and $(x', b_{\ell'})$ in $S_3$, we let $[(x, b_\ell), (x', b_{\ell'})]$ be the interval of all points in the path from $(x, b_\ell)$ to $(x', b_{\ell'})$.

To show that there do not exist any nice mechanisms for 2-Facility Location in $S_3$, we extend Theorem~\ref{thm:3agents} (and Corollary~\ref{cor:3agents}), so that we characterize nice mechanisms for $3$-agent (and $3$-location) instances of 2-Facility Location in $S_3$ when all agents are located on (at most) two fixed branches. As in Section~\ref{s:3locations}, we first extend the characterization to $3$-agent instances, and then use partial group strategyproofness to further extend it to $3$-location instances. We first show that for collinear instances, nice mechanisms do not take any essential advantage of the third branch.

\begin{proposition}\label{prop:tree-collinear}
Let $F$ be a nice mechanism for 2-Facility Location in $S_3$, let $\vec{x}$ be any $3$-agent instance where all agents' locations are on (at most) two branches, and let $f$ be the facility of $F(\vec{x})$ serving (at least) two agents $j$ and $k$ located at $(x_j, b_{\ell_j})$ and $(x_k, b_{\ell_k})$, respectively. Then, $f \in [(x_j, b_{\ell_j}), (x_k, b_{\ell_k})]$.
\end{proposition}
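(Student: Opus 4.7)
\begin{proofsketch}
The plan is proof by contradiction: suppose $f \notin P := [(x_j, b_{\ell_j}), (x_k, b_{\ell_k})]$.

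First, I would extract strategyproofness constraints in the spirit of Proposition~\ref{prop:interval}. If agent $j$ reports $x_k$, then in the instance $(\vec{x}_{-j}, x_k)$ only the locations $x_i$ and $x_k$ are occupied, so by unanimity of the nice mechanism $F$ (or, when $x_i = x_k$, simply by the bounded approximation ratio forcing both facilities onto the single occupied location), $F(\vec{x}_{-j}, x_k)$ places a facility at $x_k$. Hence $j$'s cost at her true location would be at most $d(x_j, x_k)$, and strategyproofness yields $d(x_j, f) \leq d(x_j, x_k)$. Symmetric deviations of $j$ to $x_i$ and of $k$ to $x_j$ and $x_i$ give the full set
$d(x_j, f) \leq \min\{d(x_j, x_k), d(x_j, x_i)\}$ and $d(x_k, f) \leq \min\{d(x_j, x_k), d(x_k, x_i)\}$.

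Second, I would case-split on the branches of $j$ and $k$. If $\ell_j = \ell_k$, the path $P$ lies on a single branch, and a direct calculation of $S_3$-distances shows that any $f$ off $P$---whether on the same branch or on a different one---violates one of the first two constraints. If $\ell_j \neq \ell_k$, the same constraints also exclude $f$ on $b_{\ell_j}$ beyond $x_j$ or on $b_{\ell_k}$ beyond $x_k$, so the only remaining possibility is $f = (y, b_{\ell'})$ on the third branch $b_{\ell'} \notin \{b_{\ell_j}, b_{\ell_k}\}$ with $y \in (0, \min\{x_j, x_k\}]$.

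The main obstacle is dispatching this last sub-case, where $f$ sits on an agent-free branch. My plan is to exploit the hypothesis that all agents lie on $b_{\ell_j} \cup b_{\ell_k}$: thus the projection of $x_i$ onto $P$ also lies on those two branches, placing $x_i$ in a different subtree of $S_3 \setminus P$ than $f$. Using the tree identity $d(x_j, v) + d(x_k, v) = d(x_j, x_k) + 2\, d(v, P)$, valid for any $v \in S_3$, the constraints involving $x_i$ rewrite as $d(f, P) \leq d(x_i, P)$. Combining this with the bounded approximation ratio of $F$ on $\vec{x}$ itself---where $\cost[F(\vec{x})] \geq d(x_j, x_k) + 2y$ since $f$ alone serves both $j$ and $k$, while the optimum is at most $\min\{d(x_j, x_k), d(x_i, x_j), d(x_i, x_k)\}$---and with further strategyproofness constraints arising from $i$'s own deviations to $x_j$ and $x_k$ (which force the second facility $f'$ of $F(\vec{x})$ near $x_i$ on the branch containing $i$), the resulting inequalities drive $y$ to $0$, yielding the contradiction.
\end{proofsketch}
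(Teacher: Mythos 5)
Your setup and the first two cases are sound and essentially parallel the paper's own first step: the constraints $d(x_j,f)\leq\min\{d(x_j,x_k),d(x_j,x_i)\}$ and $d(x_k,f)\leq\min\{d(x_j,x_k),d(x_k,x_i)\}$, obtained from deviations to already-occupied locations where unanimity (or the bounded ratio) pins down the outcome, do eliminate every position of $f$ except a point $(y,b_{\ell'})$ on the agent-free third branch with $0<y\leq\min\{x_j,x_k\}$. The gap is precisely in that last sub-case: the static inequalities you assemble do not drive $y$ to $0$. Concretely, take $\ell_j\neq\ell_k$, $x_j=x_k=1$, $x_i=(10,b_{\ell_j})$, and $f=(1,b_{\ell'})$. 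Then $d(x_j,f)=d(x_k,f)=2=d(x_j,x_k)$, so both deviation constraints hold with equality; your tree identity yields only $d(f,P)=1\leq 9=d(x_i,P)$; the approximation bound on $\vec{x}$ itself gives $2+2y\leq 2\rho$, which is vacuous for any $\rho\geq 2$; and the constraints from $i$'s deviations merely place the second facility within distance $9$ of $x_i$. Nothing in this system forces $y=0$. In general, any $f$ on the empty branch at distance at most $\min\{x_j,x_k\}$ from the center satisfies every inequality extractable from the instance $\vec{x}$ together with deviations to occupied locations, so no combination of them can yield the contradiction.

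What is missing is a deviation to a \emph{fresh} location whose outcome is controlled by the image-set structure rather than by unanimity, and this is exactly how the paper closes the case. Writing $f=(\delta,b_{\ell'})$, one moves agent $j$ to $(\eps,b_{\ell'})$, i.e.\ onto the branch of $f$ very close to the center: since $f$ is the facility of $F(\vec{x})$ nearest to $(x_j,b_{\ell_j})$, the hole of $I_j(\vec{x}_{-j})$ around $x_j$ contains $[(x_j,b_{\ell_j}),(\delta,b_{\ell'}))$ and $[(x_j,b_{\ell_j}),(\delta,b_{\ell_k}))$, so the perturbed instance still has a facility at $f$. If $k$ is served by $f$ there, she profitably deviates to $(\eps,b_{\ell'})$; otherwise the second facility lies out on $b_{\ell_k}$, and then agent $i$ --- who by the two-branch hypothesis sits on $b_{\ell_j}$ with (after relabelling) $x_i>x_j$ --- is served by $f$ and can profitably deviate to $(\eps,b_{\ell'})$. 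Some dynamic argument of this kind, tracking a modified instance, is unavoidable here, so as written your sketch does not close the proof.
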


\begin{proof}
Throughout the proof, we let $i$ be the third agent located at $(x_i, b_{\ell_i})$ in $\vec{x}$. For sake of contradiction, we assume that $f$ is not located in $[(x_j, b_{\ell_j}), (x_k, b_{\ell_k})]$.
We first obtain a contradiction in case where $f$, $(x_j, b_{\ell_j})$, and $(x_k, b_{\ell_k})$ are collinear.
If $(x_j, b_{\ell_j})$ is located in $[(x_k, b_{\ell_k}), f]$, then agent $k$ may report $(x_j, b_{\ell_j})$ and decrease her cost, since $(x_j, b_{\ell_j}) \in F(\vec{x}_{-k}, (x_j, b_{\ell_j}))$, due to the bounded approximation ratio of $F$. Similarly, we argue that $(x_k, b_{\ell_k})$ cannot be located in $[(x_j, b_{\ell_j}), f]$. Therefore, if $f$, $(x_j, b_{\ell_j})$, and $(x_k, b_{\ell_k})$ are collinear, then $f \in [(x_j, b_{\ell_j}), (x_k, b_{\ell_k})]$.

We have also to exclude the possibility that the locations $f$, $(x_j, b_{\ell_j})$, and $(x_k, b_{\ell_k})$ are on three different branches. Then, the center $O$ is included in all three paths $[(x_j, b_{\ell_j}), (x_k, b_{\ell_k})]$, $[(x_k, b_{\ell_k}), f]$, and $[f, (x_k, b_{\ell_k})]$. For convenience, we let $(\delta, b_\ell)$ denote the location of the facility $f$ serving $j$ and $k$ in $F(\vec{x})$.

We first observe that $x_j \geq \delta$. Otherwise, the distance $x_k + x_j$ of agent $k$ to agent $j$ would be less than the distance $x_k + \delta$ of agent $k$ to facility $f$, and agent $k$ could  report $(x_j, b_{\ell_j})$ and decrease her cost, since $(x_j, b_{\ell_j}) \in F(\vec{x}_{-k}, (x_j, b_{\ell_j}))$, due to the bounded approximation ratio of $F$. Similarly, we show that $x_k \geq \delta$.

Since the locations of agents $i$, $j$, and $k$ are collinear, we can assume, without loss of generality, that agent $i$ is located on the branch $b_{\ell_j}$ of agent $j$ (the case where $i$ is located on the branch $b_{\ell_k}$ of agent $k$ is symmetric). Moreover, we assume that $x_i > x_j$ (otherwise, $i$ is also served by $f$, and we can switch $i$ and $j$). For some $\eps > 0$ much smaller than $\delta$ and $x_j$, we let $\vec{x}' = (\vec{x}_{-j}, (\eps, b_\ell))$ be the instance obtained from $\vec{x}$ if agent $j$ moves on the branch $b_\ell$ (of $f$) very close to the center $O$. Since $F$ is strategyproof and agent $j$ moves closer to $f$, $F(\vec{x}')$ must have a facility at $f$.
More specifically, since $F$ is strategyproof and $f$ is the facility of $F(\vec{x})$ closest to $(x_j, b_{\ell_j})$, the image set $I_j(\vec{x}_{-j})$ has a hole around $(x_j, b_{\ell_j})$. This hole includes $[(x_j, b_{\ell_j}), (\delta, b_\ell))$ and $[(x_j, b_{\ell_j}), (\delta, b_{\ell_k}))$, since otherwise there must have been a facility closer to $(x_j, b_{\ell_j})$ in $F(\vec{x})$. Therefore, the location in $I_j(\vec{x}_{-j})$ closest to $(\eps, b_\ell)$ is $f$.
Then, if agent $k$ is served by $f$ in $F(\vec{x}')$, she may report $(\eps, b_\ell)$ and decrease her cost from $\delta+x_k$ to $\eps+x_k$, since $(\eps, b_\ell) \in F(\vec{x}'_{-k}, (\eps, b_\ell))$, due to the bounded approximation ratio of $F$. This contradicts the hypothesis that $F$ is strategyproof.

So, let us assume that agent $k$ is served by the second facility of $F(\vec{x}')$. Since the hole around $(x_j, b_{\ell_j})$ in the image set $I_j(\vec{x}_{-j})$ includes $[(x_j, b_{\ell_j}), (\delta, b_{\ell_k}))$, the second facility of $F(\vec{x}')$, that serves agent $k$, must be located in $[(\delta, b_{\ell_k}), (x_k+\eps, b_{\ell_k}))$. Therefore, by the choice of $\eps$, agent $i$ is served by $f$ in $F(\vec{x}')$. Then, agent $i$ may report $(\eps, b_\ell)$ and decrease her cost from $\delta+x_i$ to $\eps+x_i$, since $(\eps, b_\ell) \in F(\vec{x}'_{-i}, (\eps, b_\ell))$, due to the bounded approximation ratio of $F$. This contradicts the strategyproofness of $F$.
\qed\end{proof}

Proposition~\ref{prop:tree-collinear} implies that the characterization of Theorem~\ref{thm:3agents} also applies to nice mechanisms for $3$-agent instances of 2-Facility Location in $S_3$ when all agents are located on (at most) two fixed branches. More specifically, the crux of the proof of Theorem~\ref{thm:3agents} is to argue about all possible strategyproof allocations of the facility serving a pair of agents $j$ and $k$. All these arguments exploit carefully chosen (mostly well-separated) instances of $3$ agents on the line. Proposition~\ref{prop:tree-collinear} shows that for all these instances, a nice mechanism for 2-Facility Location in $S_3$ has to place the facility serving agents $j$ and $k$ in the closed line interval between them, and thus the mechanism cannot take any advantage of the third branch. Therefore, we can restate the whole proof of Theorem~\ref{thm:3agents} with $S_3$ as the underlying metric space, as soon as for all instances considered in the proof, the $3$ agents are located on (at most) two fixed branches.
Moreover, using that any strategyproof mechanism is also partial group strategyproof \cite[Lemma~2.1]{LSWZ10}, we generalize Corollary~\ref{cor:3agents} to $3$-location collinear instances in $S_3$. For convenience, for any pair of branches $b_{\ell_1}$ and $b_{\ell_2}$, we let $\I_{S_3}(b_{\ell_1}, b_{\ell_2})$ be the class of all $3$-location instances in $S_3$ where the agents are located on $b_{\ell_1}$ and $b_{\ell_2}$. We obtain that:

\begin{corollary}\label{cor:3locations-S3}
Let $F$ be a nice mechanism for 2-Facility Location applied to $3$-location instances in $S_3$ with $n \geq 3$ agents, and let $b_{\ell_1}$ and $b_{\ell_2}$ be any two branches of $S_3$. Then, there exist at most two permutations $\pi_1$, $\pi_2$ of the agent coalitions with $\pi_1(2) = \pi_2(2)$, such that for all instances $\vec{x} \in \I_{S_3}(b_{\ell_1}, b_{\ell_2})$ where the agent coalitions are arranged on the line $b_{\ell_1} - b_{\ell_2}$ according to $\pi_1$ or $\pi_2$, $F(\vec{x})$ places a facility at the location of the middle coalition. For any other permutation $\pi$ and instance $\vec{x} \in \I_{S_3}(b_{\ell_1}, b_{\ell_2})$, where the agent coalitions are arranged on the line $b_{\ell_1} - b_{\ell_2}$ according to $\pi$, $F(\vec{x})$ places the facilities at the locations of the leftmost and the rightmost coalitions.
\end{corollary}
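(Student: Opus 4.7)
The plan is to replay the derivation of Corollary~\ref{cor:3agents} from Theorem~\ref{thm:3agents}, only now in the two-branch submetric of $S_3$ consisting of $b_{\ell_1} \cup b_{\ell_2}$, which is isometric to a line through the origin $O$. The first step is to fix this line and establish the direct analogue of Theorem~\ref{thm:3agents} for $3$-agent instances drawn from $\I_{S_3}(b_{\ell_1}, b_{\ell_2})$. For this, I would rerun the entire proof of Theorem~\ref{thm:3agents} with $S_3$ as the ambient metric, taking care that every auxiliary instance it constructs (well-separated witnesses, hole-exploring deviations, threshold-witnessing instances, and so on) continues to live on $b_{\ell_1} \cup b_{\ell_2}$. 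This is legitimate because, by Proposition~\ref{prop:tree-collinear}, whenever a facility of $F$ serves two of the three agents, it is forced to lie on the line segment between their locations; hence the mechanism cannot exploit the third branch to produce any new behaviour, and every ingredient of the argument, from the allocation rule for well-separated instances (Lemmas~\ref{l:single-p} and~\ref{l:i-separated}), through the threshold analysis of Lemma~\ref{l:i-cases}, to the final case analysis in Lemma~\ref{l:3agents}, transports verbatim to this restricted metric.

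With the two-branch analogue of Theorem~\ref{thm:3agents} in hand, the next step is to lift it from three individual agents to three coalitions $N_1, N_2, N_3$ located at $x_1, x_2, x_3$ on $b_{\ell_1} \cup b_{\ell_2}$. The lifting device is partial group strategyproofness: by \cite[Lemma~2.1]{LSWZ10}, every strategyproof $F$ is partial group strategyproof, so for each coalition $N_t$ occupying a single point the image set $I_{N_t}(\vec{x}_{-N_t})$ is well defined, is a union of closed intervals, and serves the members of $N_t$ by the point of $I_{N_t}(\vec{x}_{-N_t})$ nearest to $x_t$ (as explained in the preliminaries and in \cite[Section~3.1]{LSWZ10}). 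This machinery lets each coalition play the role of a single ``super-agent'' throughout the reran proof: coordinated deviations of $N_t$, hole-based contradictions, and the very notion of a $(i|j,k)$-well-separated instance are all obtained by replacing the individual location $x_i$ with the common coalition location and the single-agent deviation of agent~$i$ with the simultaneous deviation of the whole coalition $N_i$.

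The main obstacle I anticipate is bookkeeping rather than conceptual: at every step of the reran proof of Theorem~\ref{thm:3agents}, I must verify that the auxiliary instance it constructs can be realized by moving whole coalitions to single points, rather than moving one agent at a time, and that the resulting outcome is still constrained by the same image-set inequalities. Partial group strategyproofness settles precisely this issue, since any single-agent move used by Theorem~\ref{thm:3agents} is dominated (in the image-set sense) by the corresponding coordinated move of the enclosing coalition. Once this verification is carried out case by case, one obtains, for any nice $F$ restricted to $\I_{S_3}(b_{\ell_1}, b_{\ell_2})$, exactly the two alternatives asserted: there are at most two permutations $\pi_1, \pi_2$ of the coalitions with a common middle coalition $\pi_1(2) = \pi_2(2)$ for which $F$ allocates a facility to that middle coalition, and for every other arrangement of $N_1, N_2, N_3$ along $b_{\ell_1} \cup b_{\ell_2}$, $F$ places its two facilities at the locations of the leftmost and the rightmost coalitions. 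This is the statement of Corollary~\ref{cor:3locations-S3}.
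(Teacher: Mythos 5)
Your proposal matches the paper's own argument: the paper likewise observes that Proposition~\ref{prop:tree-collinear} forces the facility serving two agents onto the segment between them, so the entire proof of Theorem~\ref{thm:3agents} can be restated verbatim on the line $b_{\ell_1}$--$b_{\ell_2}$, and then invokes partial group strategyproofness \cite[Lemma~2.1]{LSWZ10} to replace individual agents by coalitions exactly as in Corollary~\ref{cor:3agents}. The approach and both key ingredients are the same, so the proposal is correct.
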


As in previous sections, well-separated instances play a crucial role in the proof of the impossibility result for $S_3$. Here we define well-separated instances in a slightly different way. Given a nice mechanism $F$ with approximation ratio $\rho$ for $3$ agents located in $S_3$, we say that a $3$-agent instance $\vec{x}$ is $i$-well-separated, if for some agent $i \in \{1, 2, 3\}$, $2 (\rho+1)$ times the distance of the other two agents $j$ and $k$ is less than the minimum of the distances of $i$ to $j$ and $i$ to $k$. Therefore, due to the approximation ratio of $F$, for any $i$-well-separated instance $\vec{x}$, one facility of $F(\vec{x})$ serves agent $i$ alone and the other facility of $F(\vec{x})$ serves the nearby agents $j$ and $k$.
The following proposition is similar to Proposition~\ref{prop:tree-collinear}, but also applies to non-collinear well-separated instances. It can be regarded as the equivalent of Proposition~\ref{prop:interval} (and of \cite[Lemma~2]{SV02}) for $3$-agent instances in $S_3$.

\begin{proposition}\label{prop:tree-well-sep}
Let $F$ be a nice mechanism for 2-Facility Location with $3$-agents in $S_3$. For any $i$-well-separated instance $\vec{x}$, the facility of $F(\vec{x})$ serving the two nearby agents $j$ and $k$ is in $[(x_j, b_{\ell_j}), (x_k, b_{\ell_k})]$.
\end{proposition}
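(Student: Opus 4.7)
The plan is to leverage Proposition~\ref{prop:tree-collinear} whenever it applies and to handle the genuinely three-branch configuration separately. By the same triangle-inequality argument used to establish the partition of agents in an $i$-well-separated instance, one facility of $F(\vec{x})$ serves $\{i\}$ alone while the second facility, $f$, serves $\{j,k\}$. If $j$ and $k$ share a branch, or if $i$ lies on the branch of $j$ or $k$, then all three agents occupy at most two branches and Proposition~\ref{prop:tree-collinear} directly yields $f\in[(x_j,b_{\ell_j}),(x_k,b_{\ell_k})]$. The remaining case, which I handle below, is when $j$, $k$, $i$ lie on three distinct branches $b_{\ell_j}$, $b_{\ell_k}$, $b_{\ell_i}$; the edge cases $x_j=0$ or $x_k=0$ reduce to $f=O$ (which is on the path) via the strategyproofness bound derived below.

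Writing $f=(y,b_\ell)$, failure of the conclusion splits into three subcases: (a) $b_\ell=b_{\ell_j}$ with $y>x_j$; (b) $b_\ell=b_{\ell_k}$ with $y>x_k$; (c) $b_\ell=b_{\ell_i}$ with $y>0$. For case~(a), I let $k$ misreport as $(x_j,b_{\ell_j})$; the resulting instance has only two distinct locations, so by unanimity $F$ places facilities at $(x_j,b_{\ell_j})$ and $(x_i,b_{\ell_i})$, and strategyproofness of $k$ gives $x_k+y=d(k,f)\le x_k+x_j$, a contradiction. Case~(b) is symmetric, with $j$ misreporting as $(x_k,b_{\ell_k})$. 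The very same arguments, applied in case~(c), yield the useful bound $y\le\min(x_j,x_k)$, which by well-separatedness makes $y<x_i$.

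The crux is case~(c). Pick $\epsilon'\in(0,y)$ and move $k$ to $(\epsilon',b_{\ell_i})$, obtaining an instance $\vec{x}''=(\vec{x}_{-k},(\epsilon',b_{\ell_i}))$ with three distinct locations lying on the two branches $b_{\ell_j}$, $b_{\ell_i}$. Since $f$ is the facility serving $k$ in $F(\vec{x})$, the open ball of radius $x_k+y$ around $(x_k,b_{\ell_k})$ is disjoint from $I_k(\vec{x}_{-k})$. In the tree metric $S_3$, the triangle inequality is tight only along geodesics, and a short calculation shows that the unique element of $I_k(\vec{x}_{-k})$ realising the infimum distance $y-\epsilon'$ to $(\epsilon',b_{\ell_i})$ is $f$ itself; hence strategyproofness of $k$ forces $f\in F(\vec{x}'')$. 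Arranging the coalitions of $\vec{x}''$ left-to-right along the line $b_{\ell_j}$--$O$--$b_{\ell_i}$ yields the order $j,k,i$, so by Corollary~\ref{cor:3locations-S3} either $F(\vec{x}'')=((x_j,b_{\ell_j}),(x_i,b_{\ell_i}))$ (impossible, since $f$ lies on $b_{\ell_i}$ with $0<y<x_i$), or $F(\vec{x}'')=((\epsilon',b_{\ell_i}),f)$.

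In the remaining subcase, a direct cost comparison finishes the argument. The cost of $F(\vec{x}'')$ equals $(x_j+\epsilon')+0+(x_i-y)$, whereas the optimum for $\vec{x}''$ is $x_j+\epsilon'$ (one facility at $(\epsilon',b_{\ell_i})$ serving $\{j,k\}$, one at $i$). The approximation guarantee gives $x_i-y\le(\rho-1)(x_j+\epsilon')$, while well-separatedness forces $x_i+x_j>2(\rho+1)(x_j+x_k)$, hence $x_i>(2\rho+1)x_j$; combining and using $y\le x_j$ yields $(\rho+1)x_j<(\rho-1)\epsilon'$, which is false for $\epsilon'<x_j$ and immediately false when $\rho=1$. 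The main obstacle throughout is case~(c): unlike (a) and (b), which collapse to a unanimity argument after a single misreport, case~(c) requires producing an auxiliary three-location instance on two branches so that Corollary~\ref{cor:3locations-S3} can be combined with the quantitative well-separatedness bound.
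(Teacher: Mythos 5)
Your proof is correct, and while it shares the paper's central trick in the hard case, it reaches the contradiction by a genuinely different route. Both proofs dispose of the configurations where $f$, $(x_j,b_{\ell_j})$, $(x_k,b_{\ell_k})$ are collinear by the misreport-plus-unanimity argument (your cases (a) and (b), and your appeal to Proposition~\ref{prop:tree-collinear} when the agents occupy at most two branches, match the paper's first case), and both attack the genuinely three-branch configuration by relocating one of the two nearby agents onto the branch of $f$, between $O$ and $f$, and using the image-set property to force a facility to remain at $f$. The divergence is in what happens next: the paper keeps everything local and quantitative-free --- it checks that the perturbed instance is still well-separated (with a small case analysis on which branch $i$ occupies), concludes that the facility at $f$ still serves both nearby agents, and then lets the \emph{other} nearby agent deviate to the perturbed location to strictly reduce her cost, contradicting strategyproofness directly. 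You instead observe that the perturbed instance $\vec{x}''$ lies on two branches, invoke the characterization of Corollary~\ref{cor:3locations-S3} to pin $F(\vec{x}'')$ down to either the two extremes (ruled out since $f\in F(\vec{x}'')$ is interior) or $\{(\epsilon',b_{\ell_i}),f\}$, and then refute the latter by an explicit social-cost computation against the optimum $x_j+\epsilon'$, using the bounds $y\le\min\{x_j,x_k\}$ (which you correctly extract from the same misreport argument as in (a)/(b)) and $x_i>(2\rho+1)x_j$ from the $2(\rho+1)$ separation constant. I checked the key step --- that $f$ is the unique point of $I_k(\vec{x}_{-k})$ at distance $y-\epsilon'$ from $(\epsilon',b_{\ell_i})$, because every other candidate lies in the open ball of radius $x_k+y$ around $(x_k,b_{\ell_k})$ excluded by strategyproofness --- and it is sound, as is the final algebra. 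What the paper's route buys is self-containedness (only strategyproofness and the bounded ratio are used, no appeal to the two-branch characterization) and independence from the precise separation constant; what your route buys is a cleaner dichotomy in the final step and the avoidance of the sub-case analysis on $i$'s branch, at the price of leaning on Corollary~\ref{cor:3locations-S3} and on the quantitative strength of the $2(\rho+1)$ factor in the definition of well-separatedness for $S_3$.
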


\begin{proof}
Let $f$ be the facility of $F(\vec{x})$ that serves the two nearby agents $j$ and $k$, and let $\rho$ be the approximation ratio of $F$.
For sake of contradiction, we assume that $f$ is not located in $[(x_j, b_{\ell_j}), (x_k, b_{\ell_k})]$. In case where $f$, $(x_j, b_{\ell_j})$, and $(x_k, b_{\ell_k})$ are collinear, we obtain a contradiction as in the corresponding case in the proof of Proposition~\ref{prop:tree-collinear}.
%
%
%
We have also to exclude the possibility that the locations $f$, $(x_j, b_{\ell_j})$, and $(x_k, b_{\ell_k})$ are on three different branches. Then, the center $O$ is included in all three paths $[(x_j, b_{\ell_j}), (x_k, b_{\ell_k})]$, $[(x_k, b_{\ell_k}), f]$, and $[f, (x_k, b_{\ell_k})]$. For convenience, we let $(\delta, b_\ell)$ be the location of the facility $f$ serving $j$ and $k$ in $F(\vec{x})$.

For some $\eps > 0$ much smaller than $\delta$ and $x_j$, we let $\vec{x}' = (\vec{x}_{-j}, (\eps, b_\ell))$ be the instance obtained from $\vec{x}$ if agent $j$ moves on the branch $b_\ell$ and very close to $O$. As in the proof of Proposition~\ref{prop:tree-collinear}, we show that $F(\vec{x}')$ has a facility at $f$. Moreover, due to the hypothesis that $F$ has an approximation ratio of at most $\rho$, the facility at $f$ serves both agents $j$ and $k$.
In particular, if agent $i$ is located either on $b_{\ell_j}$ or on $b_{\ell_k}$, this holds because the instance $\vec{x}'$ is $i$-well-separated because the distance of $i$ to the nearest of $j$ and $k$ does not decrease when $j$ moves to $(\eps, b_\ell)$, while the distance of $j$ to $k$ decreases.
On the other hand, if $i$ is located on $b_\ell$, we have that $2(\rho+1)(x_j+x_j) < x_i + x_j + x_k$, because the instance $\vec{x}$ is $i$-well-separated, and the distance of $i$ to the nearest of $j$ and $k$ in $\vec{x}'$ is $x_i-\eps$ and the distance of $j$ to $k$ is $x_k + \eps$. Using $2(\rho+1)(x_j+x_j) \leq x_i + x_j + x_k$ and $\eps < x_j$, we obtain that $2\rho(x_k+\eps) < x_i -\eps$. Therefore, by the hypothesis that $F$ has an approximation ratio of at most $\rho$, the facility at $f$ serves agents $j$ and $k$ and the other facility of $F(\vec{x}')$ serves $i$ alone.
Then, agent $k$ may report $(\eps, b_\ell)$ and decrease her cost from $\delta+x_k$ to $\eps+x_k$, since $(\eps, b_\ell) \in F(\vec{x}'_{-k}, (\eps, b_\ell))$, due to the bounded approximation ratio of $F$. This contradicts the strategyproofness of $F$.
 \qed\end{proof}

We also need the following proposition, which can be regarded as the equivalent of Proposition~\ref{prop:b_pushes_c} and Proposition~\ref{prop:c_pushes_b} for well-separated instances in $S_3$.

\begin{proposition}\label{prop:tree-move}
Let $F$ be any nice mechanism for 2-Facility Location with $3$-agents in $S_3$, let $\vec{x}$ be any $i$-well-separated instance where the nearby agents $j$ and $k$ are located on different branches $b_{\ell_j}$ and $b_{\ell_k}$, and $2(\rho+1)(x_j+x_k) < x_i$, with $\rho$ denoting the approximation ratio of $F$. If $(x_j, b_{\ell_j}) \in F(\vec{x})$, then for every location $(x'_j, b_{\ell_j})$, with $x'_j \in [0, x_j]$, it holds that $(x'_j, b_{\ell_j}) \in F(\vec{x}_{-j}, (x'_j, b_{\ell_j}))$.
\end{proposition}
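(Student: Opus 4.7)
\begin{proofsketch}
The plan is to mirror the line-case ``push'' arguments (Propositions~\ref{prop:right_cover} and~\ref{prop:left_cover}) inside the star $S_3$: treat the path $b_{\ell_j}\cup b_{\ell_k}$ through the origin $O$ as the ``line'' along which the facility serving $j$ and $k$ must travel, and use the strong separation $2(\rho+1)(x_j+x_k)<x_i$ to keep the isolated agent $i$ distant throughout.

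First I would verify that $\vec{x}' := (\vec{x}_{-j}, (x'_j, b_{\ell_j}))$ remains $i$-well-separated for every $x'_j \in [0, x_j]$. Moving $j$ toward $O$ along $b_{\ell_j}$ only shortens the distance from $j$ to $k$ from $x_j+x_k$ to $x'_j+x_k$, and a short case analysis on the branch hosting $i$ shows that $\min\{d(i,j),d(i,k)\}$ stays above $2(\rho+1)(x'_j+x_k)$; the only delicate case is $b_{\ell_i}=b_{\ell_k}$, where $d(i,j)=x_i+x'_j$ might drop as low as $x_i$, but the hypothesis $x_i>2(\rho+1)(x_j+x_k)\ge 2(\rho+1)(x'_j+x_k)$ is exactly what is needed. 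Proposition~\ref{prop:tree-well-sep} then forces the facility $f'$ of $F(\vec{x}')$ serving $j$ and $k$ to lie on the path $[(x'_j,b_{\ell_j}),(x_k,b_{\ell_k})]$, and strategyproofness of $j$ swapping the reports $x'_j\leftrightarrow x_j$ (using that $F(\vec{x})$ has a facility at $(x_j,b_{\ell_j})$) yields $d((x'_j,b_{\ell_j}),f')\le x_j-x'_j$, confining $f'$ to a short sub-path around $(x'_j,b_{\ell_j})$.

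The crux is closing the gap to $f'=(x'_j,b_{\ell_j})$, which I would do by a stepwise push along $b_{\ell_j}$: partition $[x'_j, x_j]$ into fine steps $x_j=t_0>t_1>\cdots>t_M=x'_j$ and inductively show that the $j,k$-facility in $F(\vec{x}_{-j},(t_n,b_{\ell_j}))$ equals $(t_n,b_{\ell_j})$, with the base case $t_0=x_j$ given by hypothesis. If at some step the facility sat at $(s,b_{\ell_j})$ with $s<t_{n+1}$ (a location on $b_{\ell_k}$ is excluded once the step is fine enough to keep the strategyproofness bound within branch $b_{\ell_j}$), then on the one hand $(s,b_{\ell_j})\in I_j(\vec{x}_{-j})$ would be the nearest image-set point to $(t_{n+1},b_{\ell_j})$, while on the other hand Proposition~\ref{prop:tree-well-sep} applied at reports $t\in(t_{n+1},t_n)$ excludes the inductive anchor $(t_n,b_{\ell_j})$ from being the facility at report $t$, thereby forcing $I_j(\vec{x}_{-j})$ to contain points of $b_{\ell_j}$ strictly between $t_{n+1}$ and $t_n$ and propagating the facility-at-agent property down the branch, contradicting $s<t_{n+1}$. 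The main obstacle will be aligning this shrinking-path restriction from Proposition~\ref{prop:tree-well-sep} with the Moulin-type structure of $I_j(\vec{x}_{-j})$ on $b_{\ell_j}$ induced by strategyproofness; the decisive observation is that the path onto which $f'$ is allowed \emph{shrinks} as $j$ moves toward $O$, which forbids any jump of the facility to a point further out on $b_{\ell_j}$ than $j$'s reported location, and together with the induction rules out every $f'\neq(x'_j,b_{\ell_j})$.
\end{proofsketch}
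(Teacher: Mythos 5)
Your overall framing is right: keep the instance $i$-well-separated as $j$ slides toward $O$, invoke Proposition~\ref{prop:tree-well-sep} to confine the facility serving $j$ and $k$ to the path between them, and exploit the fact that this path shrinks, so the facility can never sit on $b_{\ell_j}$ strictly beyond $j$'s reported location. But the fine-step induction you build on top of this has a genuine gap at the ``propagation'' step. Suppose the facility at report $(t_{n+1}, b_{\ell_j})$ sits at $(s,b_{\ell_j})$ with $s<t_{n+1}$. Applying Proposition~\ref{prop:tree-well-sep} at reports $t\in(t_{n+1},t_n)$ only tells you that the nearest point of $I_j(\vec{x}_{-j})$ to $(t,b_{\ell_j})$ lies on $b_{\ell_j}$ in $[2t-t_n,\,t]$ (for $t>t_n/2$); this is perfectly consistent with $I_j(\vec{x}_{-j})$ meeting $b_{\ell_j}$ only in a discrete set accumulating at $t_n$ from below, in which case the facility at report $t$ sits strictly below $t$ and nothing ``propagates down the branch'' to force $s\ge t_{n+1}$. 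Separately, your parenthetical exclusion of facilities on $b_{\ell_k}$ ``once the step is fine enough'' fails near the origin: a point $(w,b_{\ell_k})$ is at distance $t_{n+1}+w$ from $(t_{n+1},b_{\ell_j})$, so ruling out all $w\ge 0$ requires $t_{n+1}>t_n/2$, and no finite refinement maintains that all the way down to $x'_j=0$ --- which is exactly the case the proposition is later used for in the proof of Theorem~\ref{thm:tree-unbounded}.

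The fix is to drop the induction and aim the contradiction at the upper endpoint of a hole, which is what the paper does. If some $(y,b_{\ell_j})$ with $y<x_j$ is not in $I_j(\vec{x}_{-j})$, let $(z,b_{\ell_j})$ with $z>y$ be the closest point of $I_j(\vec{x}_{-j})$ to $(y,b_{\ell_j})$ on $b_{\ell_j}$ from above (it exists because $(x_j,b_{\ell_j})\in I_j(\vec{x}_{-j})$), and move $j$ to $(z-\eps,b_{\ell_j})$ for tiny $\eps$. The nearest image point to this report is $(z,b_{\ell_j})$ --- every point on another branch is at distance at least $z-\eps$, and $b_{\ell_j}$ carries no image point in $(y,z)$ --- so strategyproofness places the facility serving $j$ and $k$ at $(z,b_{\ell_j})$, strictly outward of $j$'s report, contradicting Proposition~\ref{prop:tree-well-sep} in one shot. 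This is precisely the ``no outward jump'' observation you name at the end of your sketch; it just has to be applied at the hole endpoint rather than threaded through a step-by-step descent.
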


\begin{proof}
For sake of contradiction, we assume that there is a $y \in [0, x_j)$ such that $(y, b_{\ell_j}) \not\in F(\vec{x}_{-j}, (y, b_{\ell_j}))$. Hence $(y, b_{\ell_j}) \not\in I_j(\vec{x}_{-j})$, and there is a hole in the image set $I_j(\vec{x}_{-j})$. Let $(z, b_{\ell_j})$, $z > y$, be the location in $I_j(\vec{x}_{-j})$ on the branch $b_{\ell_j}$ closest to $(y, b_{\ell_j})$.
Such a location exists because $(x_j, b_{\ell_j}) \in F(\vec{x})$. For some very small $\eps \in (0, z)$, we let $\vec{x}' = (\vec{x}_{-j}, (z-\eps, b_{\ell_j}))$ be the instance obtained from $\vec{x}$ if agent $j$ moves on the branch $b_{\ell_j}$ just before $z$. Since $F$ is strategyproof, $(z, b_{\ell_j}) \in F(\vec{x}')$. Moreover, the hypothesis that $2(\rho+1)(x_j+x_k) < x_i$ implies that the instance $\vec{x}'$ is $i$-well separated. Therefore, due to the bounded approximation ratio of $F$, the facility at $(z, b_{\ell_j})$ serves both agents $j$ and $k$ in $F(\vec{x}')$. Thus, we obtain a contradiction, since by Proposition~\ref{prop:tree-well-sep}, the facility at $(z, b_{\ell_j})$ must be located in $[(x_k, b_{\ell_k}), (z-\eps, b_{\ell_j})]$.
\qed\end{proof}

Using Corollary~\ref{cor:3locations-S3}, Proposition~\ref{prop:tree-well-sep}, and Proposition~\ref{prop:tree-move}, we next show that there are no nice mechanisms for 2-Facility Location with $3$ agents in $S_3$.

\begin{theorem}\label{thm:tree-unbounded}
Any deterministic strategyproof mechanism for 2-Facility Location with $3$ agents in $S_3$ has an unbounded approximation ratio.
\end{theorem}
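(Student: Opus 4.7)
The plan is to assume for contradiction that $F$ is a nice mechanism for 2-Facility Location with $3$ agents in $S_3$, with approximation ratio $\rho$, and to apply Corollary~\ref{cor:3locations-S3} to each of the three pairs of branches. For every pair $\{b_i, b_j\}$, the restriction $F|_{ij}$ of $F$ to instances whose agents all lie on $b_i \cup b_j$ (viewed as a line through $O$) is itself a nice $3$-agent line mechanism, hence either is \emph{extremes-always} or admits a partial dictator $d_{ij} \in \{1,2,3\}$. The contradiction will be extracted from the $1$-well-separated instance $\vec{x} = ((R, b_1), (\eps, b_2), (\eps, b_3))$, with $R$ large and $\eps$ small enough both for the well-separation condition and for the hypothesis $2(\rho+1)(2\eps) < R$ of Proposition~\ref{prop:tree-move} to hold.

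By Proposition~\ref{prop:tree-well-sep}, the facility $f \in F(\vec{x})$ serving agents $2$ and $3$ lies on the path $[(\eps, b_2), (\eps, b_3)]$, so $f = (t, b_\ell)$ with $\ell \in \{2,3\}$ and $t \in [0, \eps]$; say $\ell = 2$. The image set $I_3((R, b_1), (\eps, b_2))$ then contains $(t, b_2)$ (the facility serving agent $3$ in $\vec{x}$), and by unanimity applied to the two-location instances obtained when agent $3$ deviates to coincide with agent $2$ or agent $1$, it also contains $(\eps, b_2)$ and $(R, b_1)$. Letting agent $3$ report $(s, b_2)$ with $t < s < \eps$ gives an instance on $b_1 \cup b_2$, to which Corollary~\ref{cor:3locations-S3} applies: $F|_{12}$ must place its two facilities either at the extremes $\{(R, b_1), (\eps, b_2)\}$ or, in at most two special permutations, at the middle location $(s, b_2)$. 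Strategyproofness forces agent $3$ to be paired with the point of $I_3$ closest to $(s, b_2)$; matching these two requirements pins down $t = \eps$, so $f$ coincides with agent $2$, except in the exceptional subcase where $F|_{12}$'s partial dictator is agent $3$.

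In the non-exceptional case, Proposition~\ref{prop:tree-move} lets me push agent $2$ along $b_2$ down to $O$ while preserving the facility at her moving location, yielding an instance on $b_1 \cup b_3$ with a facility at $O$ strictly between the extremes $(R, b_1)$ and $(\eps, b_3)$; by Corollary~\ref{cor:3locations-S3}, this forces $F|_{13}$ to have partial dictator $d_{13} = 2$. Running the analogous argument with the symmetric choice $\ell = 3$, and with $1$-well-separated instances where agent $2$ or agent $3$ is the isolated agent, yields constraints on the other $d_{ij}$. A short case check over the resulting possibilities for $(d_{12}, d_{13}, d_{23})$ (including the exceptional subcases) shows that no triple is consistent with both the bounded-approximation hypothesis and Theorem~\ref{thm:3agents}'s restriction that at most two permutations per pair have the dictator at the middle, yielding the contradiction. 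The main obstacle is the exceptional subcase, in which $I_3$ may contain an entire interval on the branch of $f$ and the image-set contradiction stalls; handling it cleanly requires a separate symmetric instance to produce an independent constraint that is inconsistent with the exceptional assumption.
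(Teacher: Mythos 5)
Your overall strategy is the paper's: restrict $F$ to pairs of branches via Corollary~\ref{cor:3locations-S3}, take a well-separated instance with the two nearby agents on different branches, locate the shared facility on the path between them via Proposition~\ref{prop:tree-well-sep}, and push an agent to the center $O$ via Proposition~\ref{prop:tree-move} to produce a collinear instance with a facility strictly between its extremes. But two of your steps leave genuine gaps. First, the detour through ``pinning down $t=\eps$'' is both unnecessary and the source of your unresolved ``exceptional subcase.'' Proposition~\ref{prop:tree-move} does require a facility \emph{at} the moving agent's location, but you do not need $f$ to coincide with agent $2$'s reported location to arrange this: if $f=(t,b_2)$ with $t<\eps$, simply move agent $2$ to $(t,b_2)$; since $(t,b_2)$ lies in her image set, strategyproofness keeps a facility there, the instance remains well-separated, and Proposition~\ref{prop:tree-move} now applies from the new location. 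This is exactly how the paper handles $t<\eps$, and it eliminates the exceptional subcase entirely, so you never need the image-set argument on $I_3$ or the ``separate symmetric instance'' you allude to but do not supply.

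Second, and more seriously, the contradiction itself is deferred to ``a short case check over $(d_{12},d_{13},d_{23})$'' that you never perform, and individual conclusions such as $d_{13}=2$ are not by themselves contradictions --- a mechanism with that partial dictator is not excluded by Theorem~\ref{thm:3agents} alone. The paper avoids any case analysis by choosing \emph{which agent sits on which branch} as a function of the partial dictators: it places the (possible) partial dictator of the line $b_1 - b_2$ as the isolated agent on $b_1$, the (possible) partial dictator of $b_1 - b_3$ (chosen distinct from the first) on $b_3$, and the third agent on $b_2$. Then whichever half of $[(\delta,b_2),(\delta,b_3)]$ contains $f$, pushing the corresponding nearby agent to $O$ yields a collinear instance in which the middle agent is \emph{not} the partial dictator of that line, so Corollary~\ref{cor:3locations-S3} forces the facilities to the two extremes, directly contradicting the facility at $O$. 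Without this assignment, your fixed instance yields only one constraint per configuration, and showing that the accumulated constraints over all your auxiliary instances are jointly unsatisfiable is precisely the missing content; as written, the proposal does not close the contradiction.
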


\begin{proof}
For sake of contradiction, we let $F$ be a nice mechanism for 2-Facility Location with $3$ agents in $S_3$. Applying Corollary~\ref{cor:3locations-S3} to $F$ and all instances with $3$ agents located on the branches $b_1$ and $b_2$, we obtain that there exists an agent $i$, such that for all instances $\vec{x} \in \I_{S_3}(b_1, b_2)$, if $i$ is located at the one extreme, $F(\vec{x})$ places the facilities at the two extreme locations of $\vec{x}$ (if $F$ does not admit a partial dictator on the line $b_1 - b_2$, $i$ can be any agent). Similarly, there exists an agent $k$, which may be the same as or different from $i$, such that for all $3$-agent instances $\vec{x} \in \I_{S_3}(b_1, b_3)$, if $k$ is located at the one extreme, $F(\vec{x})$ places the facilities at the two extreme locations of $\vec{x}$.

In the following, we let $i$ be the partial dictator of $F$ on the line $b_1 - b_2$ (or any agent, if $F$ does not admit a partial dictator on $b_1 - b_2$), let $k$ be the partial dictator of $F$ on the line $b_1 - b_3$, if this agent is different from $i$, or any agent different from $i$ otherwise, and let $j$ be the third agent. For some $\delta > 0$, we consider an instance $\vec{x}$ where the locations of agents $j$ and $k$ are $(\delta, b_2)$ and $(\delta, b_3)$, respectively, and the location of agent $i$ is $(a, b_1)$, where $a > 4(\rho+1) \delta$ is chosen so that $\vec{x}$ is $i$-well-separated and we can apply Proposition~\ref{prop:tree-move}. Therefore, there is a facility $f \in F(\vec{x})$ that serves both $j$ and $k$, and by Proposition~\ref{prop:tree-well-sep}, $f \in [(\delta, b_2), (\delta, b_3)]$.

Let us first assume that $f \in [(\delta, b_2), (0, b_2)]$ (note that the location $(0, b_2)$ coincides with the center $O$ of $S_3$). Then, we use Proposition~\ref{prop:tree-move} and show that for the instance $\vec{x}' = (\vec{x}_{-j}, (0, b_2))$, where all agents are located on the line $b_1 - b_3$, $(0, b_2) \in F(\vec{x}')$, i.e. $F(\vec{x}')$ does not place the facilities at the two extreme locations of $\vec{x}'$. This is a contradiction, since by the choice of $k$ as the (possible) partial dictator of $F$ on the line $b_1 - b_3$, for all instances $\vec{y} \in \I_{S_3}(b_1, b_3)$, $F(\vec{y})$ must place the facilities at the two extreme locations of $\vec{y}$.
More specifically, if $f$ is located at $(\delta, b_2)$, i.e., at the location of agent $j$ in $\vec{x}$, Proposition~\ref{prop:tree-move} immediately implies that $(0, b_2) \in F(\vec{x}_{-j}, (0, b_2))$. Otherwise, we let $(\delta', b_2)$, with $\delta' \in (0, \delta)$, be the location of $f$, and consider the instance $\vec{x}'' = (\vec{x}_{-j}, (\delta', b_2))$, which is also $i$-well-separated and satisfies the hypothesis of Proposition~\ref{prop:tree-move}, due to the choice of $a$. Since $F$ is strategyproof, $F(\vec{x}'')$ has a facility at $(\delta', b_2)$. Then,  Proposition~\ref{prop:tree-move} implies that $(0, b_2) \in F(\vec{x}''_{-j}, (0, b_2))$.

Therefore, the facility of $F(\vec{x})$ serving $j$ and $k$ cannot be located in $[(\delta, b_2), (0, b_2)]$. By the same argument, but with agent $k$ in place of agent $j$, and agent $i$ as the (possible) partial dictator of $F$ on the line $b_1 - b_2$, we show that the facility of $F(\vec{x})$ serving $j$ and $k$ cannot be located in $[(\delta, b_3), (0, b_3)]$. Thus, we obtain a contradiction, and conclude the proof of the theorem.
\qed\end{proof}

We can generalize the proof of Theorem~\ref{thm:tree-unbounded} to instances with $n \geq 3$ agents. To this end, we start with a $3$-location instance $\vec{x}$ with $n \geq 3$ agents, where a coalition of $n - 2$ agents, including the partial dictator of $F$ on the line $b_1 - b_2$, plays the role of agent $i$, and the remaining $2$ agents play the role of agents $j$ and $k$ in the proof of Theorem~\ref{thm:tree-unbounded}. Then, using Corollary~\ref{cor:3locations-S3} for $3$-location instances in $S_3$ and the fact that any strategyproof mechanism is also partial group strategyproof, we can restate the proofs of Proposition~\ref{prop:tree-well-sep}, Proposition~\ref{prop:tree-move}, and Theorem~\ref{thm:tree-unbounded} for such $3$-location instances in $S_3$, and obtain that:

\begin{corollary}\label{cor:tree-unbounded}
Any deterministic strategyproof mechanism for 2-Facility Location with $n \geq 3$ agents in $S_3$ has an unbounded approximation ratio.
\end{corollary}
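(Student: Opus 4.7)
The plan is to promote the three-agent argument of Theorem~\ref{thm:tree-unbounded} to three-location instances with $n \geq 3$ agents, treating the single-location coalitions as unified ``super-agents'' by exploiting the fact that strategyproofness implies partial group strategyproofness (\cite[Lemma~2.1]{LSWZ10}). First, I would apply Corollary~\ref{cor:3locations-S3} to $F$ restricted to instances in $\I_{S_3}(b_1, b_2)$; this yields an agent $i$ (the identity of the partial dictator, if any, on the line $b_1 - b_2$, or any agent otherwise) such that for every instance $\vec{x} \in \I_{S_3}(b_1, b_2)$ in which the coalition containing $i$ sits at one extreme, $F(\vec{x})$ places its two facilities at the two extreme locations of $\vec{x}$. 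Similarly, applying the corollary to $\I_{S_3}(b_1, b_3)$ yields an agent $k$ (possibly equal to $i$) with the analogous property on $b_1 - b_3$. I then pick $j$ to be any third agent distinct from $i$ and $k$.

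Next I would construct a three-location instance playing the role of the three-agent instance in the proof of Theorem~\ref{thm:tree-unbounded}. Put the coalition $I = N \setminus \{j,k\}$ (of size $n-2$, and containing both $i$ and $k$ if they differ, or just $i = k$ otherwise) at $(a, b_1)$, agent $j$ at $(\delta, b_2)$, and agent $k$ at $(\delta, b_3)$, where $a > 4(\rho+1)\delta$ is chosen large enough to make the instance ``$I$-well-separated''. The definition of well-separatedness and the hypothesis of Proposition~\ref{prop:tree-move} extend verbatim once we treat coalition $I$ as a single entity: since $F$ has approximation ratio $\rho$ and $I$ is far from the other two locations, one facility of $F(\vec{x})$ serves the whole coalition $I$ and the other serves $j$ and $k$ together.

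The key step is to re-run the proofs of Proposition~\ref{prop:tree-well-sep} and Proposition~\ref{prop:tree-move} in this setting. Every place where those proofs say ``agent $i$ moves'' (or ``by strategyproofness, agent $i$'s image set has a hole \dots'') is replaced by ``all agents in coalition $I$ move simultaneously'' and by using the image set of the coalition, which is legitimate by partial group strategyproofness; all geometric steps are unchanged because $I$ occupies a single location. This yields the generalized statements: for the above instance $\vec{x}$, the facility serving $j$ and $k$ must lie in the path $[(\delta, b_2), (\delta, b_3)]$, and if it lies in $[(\delta, b_2), O]$, then by the generalized Proposition~\ref{prop:tree-move} applied to agent $j$ we can push $j$ onto the center without losing a facility there, producing an instance $\vec{x}' \in \I_{S_3}(b_1, b_3)$ in which $F(\vec{x}')$ has a facility strictly between the two extremes. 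But $\vec{x}'$ has coalition $I$ at an extreme and $I$ contains $k$, contradicting Corollary~\ref{cor:3locations-S3} applied to $\I_{S_3}(b_1, b_3)$. A symmetric argument rules out the facility lying in $[(\delta, b_3), O]$, completing the contradiction.

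The main obstacle is ensuring that the single coalition $I$ of size $n-2$ simultaneously inherits the partial-dictator property on both lines $b_1-b_2$ and $b_1-b_3$. If $i = k$, this is automatic. If $i \neq k$, both agents lie in $I$ (since $|I| = n-2 \geq 1$ and $j$ is chosen distinct from both), and the partial group strategyproofness plus Corollary~\ref{cor:3locations-S3} guarantee that whenever a coalition containing $i$ occupies an extreme on $b_1-b_2$ (respectively, a coalition containing $k$ occupies an extreme on $b_1-b_3$), the facilities go to the two extremes; this is exactly what is needed. The remaining verification is bookkeeping: confirming that the ``$I$-well-separated'' analogue of Proposition~\ref{prop:tree-move} requires only $2(\rho+1)(x_j + x_k) < x_i$, which is satisfied by our choice of $a$, and that all intermediate instances obtained during the moves remain three-location instances so that Corollary~\ref{cor:3locations-S3} keeps applying.
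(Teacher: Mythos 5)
Your overall route is the paper's: treat each single-location coalition as a super-agent via partial group strategyproofness, re-run Propositions~\ref{prop:tree-well-sep} and~\ref{prop:tree-move} for the $3$-location partition, and reproduce the contradiction of Theorem~\ref{thm:tree-unbounded} using Corollary~\ref{cor:3locations-S3}. That matches the paper's (admittedly terse) proof sketch. The problem is in your role assignment, and it is not merely cosmetic. You define $I = N \setminus \{j,k\}$, place $k$ at $(\delta, b_3)$, and yet claim in the parenthetical and again in your last paragraph that $I$ ``contains both $i$ and $k$'' and that this double containment ``is exactly what is needed.'' It cannot: $k \notin N\setminus\{j,k\}$, and for $n=3$ the coalition $I$ is a singleton anyway. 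More importantly, what the contradiction actually requires is a property of the two \emph{nearby} singletons, not of the far coalition: when the agent at $(\delta,b_2)$ is pushed to $O$ you land on a $b_1$--$b_3$ instance with that singleton in the middle, so that singleton must not be the partial dictator of the line $b_1$--$b_3$; symmetrically, the singleton at $(\delta,b_3)$ must not be the partial dictator of $b_1$--$b_2$. The paper secures both by keeping $i$ (the $b_1$--$b_2$ dictator) inside the far coalition and taking $k$ to be the $b_1$--$b_3$ dictator \emph{if that agent differs from $i$, and otherwise any agent different from $i$} -- so $k \neq i$ always, $\{k\}$ is never the $b_1$--$b_2$ dictator, and $\{j\}$ is never the $b_1$--$b_3$ dictator.

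Your version breaks precisely where you write ``(possibly equal to $i$) \dots or just $i=k$ otherwise.'' If the two lines share the same partial dictator $i=k$, your construction puts that agent at $(\delta,b_3)$ and leaves the far coalition $N\setminus\{j,k\}$ containing no dictator at all. Then in the case where the facility serving $j$ and $k$ lies in $[(\delta,b_3),O]$, pushing $k$ to the center yields a $b_1$--$b_2$ instance whose middle coalition is $\{k\}=\{i\}$, i.e.\ exactly the coalition that Corollary~\ref{cor:3locations-S3} allows to receive a facility in the middle -- so no contradiction is obtained. Fix the bookkeeping as in the paper ($i$ in the far coalition, $k\neq i$ as the nearby agent on $b_3$, $j$ a third agent on $b_2$) and the rest of your argument -- the $2(\rho+1)$ separation condition, the coalition image sets, and the two symmetric cases -- goes through as you describe.
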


\section{Discussion and Open Problems}

An open problem is whether one can use the techniques in the proof of Theorem~\ref{thm:3agents}, and extend the impossibility result of Theorem~\ref{thm:3facilities} to non-anonymous mechanisms.
Two other intriguing directions for research have to do with the approximability of $K$-Facility Location, for $K \geq 4$, by randomized and deterministic imposing mechanisms. For $3$-Facility Location on the line, there are both a randomized and a deterministic imposing mechanism with approximation ratio $n-1$.
%
%
Therefore, it is very interesting to obtain a deterministic imposing (resp. randomized) mechanism with a bounded approximation ratio for $K$-Facility Location, for all $K \geq 4$ (resp. and all $n \geq K+2)$, or to show that no such mechanism exists.
Moreover, we are aware of a deterministic imposing mechanism for $2$-Facility Location on the line that escapes the characterization of Theorem~\ref{thm:2fac-gen}, albeit with an approximation ratio of $n-1$. However, this raises the question about the approximability of $2$-Facility and $3$-Facility Location on the line by deterministic imposing mechanisms.

\bibliographystyle{plain}
\bibliography{mechdesign}

\newpage\appendix
\section{Appendix~\ref{app:s:moves}: Dealing with Well-Separated Instances}
\label{app:s:moves}

\subsection{Pushing the Pair of the Rightmost Agents to the Right: The Proof of Proposition~\ref{prop:right_cover}}
\label{app:s:push_right}

For simplicity and clarity, we explicitly prove Proposition~\ref{prop:right_cover} only for $2$-Facility Location and well-separated instances with $3$ agents (see Proposition~\ref{prop:right_cover_3agents}). It is not difficult to verify that all the technical arguments only depend on the fact that the two nearby agents stay well-separated from and on the right of the third agent. Therefore, the following propositions generalize to the $K$-Facility Location game and well-separated instances with $K+1$ agents.

In the following, we let $F$ be a nice mechanism for the $2$-Facility Location game with an approximation ratio of at most $\rho$ for instances with $3$ agents.
As in Section~\ref{s:3agents}, we use the indices $i, j, k$ to implicitly define a permutation of the agents. We use the convention that $i$ denotes the leftmost agent, $j$ denotes the middle agent, and $k$ denotes the rightmost agent.
We recall that given a nice mechanism $F$ with an approximation ratio of at most $\rho$ for $3$-agent instances, a $3$-agent instance $\vec{x}$ is \emph{$(i | j, k)$-well-separated} if $x_i < x_j < x_k$ and $\rho(x_k - x_j) < x_j - x_i$.

The following propositions show that if for some nice mechanism $F$, there is an $(i | j, k)$-well-separated instance $\vec{x}$ such that $F_2(\vec{x}) = x_j$, then as long as we ``push'' the locations of agents $j$ and $k$ to the right, while keeping the instance $(i | j, k)$-well-separated, the rightmost facility of $F$ stays with the location of agent $j$.
Intuitively, if for some $(i | j, k)$-well-separated instance $\vec{x}$, $F_2(\vec{x}) = x_j$, then agent $j$ serves as a dictator for all $(i | j, k)$-well-separated instances $\vec{x}' = (\vec{x}_{-\{j,k\}}, x'_j, x'_k)$ with $x_j \leq x'_j$.

\begin{proposition}\label{prop:b_pushes_c}
Let $\vec{x}$ be any $(i | j, k)$-well-separated instance with $F_2(\vec{x}) = x_j$. Then for every instance $\vec{x}' = (\vec{x}_{-j}, x'_j)$ with $x_j < x'_j < x_k$, it holds that $F_2(\vec{x}') = x'_j$.
\end{proposition}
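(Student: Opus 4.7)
The plan is to show that the image set $I_j(\vec{x}_{-j})$ contains the whole interval $[x_j, x_k]$; once that is in hand, strategyproofness together with Proposition~\ref{prop:interval} pins down $F_2(\vec{x}')=x'_j$. Two preliminary observations are easy: the hypothesis $F_2(\vec{x})=x_j$ forces $x_j\in I_j(\vec{x}_{-j})$, and the instance $\vec{x}'$ is itself $(i|j,k)$-well-separated because $x'_j>x_j$ implies $\rho(x_k-x'_j)<\rho(x_k-x_j)<x_j-x_i<x'_j-x_i$.

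The main step is to rule out any hole $(l,r)$ of $I_j(\vec{x}_{-j})$ whose left endpoint satisfies $x_j\le l<x_k$. Assuming such a hole exists, I would pick $\delta>0$ small enough that $y\equiv l+\delta$ lies in the left half of the hole and $\vec{y}\equiv(\vec{x}_{-j},y)$ remains $(i|j,k)$-well-separated; the choice $\delta<\min\{(r-l)/2,\,x_k-l\}$ suffices, with well-separatedness of $\vec{y}$ following from $y>x_j$ and the well-separatedness of $\vec{x}$. Strategyproofness then says the facility of $F(\vec{y})$ nearest to $y$ sits at the closest point of $I_j(\vec{x}_{-j})$ to $y$, which by construction is $l$. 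Proposition~\ref{prop:interval} applied to $\vec{y}$ forces $F_2(\vec{y})\ge y>l$, so the facility at $l$ must be $F_1(\vec{y})$. Hence agent $i$ pays $\cost[x_i,F(\vec{y})]=l-x_i\ge x_j-x_i>\rho(x_k-x_j)>\rho(x_k-y)$, with the last step using $y>x_j$; but $x_k-y$ is exactly the optimal cost of $\vec{y}$, so this violates the approximation ratio of $F$.

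With no hole of $I_j(\vec{x}_{-j})$ starting in $[x_j,x_k)$ and $x_j\in I_j(\vec{x}_{-j})$, the decomposition of $I_j(\vec{x}_{-j})$ into closed intervals yields $[x_j,x_k]\subseteq I_j(\vec{x}_{-j})$, so $x'_j\in I_j(\vec{x}_{-j})$. Strategyproofness for agent $j$ truthfully at $x'_j$ then gives $\cost[x'_j,F(\vec{x}')]=0$, i.e.\ $x'_j\in F(\vec{x}')$. Proposition~\ref{prop:interval} applied to the well-separated $\vec{x}'$ gives $F_2(\vec{x}')\ge x'_j$, and if this inequality were strict then $x'_j$ would equal $F_1(\vec{x}')$, making agent $i$ pay $x'_j-x_i>\rho(x_k-x'_j)$ and again breaching the approximation ratio; hence $F_2(\vec{x}')=x'_j$.

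The main delicacy I anticipate is the quantitative juggling around the hole $(l,r)$: choosing $\delta$ so that $y$ lands in the left half of the hole while preserving well-separatedness of $\vec{y}$, and verifying that $l\ge x_j$ (which follows from $x_j\in I_j(\vec{x}_{-j})$, $l<x'_j$, and the hole being disjoint from $I_j(\vec{x}_{-j})$). Both reductions feed off the same slack afforded by $\vec{x}$ being well-separated, and once that slack is harvested the same contradiction template drives both the exclusion of holes and the final identification of $F_2(\vec{x}')$ with $x'_j$.
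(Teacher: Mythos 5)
Your proposal is correct and follows essentially the same route as the paper's proof: both arguments locate a hole $(l,r)$ of the image set $I_j(\vec{x}_{-j})$ inside $(x_j,x_k)$, move agent $j$ to a point in the left half of that hole so that the facility lands at $l\geq x_j$, and then derive a contradiction with the approximation ratio because agent $i$ incurs cost at least $x_j-x_i>\rho(x_k-y)$ while the optimal cost of the perturbed well-separated instance is $x_k-y$. The only (harmless) difference is organizational — you first establish $[x_j,x_k]\subseteq I_j(\vec{x}_{-j})$ and then conclude, and you spell out explicitly why the facility at $x'_j$ must be the rightmost one, a step the paper leaves implicit.
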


\begin{proof}
We first observe that any instance $\vec{x}' = (\vec{x}_{-j}, x'_j)$, with $x_j < x'_j < x_k$, is also $(i | j, k)$-well-separated.
To reach a contradiction, we assume that there is a point $y \in (x_j, x_k)$ such that $y \neq F_2(\vec{x}_{-j}, y)$. Hence $y \not\in I_j(x_i, x_k)$, and there is a $y$-hole $(l, r)$ in the image set $I_j(x_i, x_k)$. Let $y'$ be any point in the left half of $(l, r)$, e.g. let $y' = (2l+r)/3$. Then $l \in F(\vec{x}_{-j}, y')$. By Proposition~\ref{prop:middle}, $F_2(\vec{x}_{-j}, y') > y'$. This contradicts $F$'s bounded approximation ratio, since both $x_i$ and $y'$ are served by the facility at $l$, and $\cost[F(\vec{x}_{-j}, y')] \geq y' - x_i$, while the optimal cost is $x_k - y' > \rho(y' - x_i)$, because the instance $(\vec{x}_{-j}, y')$ is $(i | j, k)$-well-separated.
\qed \end{proof}

\begin{proposition}\label{prop:c_pulls_b}
Let $\vec{x}$ be any $(i | j, k)$-well-separated instance with $F_2(\vec{x}) = x_j$. Then for every $(i | j, k)$-well-separated instance $\vec{x}' = (\vec{x}_{-k}, x'_k)$, $F_2(\vec{x}') = x_j$.
\end{proposition}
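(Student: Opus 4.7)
The plan is to convert the statement into an image-set argument on $I_k(x_i, x_j)$, which is identical in $\vec{x}$ and $\vec{x}'$ since $\vec{x}_{-k} = \vec{x}'_{-k} = (x_i, x_j)$. Because $\vec{x}$ is $(i|j,k)$-well-separated and $F$ has bounded approximation ratio, agents $j$ and $k$ must share a facility in $F(\vec{x})$; hence $F_2(\vec{x}) = x_j$ means that agent $k$ at $x_k$ is served at $x_j$, which by strategyproofness is the point of $I_k(x_i, x_j)$ nearest to $x_k$. Using that $F$ is nice, so every hole in the image set is bounded, the value $r := \inf\{a \in I_k(x_i, x_j) : a > x_j\}$ is finite and belongs to $I_k(x_i, x_j)$, and the interval $(x_j, r)$ is the hole of $I_k(x_i, x_j)$ that contains $x_k$.

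The crux will be the lower bound $r \geq x_j + 2(x_j - x_i)/\rho$, where $\rho$ denotes the approximation ratio of $F$. I will argue this by contradiction: if the inequality failed, I could choose $y$ with $(x_j + r)/2 < y < x_j + (x_j - x_i)/\rho$. This $y$ lies in the right half of the hole $(x_j, r)$, so by strategyproofness the facility serving agent $k$ in the instance $(x_i, x_j, y)$ is $r$, giving $F_2(x_i, x_j, y) \geq r > y$. But the choice of $y$ also makes $(x_i, x_j, y)$ an $(i|j,k)$-well-separated instance, and so Proposition~\ref{prop:interval} demands $F_2(x_i, x_j, y) \leq y$, contradicting the previous inequality.

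Once this bound on $r$ is established, the proposition will fall out immediately: any $(i|j,k)$-well-separated $\vec{x}' = (\vec{x}_{-k}, x'_k)$ satisfies $x'_k - x_j < (x_j - x_i)/\rho \leq (r - x_j)/2$, so $x'_k$ sits strictly inside the left half of the hole $(x_j, r)$. Consequently the point of $I_k(x_i, x_j)$ nearest to $x'_k$ is $x_j$, strategyproofness places the facility serving $k$ exactly at $x_j$, and the well-separatedness of $\vec{x}'$ combined with the bounded approximation ratio forces $j$ and $k$ to share this facility, so $F_2(\vec{x}') = x_j$, as required.

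The main obstacle will be recognizing Proposition~\ref{prop:interval} as the correct tool for squeezing $r$: a direct cost-versus-optimum calculation on the test instance $(x_i, x_j, y)$ only yields a contradiction when $\rho < 3$ and gives no information for larger approximation ratios, whereas Proposition~\ref{prop:interval}, by capping $F_2$ at the rightmost agent's coordinate, delivers the sharp lower bound on $r$ uniformly in $\rho$.
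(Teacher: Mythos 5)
Your argument is correct and is essentially the paper's own proof in a slightly reorganized form: both identify the $x_k$-hole $(x_j, r)$ in $I_k(x_i, x_j)$, rule out any well-separated instance whose rightmost agent lands in the right half of that hole by playing the resulting facility at $r$ against Proposition~\ref{prop:interval}, and conclude that every well-separated $x'_k$ lies in the left half, where strategyproofness pins the facility serving $k$ at $x_j$. The only slip is that your test point $y$ must additionally satisfy $y < r$ in order to lie in the hole (your stated constraints do not force this); since $(x_j+r)/2 < \min\{r,\ x_j + (x_j - x_i)/\rho\}$ under your contradiction hypothesis, such a $y$ exists and the argument goes through.
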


\begin{proof}
Since $F_2(\vec{x}) < x_k$, $x_k$ does not belong to the image set $I_k(x_i, x_j)$, and there is a $x_k$-hole $(l, r)$ in $I_k(x_i, x_j)$. Since $F_2(\vec{x}) = x_j$, the left endpoint of the $x_k$-hole is $l = x_j$ and the right endpoint is $r \geq 2x_k - x_j$.
Therefore, for all $(i | j, k)$-well-separated instances $\vec{x}' = (\vec{x}_{-k}, x'_k)$ with $x'_k < (r + l)/2$, $F_2(\vec{x}') = x_j$.

To conclude the proof, we show that there are no $(i | j, k)$-well-separated instances $\vec{x}' = (\vec{x}_{-k}, x'_k)$ with $x'_k \geq (r + l)/2$ and $F_2(\vec{x}') \neq x_j$.
To reach a contradiction, we assume that there exists a point $y \geq (r + l)/2$ such that $(\vec{x}_{-k}, y)$ is an $(i | j, k)$-well-separated instance and $F_2(\vec{x}_{-k}, y) \neq x_j$.
The existence of such a point $y$ implies the existence of a point $x'_k \in [(r + l)/2, r)$ ($x'_k$ may coincide with $y$) for which $\vec{x}' = (\vec{x}_{-k}, x'_k)$ is an $(i | j, k)$-well-separated instance. Then, $F_2(\vec{x}') = r > x'_k$, because the distance of $x'_k$ to $r \in I_k(x_i, x_j)$ is no greater than the distance of $x'_k$ to $l$.
Since $\vec{x}'$ is an $(i | j, k)$-well-separated instance, this contradicts Proposition~\ref{prop:interval}, according to which $F_2(\vec{x}) \in [x'_j, x'_k]$.
%
%
\qed \end{proof}

\begin{proposition}\label{prop:right_cover_step}
Let $\vec{x}$ be any $(i | j, k)$-well-separated instance with $F_2(\vec{x}) = x_j$. For every $(i | j, k)$-well-separated instance $\vec{x}' = (\vec{x}_{-\{j, k\}}, x'_j, x'_k)$ with $x_j < x'_j \leq (x_j + x_k)/2$, it holds that $F_2(\vec{x}') = x'_j$.
\end{proposition}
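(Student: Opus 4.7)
The plan is to prove this as a clean two-step composition of the previous propositions: first apply Proposition~\ref{prop:b_pushes_c} to slide agent $j$ rightward from $x_j$ to $x'_j$ while keeping $k$ at $x_k$, and then apply Proposition~\ref{prop:c_pulls_b} to slide agent $k$ from $x_k$ to $x'_k$ while keeping $j$ at its new location $x'_j$.

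For the first step, I would apply Proposition~\ref{prop:b_pushes_c} to the instance $\vec{x}$ and the target location $x'_j$. The hypothesis $F_2(\vec{x}) = x_j$ is given, and the required range condition $x_j < x'_j < x_k$ follows immediately from the assumption $x_j < x'_j \leq (x_j + x_k)/2 < x_k$. This yields $F_2(\vec{x}_{-j}, x'_j) = x'_j$, i.e., for the intermediate instance $\vec{y} = (x_i, x'_j, x_k)$ we have $F_2(\vec{y}) = x'_j$.

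For the second step, I would apply Proposition~\ref{prop:c_pulls_b} with $\vec{y}$ playing the role of the original $(i|j,k)$-well-separated instance and $x'_j$ playing the role of $x_j$ in that proposition's statement. To invoke Proposition~\ref{prop:c_pulls_b} I need to check that $\vec{y}$ is itself $(i|j,k)$-well-separated, which is a short calculation: since $x'_j > x_j$, we have $x_k - x'_j < x_k - x_j$ and $x'_j - x_i > x_j - x_i$, so the well-separation of $\vec{x}$ gives $\rho(x_k - x'_j) < \rho(x_k - x_j) < x_j - x_i < x'_j - x_i$. Moreover, $\vec{x}' = (\vec{y}_{-k}, x'_k) = (x_i, x'_j, x'_k)$ is $(i|j,k)$-well-separated by hypothesis. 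Hence Proposition~\ref{prop:c_pulls_b} yields $F_2(\vec{x}') = x'_j$, as desired.

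The argument is essentially a clean composition and I do not anticipate a serious obstacle; the only thing to verify is the well-separation of the intermediate instance $\vec{y}$, which is straightforward. The restriction $x'_j \leq (x_j + x_k)/2$ is in fact stronger than strictly necessary for the present step (any $x'_j < x_k$ would suffice for the calls above); its role is presumably to provide a uniform ``at least half the gap'' of slack for the inductive iteration used later to prove the full Proposition~\ref{prop:right_cover_3agents} (and thus Proposition~\ref{prop:right_cover}).
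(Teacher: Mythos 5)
Your proof is correct and is essentially identical to the paper's: both first invoke Proposition~\ref{prop:b_pushes_c} to move agent $j$ to $x'_j$, verify that the intermediate instance $(\vec{x}_{-j}, x'_j)$ remains $(i|j,k)$-well-separated (which you spell out slightly more explicitly than the paper), and then invoke Proposition~\ref{prop:c_pulls_b} to move agent $k$ to $x'_k$. Your closing remark about the role of the bound $x'_j \leq (x_j+x_k)/2$ in the later induction is also accurate.
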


\begin{proof}
Since $x'_j \leq (x_j + x_k) / 2 < x_k$, by Proposition~\ref{prop:b_pushes_c}, $F_2(\vec{x}_{-j}, x'_j) = x'_j$. Since the distance of $x'_j$ to $x_k$ is smaller than the distance of $x_j$ to $x_k$, the new instance $(\vec{x}_{-j}, x'_j)$ is $(i | j, k)$-well-separated. Therefore, by Proposition~\ref{prop:c_pulls_b}, for any $(i | j, k)$-well-separated instance $\vec{x}' = (\vec{x}_{-\{j, k\}}, x'_j, x'_k)$, $F_2(\vec{x}') = x'_j$.
\qed \end{proof}

\begin{proposition}\label{prop:right_cover_3agents}
Let $\vec{x}$ be any $(i | j, k)$-well-separated instance with $F_2(\vec{x}) = x_j$. Then for every $(i | j, k)$-well-separated instance $\vec{x}' = (\vec{x}_{-\{j, k\}}, x'_j, x'_k)$ with $x_j \leq x'_j$, it holds that $F_2(\vec{x}') = x'_j$.
\end{proposition}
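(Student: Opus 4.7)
The plan is to prove the proposition by induction, iteratively shifting the pair $(x_j, x_k)$ rightward in lockstep so that Proposition~\ref{prop:right_cover_step} can be applied repeatedly along the way. Proposition~\ref{prop:right_cover_step} alone only lets us push $j$ as far as the midpoint of $j$ and $k$, so to reach an arbitrary $x'_j \geq x_j$ we will keep replenishing the ``reachable distance'' by moving $k$ along with $j$.

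Set the step size $\delta = (x_k - x_j)/2 > 0$, and for every integer $\lambda \geq 0$ define the shifted instance
\[
\vec{x}^{(\lambda)} = (\vec{x}_{-\{j,k\}},\, x_j + \lambda\delta,\, x_k + \lambda\delta).
\]
I would first observe that every $\vec{x}^{(\lambda)}$ is $(i|j,k)$-well-separated: the $j$-$k$ gap stays fixed at $2\delta$, while the $i$-$j$ gap grows from $x_j - x_i$ to $x_j + \lambda\delta - x_i \geq x_j - x_i$, so the original inequality $\rho\cdot 2\delta < x_j - x_i$ continues to hold. I would then show by induction on $\lambda$ that $F_2(\vec{x}^{(\lambda)}) = x_j + \lambda\delta$. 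The base case $\lambda=0$ is the hypothesis. For the inductive step, note that the midpoint of the $j$- and $k$-locations in $\vec{x}^{(\lambda)}$ is precisely $x_j + (\lambda+1)\delta$, which is the new $j$-location in $\vec{x}^{(\lambda+1)}$; hence Proposition~\ref{prop:right_cover_step} applied to $\vec{x}^{(\lambda)}$ with target $\vec{x}^{(\lambda+1)}$ (valid since the target is well-separated and satisfies the midpoint bound with equality) yields $F_2(\vec{x}^{(\lambda+1)}) = x_j + (\lambda+1)\delta$.

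Finally, to reach an arbitrary $(i|j,k)$-well-separated target $\vec{x}' = (\vec{x}_{-\{j,k\}}, x'_j, x'_k)$ with $x_j \leq x'_j$, I would handle the trivial case $x'_j = x_j$ by a direct appeal to Proposition~\ref{prop:c_pulls_b}. Otherwise, let $\kappa = \lceil (x'_j - x_j)/\delta \rceil \geq 1$; then $x_j + (\kappa-1)\delta < x'_j \leq x_j + \kappa\delta$, so $x'_j$ lies in the reachable range $(x_j + (\kappa-1)\delta,\, (x_j + (\kappa-1)\delta + x_k + (\kappa-1)\delta)/2]$ from $\vec{x}^{(\kappa-1)}$. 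A single final application of Proposition~\ref{prop:right_cover_step} to the instance $\vec{x}^{(\kappa-1)}$ with target $\vec{x}'$ (well-separated by assumption) delivers $F_2(\vec{x}') = x'_j$.

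The main technical concern is keeping well-separation alive throughout the shifting process; the reason for coupling the motion of $x_k$ with that of $x_j$ at each step is precisely to preserve the gap $2\delta$ and hence the well-separation constant, while still allowing the final step to pick the prescribed $x'_k$ freely (Proposition~\ref{prop:right_cover_step} imposes no constraint on the companion $x'_k$ beyond well-separation of the target). No new strategyproofness argument is needed beyond the ones already packaged inside Propositions~\ref{prop:c_pulls_b} and~\ref{prop:right_cover_step}.
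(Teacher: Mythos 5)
Your proposal is correct and follows essentially the same route as the paper's proof: both shift the pair $(x_j, x_k)$ rightward in lockstep by $\delta = (x_k - x_j)/2$, note that well-separation is preserved because the $j$-$k$ gap stays $2\delta$ while the $i$-$j$ gap only grows, and apply Proposition~\ref{prop:right_cover_step} once per step (your explicit handling of the boundary case $x'_j = x_j$ via Proposition~\ref{prop:c_pulls_b} is a minor, harmless addition). No issues.
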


\begin{proof}
The proof follows by an inductive application of Proposition~\ref{prop:right_cover_step}. More specifically, by repeated applications of Proposition~\ref{prop:right_cover_step}, we keep moving the locations of agents $j$ and $k$ to the right, while keeping the resulting instance $(i | j, k)$-well-separated, and thus maintaining the location of the rightmost facility at the location of agent $j$.

Formally, let $d = x'_j - x_j$, let $\delta = (x_k - x_j)/2$, and let $\kappa = \ceil{d/\delta}$.
For every $\lambda = 1, \ldots, \kappa$, we inductively consider the instance $\vec{x}_\lambda = (\vec{x}_{-\{j, k\}}, x_j + (\lambda-1) \delta, x_k + (\lambda-1)\delta)$.
We observe that the instance $\vec{x}_\lambda$ is $(i | j, k)$-well-separated, because the distance of the locations of agents $j$ and $k$ is $2\delta$, while the distance of the locations of agents $i$ and $j$ is at least their distance in $\vec{x}$.
By inductively applying Proposition~\ref{prop:right_cover_step} to $\vec{x}_\lambda$, we obtain that for every $(i | j, k)$-well-separated instance $(\vec{x}_{-\{j,k\}}, y_j, y_k)$ with $x_j + (\lambda-1) \delta \leq y_j \leq x_j + \lambda \delta$, $F_2(\vec{x}_{-\{j,k\}}, y_j, y_k) = y_j$. For $\lambda = \kappa$, we conclude that $F_2(\vec{x}_{-\{j, k\}}, x'_j, x'_k) = x'_j$.
\qed\end{proof}

\subsection{Pushing the Pair of the Rightmost Agents to the Left: The Proof of Proposition~\ref{prop:left_cover}}
\label{app:s:push_left}

As in Section~\ref{app:s:push_right}, we explicitly prove Proposition~\ref{prop:left_cover} only for $2$-Facility Location and well-separated instances with $3$ agents (see also Proposition~\ref{prop:left_cover_3agents}). As before, it is not difficult to verify that the following propositions generalize to the $K$-Facility Location game and well-separated instances with $K+1$ agents.
%

Next, we use the same notation as in Section~\ref{app:s:push_right}.
The following propositions show that if for some nice mechanism $F$, there is an $(i | j, k)$-well-separated instance $\vec{x}$ such that $F_2(\vec{x}) = x_k$, then as long as we ``push'' the locations of agents $j$ and $k$ to the left, while keeping the instance $(i | j, k)$-well-separated, the rightmost facility of $F$ stays with the location of agent $k$.
Intuitively, if for some $(i | j, k)$-well-separated instance $\vec{x}$, $F_2(\vec{x}) = x_k$, then agent $k$ serves as a dictator for all $(i | j, k)$-well-separated instances $\vec{x}' = (\vec{x}_{-\{j,k\}}, x'_j, x'_k)$ with $x'_k \leq x_k$.

\begin{proposition}\label{prop:c_pushes_b}
Let $\vec{x}$ be any $(i | j, k)$-well-separated instance with $F_2(\vec{x}) = x_k$. Then for every instance $\vec{x}' = (\vec{x}_{-k},x'_k)$ with $x_j < x'_k < x_k$, it holds that $F_2(\vec{x}') = x'_k$.
\end{proposition}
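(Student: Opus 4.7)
My plan is to argue by contradiction. Suppose there exists some $y \in (x_j, x_k)$ with $F_2(\vec{x}_{-k}, y) \neq y$. By strategyproofness this forces $y \notin I_k(x_i, x_j)$, so there is a $y$-hole $(l, r)$ in the image set $I_k(x_i, x_j)$ with $l < y < r$. A first observation is that $r \leq x_k$: since $\vec{x}$ is $(i|j,k)$-well-separated and $F$ has a bounded approximation ratio, the rightmost facility $F_2(\vec{x}) = x_k$ serves both agents $j$ and $k$ in $\vec{x}$, so $x_k \in I_k(x_i, x_j)$, and hence $x_k$ cannot lie inside the hole $(l, r)$.

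The next step is to pick a report for agent $k$ lying in the \emph{right} half of the hole. Concretely, fix $\eps > 0$ small enough that $y'' := r - \eps$ satisfies both $y'' > (l+r)/2$ (which needs $\eps < (r-l)/2$) and $y'' > x_j$ (which needs $\eps < r - x_j$, a positive quantity since $r > y > x_j$). Because $y''$ lies in the right half of $(l, r)$, the location in $I_k(x_i, x_j)$ nearest to $y''$ is $r$; by strategyproofness, when agent $k$ reports $y''$ this $r$ is exactly the facility that serves her. The instance $(\vec{x}_{-k}, y'')$ has agents at $x_i < x_j < y''$ and remains $(i|j,k)$-well-separated, since $y'' < r \leq x_k$ yields $\rho(y'' - x_j) < \rho(x_k - x_j) < x_j - x_i$. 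Consequently, in this well-separated instance the facility serving $k$ is precisely the rightmost facility, so $F_2(\vec{x}_{-k}, y'') = r$.

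Finally I invoke Proposition~\ref{prop:interval} to finish: applied to the $(i|j,k)$-well-separated instance $(\vec{x}_{-k}, y'')$, it forces $F_2(\vec{x}_{-k}, y'') \in [x_j, y'']$, whence $F_2 \leq y'' < r$, contradicting $F_2 = r$. The only subtlety, and essentially the only real obstacle, is making sure the perturbed report lands in the right half of the hole so that the nearest image-set point exceeds $y''$ and therefore violates the Proposition~\ref{prop:interval} range bound; the fact that $r$ cannot escape past $x_k$ (because $x_k \in I_k(x_i, x_j)$) is what guarantees the new instance is still well-separated after the move and lets Proposition~\ref{prop:interval} apply.
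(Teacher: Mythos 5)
Your proof is correct and follows essentially the same route as the paper's: assume a point in $(x_j,x_k)$ outside the image set $I_k(x_i,x_j)$, note that $r\leq x_k$ because $x_k\in I_k(x_i,x_j)$, place agent $k$ in the right half of the hole so that strategyproofness forces the rightmost facility to $r$, and contradict Proposition~\ref{prop:interval} on the resulting (still well-separated) instance. The only differences are cosmetic, e.g.\ choosing $r-\eps$ rather than the paper's $(l+2r)/3$ as the point in the right half of the hole.
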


\begin{proof}
To reach a contradiction, we assume that there is a point $x'_k \in (x_j, x_k)$ such that $x'_k \neq F_2(\vec{x}_{-k}, x'_k)$. Therefore, $x'_k \not\in I_k(x_i, x_j)$, and there is a $x'_k$-hole $(l, r)$ in the image set $I_k(x_i, x_j)$. We observe that $x_j \leq l$, since $x_j \in F(\vec{x}_{-k}, x_j)$, due to the bounded approximation ratio of $F$, and that $r \leq x_k$, since $F_2(\vec{x}) = x_k$.
Let $y_k$ be any point in the right half of $(l, r)$ different from $r$, e.g. let $y_k = (l+2r)/3$. We consider the instance $\vec{y} = (\vec{x}_{-k}, y_k)$, for which $F_2(\vec{y}) = r$, due to the strategyproofness of $F$.
Since $x_j < y_k < x_k$, $\vec{y}$ is an $(i | j, k)$-well-separated instance. Therefore, $F_2(\vec{y}) = r > y_k$ contradicts Proposition~\ref{prop:interval}, according to which $F_2(\vec{y}) \in [y_j, y_k]$.
%
%
\qed \end{proof}

\begin{proposition}\label{prop:b_pulls_c}
Let $\vec{x}$ be any $(i | j, k)$-well-separated instance with $F_2(\vec{x}) = x_k$. Then for every $(i | j, k)$-well-separated instance $\vec{x}' = (\vec{x}_{-j}, x'_j)$, $F_2(\vec{x}') = x_k$.
\end{proposition}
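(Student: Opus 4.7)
The plan is to argue by contradiction. Suppose $\vec{x}' = (\vec{x}_{-j}, x'_j)$ is $(i|j,k)$-well-separated but $F_2(\vec{x}') = y \neq x_k$. By Proposition~\ref{prop:interval}, $y \in [x'_j, x_k)$, and $y \in I_j(x_i, x_k)$ since $y$ is a facility location when agent $j$ reports $x'_j$. I will use the structure of the image set $I_j(x_i, x_k)$ to produce an auxiliary well-separated instance whose outcome under $F$ violates the approximation ratio.

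The first ingredient is a structural consequence of $F_2(\vec{x}) = x_k$. In the well-separated instance $\vec{x}$, agent $j$ must be served by $F_2(\vec{x}) = x_k$: otherwise $j$ would be served by $F_1(\vec{x})$, and the condition ``$j$ is at least as near to $F_1$ as to $x_k$'' combined with $F_1(\vec{x}) \leq x_j$ (Proposition~\ref{prop:middle}) would force $F_1(\vec{x}) \geq 2x_j - x_k > x_i$, making $F_1$ closer to $x_i$ than $x_k$ is, so $i$ would also be served by $F_1$; the total cost would then be at least $x_j - x_i$ against the optimum $x_k - x_j$, and well-separation would make the approximation ratio exceed $\rho$. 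Hence $x_k$ is the nearest point in $I_j(x_i, x_k)$ to $x_j$, so every element of $I_j(x_i, x_k)$ lies at distance at least $x_k - x_j$ from $x_j$, yielding the gap $I_j(x_i, x_k) \cap (2x_j - x_k, x_k) = \emptyset$. Since $y \in I_j(x_i, x_k)$ and $y < x_k$, this forces $y \leq 2x_j - x_k$; in particular $x'_j \leq 2x_j - x_k$ (if $x'_j > 2x_j - x_k$, no admissible $y$ exists and the claim is already proved).

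Define $B^* = \sup\{z \in I_j(x_i, x_k) : z \leq 2x_j - x_k\}$. Since $I_j(x_i, x_k)$ is a union of closed intervals, $B^* \in I_j(x_i, x_k)$; moreover $B^* \geq y \geq x'_j$, and the open interval $(B^*, x_k)$ is disjoint from $I_j(x_i, x_k)$. I then pick $y_j$ in the left half of this gap, i.e., $y_j \in (B^*, (B^* + x_k)/2)$, with the additional requirement that $(x_i, y_j, x_k)$ itself be $(i|j,k)$-well-separated, equivalently $y_j > (x_i + \rho x_k)/(1+\rho)$. Such a $y_j$ exists because $B^* \geq x'_j > (x_i + \rho x_k)/(1+\rho) > (2x_i + (\rho - 1) x_k)/(1+\rho)$ (the last inequality holds since $x_k > x_i$), and the bound $B^* > (2x_i + (\rho - 1) x_k)/(1+\rho)$ is algebraically equivalent to $(B^* + x_k)/2 > (x_i + \rho x_k)/(1+\rho)$, so the two constraints on $y_j$ intersect in a nonempty interval.

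Finally, I analyze $F(x_i, y_j, x_k)$. By strategyproofness the facility serving $j$ is the nearest point in $I_j(x_i, x_k)$ to $y_j$, which is $B^*$; by Proposition~\ref{prop:interval} applied to the well-separated $(x_i, y_j, x_k)$, the rightmost facility lies in $[y_j, x_k]$ and also in $I_j(x_i, x_k)$, and hence (avoiding the gap $(B^*, x_k)$) must be $x_k$. Thus $F(x_i, y_j, x_k) = (B^*, x_k)$: agents $i$ and $j$ are served by $B^*$ while $k$ is served by $x_k$, giving total cost $(B^* - x_i) + (y_j - B^*) + 0 = y_j - x_i$, compared with optimum $x_k - y_j$. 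The well-separation of $(x_i, y_j, x_k)$ forces the ratio to exceed $\rho$, the desired contradiction. The main obstacle is the construction of this auxiliary instance: it rests on the $B^*$-gap structure derived from the hypothesis $F_2(\vec{x}) = x_k$ together with the quantitative well-separation of $\vec{x}'$, which is what makes the existence of a well-separated $y_j$ in the left half of the gap possible.
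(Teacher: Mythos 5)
Your proof is correct and follows essentially the same route as the paper's: both arguments pin down the hole in $I_j(x_i,x_k)$ whose right endpoint is $x_k$ (your $B^*$ is exactly the paper's left endpoint $l$ of the $x_j$-hole, and your gap $(B^*,x_k)$ is that hole), and both derive the contradiction by placing agent $j$ at a well-separated point in the left half of the hole, so that $i$ and $j$ are served by the facility at the left endpoint and the approximation ratio exceeds $\rho$. The only differences are organizational — you extract the gap from an assumed counterexample and verify the existence of the auxiliary well-separated $y_j$ more explicitly than the paper does — but the key lemmas used (Propositions~\ref{prop:middle} and~\ref{prop:interval} plus the image-set property) and the final contradiction are the same.
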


\begin{proof}
Since $\vec{x}$ is $(i | j, k)$-well-separated, $F_1(\vec{x}) < x_j$ due to $F$'s bounded approximation ratio. Therefore, $x_j$ does not belong to the image set $I_j(x_i, x_k)$, and there is a $x_j$-hole $(l, r)$ in $I_j(x_i, x_k)$. Since $F_2(\vec{x}) = x_k$, the right endpoint of the $x_k$-hole is $r = x_k$ and the left endpoint is $l \leq 2x_j - x_k$.
Therefore, for all $(i | j, k)$-well-separated instances $\vec{x}' = (\vec{x}_{-j}, x'_j)$ with $x'_j > (r + l)/2$, $F_2(\vec{x}') = x_k$.

Next, we show that there are no $(i | j, k)$-well-separated instances $\vec{x}' = (\vec{x}_{-j}, x'_j)$ with $x'_j \leq (r + l)/2$ and $F_2(\vec{x}') \neq x_k$. To reach a contradiction, we assume that there exists a point $y \leq (r + l)/2$ such that $(\vec{x}_{-j}, y)$ is $(i | j, k)$-well-separated and $F_2(\vec{x}_{-j}, y) \neq x_k$.
The existence of such a point $y$ implies the existence of a point $x'_j \in (l, (r + l)/2]$ ($x'_j$ may coincide with $y$) for which $\vec{x}' = (\vec{x}_{-j}, x'_j)$ is an $(i | j, k)$-well-separated instance. Therefore, $l \in F(\vec{x}')$, because the distance of $x'_j$ to $l \in I_j(x_i, x_k)$ is no greater than the distance of $x'_j$ to $r$.
By Proposition~\ref{prop:middle}, $F_2(\vec{x}') \geq x'_j$, which implies that $F_2(\vec{x}') \geq x_k$. Since $x'_j$ lies in the left half of $(l, r)$, both $x_i$ and $x'_j$ are served by the facility at $l$, which contradicts $F$'s bounded approximation ratio, because $\cost[F(\vec{x}')] \geq x'_j - x_i$, while the optimal cost is $x_k - x'_j < (x'_j - x_i)/\rho$, because the instance $\vec{x}'$ is $(i | j, k)$-well-separated.
\qed \end{proof}

\begin{proposition}\label{prop:left_cover_step}
Let $\vec{x}$ be any $(i | j, k)$-well-separated instance with $F_2(\vec{x}) = x_k$. Then for every $(i | j, k)$-well-separated instance $\vec{x}' = (\vec{x}_{-\{j,k\}}, x'_j, x'_k)$ with $(x_k + x_j) / 2 \leq x'_k < x_k$, it holds that $F_2(\vec{x}') = x'_k$.
\end{proposition}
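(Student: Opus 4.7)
The plan is to mirror the proof of Proposition~\ref{prop:right_cover_step}, moving the two rightmost agents one at a time, but now symmetrically to the left. In Proposition~\ref{prop:right_cover_step} the rightmost facility is pinned to agent $j$ and we push both agents to the right; here the rightmost facility is pinned to agent $k$ and we push both agents to the left. The two available tools are Proposition~\ref{prop:c_pushes_b}, which lets us move $k$ leftward and keep the facility at its new location, and Proposition~\ref{prop:b_pulls_c}, which lets us move $j$ while keeping the facility at $k$.

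First I would move agent $k$ alone, from $x_k$ to $x'_k$, and consider the intermediate instance $\vec{y} = (\vec{x}_{-k}, x'_k)$. The hypothesis $(x_j + x_k)/2 \leq x'_k < x_k$ gives in particular $x_j < x'_k < x_k$, so Proposition~\ref{prop:c_pushes_b} applies directly to the starting instance $\vec{x}$ and yields $F_2(\vec{y}) = x'_k$. Next I would verify that $\vec{y}$ is itself $(i|j,k)$-well-separated: the gap $x'_k - x_j$ is strictly smaller than $x_k - x_j$ while $x_j - x_i$ is unchanged, so $\rho(x'_k - x_j) < \rho(x_k - x_j) < x_j - x_i$. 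With $\vec{y}$ well-separated and $F_2(\vec{y}) = x'_k$, I would then apply Proposition~\ref{prop:b_pulls_c} to $\vec{y}$, treating $x'_k$ as the ``$x_k$'' in that statement. Because the target instance $\vec{x}' = (\vec{y}_{-j}, x'_j)$ is given to be $(i|j,k)$-well-separated, the proposition directly gives $F_2(\vec{x}') = x'_k$, as required.

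I do not expect a real obstacle: the argument reduces to checking the preconditions of the two auxiliary propositions, both of which are satisfied by hypothesis. The lower bound $(x_j + x_k)/2 \leq x'_k$ plays only a minor role in this single-step statement, where it merely guarantees $x'_k > x_j$ so that Proposition~\ref{prop:c_pushes_b} applies. Its real purpose will emerge when Proposition~\ref{prop:left_cover_step} is iterated with step size $\delta = (x_k - x_j)/2$, mirroring the induction in Proposition~\ref{prop:right_cover_3agents}, to obtain the full Proposition~\ref{prop:left_cover}: the factor of $1/2$ ensures that each intermediate instance remains well-separated and that the pair $(x_j,x'_k)$ at each stage of the induction still spans enough room for the next move.
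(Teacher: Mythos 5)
Your proposal is correct and follows exactly the paper's own two-step argument: first apply Proposition~\ref{prop:c_pushes_b} to move $k$ from $x_k$ to $x'_k$ (using only $x_j < x'_k < x_k$), check that the intermediate instance remains $(i|j,k)$-well-separated since the gap $x'_k - x_j$ only shrinks, and then apply Proposition~\ref{prop:b_pulls_c} to move $j$ to $x'_j$. Your side remark about the lower bound $(x_j+x_k)/2 \leq x'_k$ being needed only for the subsequent induction in Proposition~\ref{prop:left_cover_3agents} is also accurate.
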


\begin{proof}
Since $x_j < x'_k < x_k$, by Proposition~\ref{prop:c_pushes_b}, $F_2(\vec{x}_{-k}, x'_k) = x'_k$. Since the distance of $x'_k$ to $x_j$ is smaller than the distance of $x_k$ to $x_j$, the new instance $(\vec{x}_{-k}, x'_k)$ is $(i | j, k)$-well-separated. Therefore, by Proposition~\ref{prop:b_pulls_c}, for any $(i | j, k)$-well-separated instance $\vec{x}' = (\vec{x}_{-\{j,k\}}, x'_j, x'_k)$, $F_2(\vec{x}') = x'_k$.
\qed \end{proof}

\begin{proposition}\label{prop:left_cover_3agents}
Let $\vec{x}$ be any $(i | j, k)$-well-separated instance with $F_2(\vec{x}) = x_k$. Then for every $(i | j, k)$-well-separated instance $\vec{x}' = (\vec{x}_{-\{j,k\}}, x'_j, x'_k)$ with $x'_k \leq x_k$, it holds that $F_2(\vec{x}') = x'_k$.
\end{proposition}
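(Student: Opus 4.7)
The plan is to mirror the proof of Proposition~\ref{prop:right_cover_3agents}: shift the pair $(j,k)$ leftward in discrete steps, invoking Proposition~\ref{prop:left_cover_step} at each step. Concretely, let $d = x_k - x'_k$, let $\delta = (x_k - x_j)/2$, and let $\kappa = \lceil d/\delta \rceil$. For $\lambda = 1, \ldots, \kappa$, define the intermediate instance $\vec{x}_\lambda = (\vec{x}_{-\{j,k\}},\, x_j - (\lambda-1)\delta,\, x_k - (\lambda-1)\delta)$, so $\vec{x}_1 = \vec{x}$. Each $\vec{x}_\lambda$ preserves the $j$-$k$ gap of $2\delta$ while translating the pair leftward by $(\lambda-1)\delta$.

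The inductive hypothesis is that $\vec{x}_\lambda$ is $(i|j,k)$-well-separated and $F_2(\vec{x}_\lambda) = x_k - (\lambda-1)\delta$. Assuming this, Proposition~\ref{prop:left_cover_step} applied to $\vec{x}_\lambda$ yields $F_2(\vec{y}) = y_k$ for every $(i|j,k)$-well-separated instance $\vec{y} = (\vec{x}_{-\{j,k\}}, y_j, y_k)$ with $x_k - \lambda\delta \leq y_k < x_k - (\lambda-1)\delta$, since the midpoint constraint $(y^{\lambda}_j + y^{\lambda}_k)/2 \leq y_k$ translates under the shift to exactly this range. Specializing $\vec{y} = \vec{x}_{\lambda+1}$ (whose $y_k = x_k - \lambda\delta$ hits the lower endpoint) propagates the induction, while specializing to $\vec{y} = \vec{x}'$ in the final iteration---enabled by $x'_k \in [x_k - \kappa\delta,\, x_k)$, which holds by the definition of $\kappa$---gives $F_2(\vec{x}') = x'_k$, as required.

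The delicate point, and the one substantive obstacle, is that shifting $j$ leftward shrinks the distance $y_j - x_i$, so well-separation of $\vec{x}_\lambda$ is not automatic---in sharp contrast to Proposition~\ref{prop:right_cover_3agents}, where moving $j$ rightward only enlarges the $i$-$j$ gap and well-separation is free. If $x'_j$ lies very close to $x_i$, the uniform step $\delta$ can overshoot and produce a $\vec{x}_\lambda$ that fails the well-separation inequality $\rho \cdot 2\delta < y^{\lambda}_j - x_i$. I would handle this by relaxing the uniform-gap construction: at each iteration, rather than fixing the gap at $2\delta$, replace $\vec{x}_{\lambda+1}$ by any well-separated instance in the admissible $y_k$-range whose gap is shrunk enough (i.e., with $y^{\lambda+1}_j$ moved close to $y^{\lambda+1}_k$) to restore well-separation, and then restart the iteration with this smaller current gap. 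Since Proposition~\ref{prop:left_cover_step} freely permits any well-separated replacement in the admissible range, and since the final target $\vec{x}'$ is itself well-separated by hypothesis, this adaptive scheme reaches $\vec{x}'$ in finitely many steps and closes the proof.
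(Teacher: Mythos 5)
Your proposal follows the same route as the paper's proof: an inductive application of Proposition~\ref{prop:left_cover_step}, translating the pair $(j,k)$ leftward in steps of half the $j$--$k$ gap. You also correctly isolate the one genuine obstacle, namely that (unlike in Proposition~\ref{prop:right_cover_3agents}) moving $j$ leftward shrinks the $i$--$j$ distance, so well-separation of the intermediate instances is not automatic. Where you and the paper diverge is in how this is resolved. The paper defines the step size from the \emph{target} instance, $\delta = (x'_k - x'_j)/2$, and takes $\vec{x}_\lambda = (\vec{x}_{-\{j,k\}},\, x_k - (\lambda+1)\delta,\, x_k - (\lambda-1)\delta)$; since every intermediate $j$-location then satisfies $x_k - (\lambda+1)\delta \geq x'_j$ while the gap equals $x'_k - x'_j$ throughout, well-separation of each $\vec{x}_\lambda$ follows immediately from the well-separation of $\vec{x}'$ (the base case $F_2(\vec{x}_{-j}, x_k - 2\delta) = x_k$ being supplied by Proposition~\ref{prop:b_pulls_c}). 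Your version instead keeps $\delta = (x_k - x_j)/2$ and patches failures with an adaptive gap-shrinking restart. That scheme can be made to work, but as stated its termination is asserted rather than proved: if the gap is allowed to shrink repeatedly, the step size and hence the admissible $y_k$-range shrink with it, and finiteness of the number of iterations needs an argument. The missing observation is exactly the paper's trick --- shrink the gap once and for all to $x'_k - x'_j$, after which every remaining intermediate instance has $y_j \geq x'_j$ and is well-separated for free. With that one sentence added, your argument closes; without it, the ``finitely many steps'' claim is the only soft spot in an otherwise correct proof.
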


\begin{proof}
The proof follows by an inductive application of Proposition~\ref{prop:left_cover_step}.
Let $d = x'_k - x_k$, let $\delta = (x'_k - x'_j)/2$, and let $\kappa = \ceil{d/\delta}$. We first observe that $F_2(\vec{x}_{-j}, x_k - 2\delta) = x_k$, by Proposition~\ref{prop:b_pulls_c}, since the instance $(\vec{x}_{-j}, x_k - 2\delta)$ is $(i | j, k)$-well-separated.
Next, for every $\lambda = 1, \ldots, \kappa$, we inductively consider the instance $\vec{x}_\lambda = (\vec{x}_{-\{j, k\}}, x_k - (\lambda+1) \delta, x_k - (\lambda-1)\delta)$.
We observe that the instance $\vec{x}_\lambda$ is $(i | j, k)$-well-separated, because the distance of the locations of agents $j$ and $k$ is $2\delta$, while the distance of the locations of agents $i$ and $j$ is at least their distance in $\vec{x}'$.
By inductively applying Proposition~\ref{prop:left_cover_step} to $\vec{x}_\lambda$, we obtain that for every $(i | j, k)$-well-separated instance $(\vec{x}_{-\{j,k\}}, y_j, y_k)$ with $x_k - \lambda \delta \leq y_k \leq x_k - (\lambda-1) \delta$, $F_2(\vec{x}_{-\{j,k\}}, y_j, y_k) = y_k$.
For $\lambda = \kappa$, we obtain that $F_2(\vec{x}_{-\{j, k\}}, x'_j, x'_k) = x'_k$.
\qed\end{proof}

\end{document}